\theoremstyle{plain}
\newtheorem{theorem}{Theorem}
\newtheorem{lemma}{Lemma}
\newtheorem{corollary}{Corollary}
\newtheorem{claim}{Claim}
\theoremstyle{definition}
\newtheorem{definition}{Definition}
\theoremstyle{remark}
\newcommand{\xijl}{x_{\langle i, j, j', \l \rangle}}
\newcommand{\yijl}{y_{\langle i, j, j', \l \rangle}}
\newcommand{\zijl}{z_{\langle i, j, j', \l \rangle}}
\renewcommand{\l}{\ell}
\newcommand{\Oh}{\mathcal{O}}
\newcommand{\problem}[1]{\textsc{#1}\xspace}
\newcommand{\fvsp}{\textsc{Feedback Vertex Set}\xspace}
\newcommand{\vcp}{\textsc{Vertex Cover}\xspace}
\newcommand{\hcp}{\textsc{Hamiltonian Cycle}\xspace}
\newcommand{\octp}{\textsc{Odd Cycle Transversal}\xspace}
\newcommand{\cvcp}{\textsc{Connected Vertex Cover}\xspace}
\newcommand{\cdsp}{\textsc{Connected Dominating Set}\xspace}
\newcommand{\stp}{\textsc{Steiner Tree}\xspace}
\newcommand{\dsp}{\textsc{Dominating Set}\xspace}
\newcommand{\imp}{\textsc{Induced Matching}\xspace}
\newcommand{\isp}{\textsc{Independent Set}\xspace}
\newcommand{\cp}{\textsc{Clique}\xspace}
\newcommand{\cnp}{\textsc{Chromatic Number}\xspace}
\newcommand{\pitp}{\textsc{Triangle Partition}\xspace}
\newcommand{\idsp}{\textsc{Independent Dominating Set}\xspace}
\newcommand{\cfa}{\textsc{Conflict-free Assignment}}
\title{Preprocessing Complexity for Some Graph Problems Parameterized by Structural Parameters}
\author{{Manuel Lafond\footnote{Department of Computer Science, Université de Sherbrooke, \textit{Email: manuel.lafond@usherbrooke.ca}}   } \and {Weidong Luo\footnote{Department of Computer Science, Université de Sherbrooke, \textit{Email: weidong.luo@yahoo.com}} }}
\begin{document}

\maketitle

\begin{abstract}
Structural graph parameters play an important role in parameterized complexity, including in kernelization. 
Notably, vertex cover, neighborhood diversity, twin-cover, and modular-width have been studied extensively in the last few years.  However, there are many fundamental problems whose preprocessing complexity is not fully understood under these parameters.  Indeed, the existence of polynomial kernels or polynomial Turing kernels for famous problems such as \cp, \cnp, and \stp has only 
been established for a subset of structural parameters.  In this work, we use several techniques 
to obtain a complete preprocessing complexity landscape for over a dozen of fundamental algorithmic problems.
\end{abstract}

%\begin{keyword}
%Polynomial kernel \sep Polynomial Turing kernel \sep MK/WK-hard \sep Structural parameter
%\end{keyword}

\section{Introduction}
Preprocessing techniques such as kernelization and Turing kernelization form a fundamental branch of parameterized algorithms and complexity \cite{DBLP:books/sp/CyganFKLMPPS15, downey2012parameterized,fomin2019kernelization}.  For most popular algorithmic graph problems, the upper and lower bounds of kernelization are generally well-understood under the natural parameters, i.e. the value to be minimized or maximized.  However, there is much knowledge to be gained for alternate parameters.  
In this work, we aim to complete the kernelization complexity landscape for the structural graph parameters  
vertex cover ($vc$) \cite{DBLP:journals/algorithmica/CyganLPPS14, DBLP:conf/isaac/FellowsLMRS08}, neighborhood diversity ($nd$)  \cite{DBLP:journals/algorithmica/Lampis12, DBLP:journals/lmcs/KnopKMT19}, twin-cover ($tc$) \cite{DBLP:conf/iwpec/Ganian11, DBLP:journals/corr/abs-2302-06983}, and modular-width ($mw$) \cite{DBLP:journals/algorithmica/FominLMT18, DBLP:conf/iwpec/GajarskyLO13}, which all play an important role in parameterized complexity.  
For many famous problems such as \cp, \cnp, and \stp, the existence or non-existence of efficient preprocessing, namely polynomial kernels and polynomial Turing kernels, had only been partially established in previous work (see the entries with a reference in Table~\ref{summary-table-kernels}).  
Using a variety of known and novel techniques, 
we present new preprocessing results for 13 different graph problems, leading to the following theorem.

%We summarized them and their references in Table \ref{summary-table-kernels}. Naturally, one of the further research directions on this topic is to fill out the unfinished entries of Table \ref{summary-table-kernels}. Motivated by this, this paper provides the following main result.

\begin{table}\label{summary}
%\tiny
%\scriptsize
%\footnotesize
\begin{center} 
\begin{tabular}{ | c | c | c | c | c |} 
\hline
Problems $\setminus$ Parameters&  $vc$ & $tc$ &  $nd$ &$mw$\\
\hline

\multirow{2}{*}{\problem{Triangle Partition}}& {PK}  & PK & {PC} & open \\
&   &  &  & no PC \cite{DBLP:journals/dam/Knop20}    \\
\hline

\multirow{2}{*}{\problem{Induced Matching}}&  & & PC & PTC \cite{DBLP:journals/corr/abs-2201-04678} \\
& WK[1]-h  & WK[1]-h  &  &  {no PC}  \\
\hline

\multirow{2}{*}{\problem{Steiner Tree}} &  & & PK \cite{DBLP:journals/corr/abs-2201-04678} & PK \cite{DBLP:journals/corr/abs-2201-04678}\\
&  MK[2]-h & MK[2]-h &  &    \\
\hline

\multirow{2}{*}{\problem{Chromatic Number}}&  & & PC & PTC \cite{DBLP:journals/corr/abs-2201-04678}\\
& {MK[2]-h}  & {MK[2]-h} &  &  {no PC} \\
\hline

\problem{Connected}&  & & PK \cite{DBLP:journals/corr/abs-2201-04678} & PK \cite{DBLP:journals/corr/abs-2201-04678} \\
\problem{Dominating Set}& MK[2]-h & MK[2]-h &  & \\
\hline
			
\multirow{2}{*}{\problem{Hamiltonian Cycle}}& PK \cite{DBLP:journals/tcs/BodlaenderJK13} & PK   \cite{DBLP:journals/tcs/BodlaenderJK13} & PC & PTC \cite{DBLP:journals/corr/abs-2201-04678} \\
&  &   &  &  {no PC}  \\
\hline
	
\multirow{2}{*}{\problem{Clique}}& PTK \cite{DBLP:journals/siamdm/BodlaenderJK14}  & PTK  & PC \cite{DBLP:journals/jcss/GanianSS16} & PTC \cite{DBLP:journals/corr/abs-2201-04678}\\
& no PC \cite{DBLP:journals/siamdm/BodlaenderJK14} & no PC  \cite{DBLP:journals/siamdm/BodlaenderJK14}   & & {no PC}\\
\hline
			
\multirow{2}{*}{\problem{Independent Set}} & PK  \cite{DBLP:journals/jal/ChenKJ01} &  {PK} & PC \cite{DBLP:journals/jcss/GanianSS16} & PTC \cite{DBLP:journals/corr/abs-2201-04678} \\
&  &  &  & {no PC}\\
\hline
			
\multirow{2}{*}{\problem{Vertex Cover}}&  PK \cite{DBLP:journals/jal/ChenKJ01} & PK & PC \cite{DBLP:journals/jcss/GanianSS16} &  PTC \cite{DBLP:journals/corr/abs-2201-04678}\\
&   & &  &  {no PC}\\
\hline

\problem{Feedback} & PK  \cite{DBLP:conf/icalp/Iwata17} & PK \cite{DBLP:journals/dmtcs/Ganian15} & PC \cite{DBLP:journals/jcss/GanianSS16} & PTC \cite{DBLP:journals/corr/abs-2201-04678}  \\
\problem{Vertex Set}&  & &  &  {no PC}     \\
\hline
			 
\problem{Odd Cycle}& PK \cite{DBLP:conf/iwpec/JansenK11} & PK & PC \cite{DBLP:journals/jcss/GanianSS16} & PTC \cite{DBLP:journals/corr/abs-2201-04678} \\
\problem{Transversal}&   &  &  &  {no PC}     \\
\hline

\problem{Connected}&  &  & PC \cite{DBLP:journals/jcss/GanianSS16} & PTC \cite{DBLP:journals/corr/abs-2201-04678} \\
\problem{Vertex Cover}& WK[1]-h \cite{DBLP:journals/algorithmica/HermelinKSWW15}  & WK[1]-h  \cite{DBLP:journals/algorithmica/HermelinKSWW15} &  &  {no PC}     \\
\hline			

\multirow{2}{*}{\problem{Dominating Set}}&  & & PC \cite{DBLP:journals/jcss/GanianSS16}  & PTC \cite{DBLP:journals/corr/abs-2201-04678} \\
& MK[2]-h \cite{DBLP:journals/algorithmica/HermelinKSWW15}  & MK[2]-h  \cite{DBLP:journals/algorithmica/HermelinKSWW15} & &  {no PC} \\
\hline
\end{tabular}
\caption{Preprocessing results for the problems on the first column in parameters vertex cover ($vc$),  twin-cover ($tc$), neighborhood diversity ($nd$), and modular-width ($mw$). The results for each problem include the existence of a polynomial kernel (PK), polynomial compression (PC), polynomial Turing kernel (PTK), and polynomial Turing compression (PTC). Moreover, that a problem is WK[1]-hard (WK[1]-h) or MK[2]-hard (MK[2]-h) means it has a conditional PTC lower bound.  
The first line and second line for each entry contain the positive and negative results of the problem, respectively.
For example, \imp is WK[1]-h in parameter $vc$ and $tc$, has a PC in parameter $nd$, has a PTC but does not admit a PC in parameter $mw$.
In addition, each result with a cited paper means it either comes from the paper or can be obtained straightforwardly using a result of the paper. 
Apart form \textsc{Triangle Partition}$(mw)$, all the problems are known to be FPT \cite{DBLP:journals/dam/Knop20, DBLP:journals/corr/abs-2201-04678}.}
\label{summary-table-kernels}
\end{center}
\end{table}

\begin{theorem}
The results in Table \ref{summary-table-kernels} without references are correct.
\end{theorem}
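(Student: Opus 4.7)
The plan is to verify each of the unreferenced entries of Table~\ref{summary-table-kernels} individually, splitting the work into three natural groups: (i) positive results (PK, PC, PTK) under the parameters $vc$, $tc$, and $nd$; (ii) WK[1]- and MK[2]-hardness results under $vc$ and $tc$; and (iii) the ``no PC'' lower bounds under $mw$. Because $tc \le vc$ and $nd$ subsumes $tc$ on modules that are cliques or independent sets, several entries will follow by transfer: any PTC lower bound under $vc$ propagates to $tc$, any PC upper bound under $tc$ propagates to $vc$, and so on.

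For the positive entries under $tc$, I would use that a twin-cover $T$ of size $k$ leaves $V\setminus T$ partitioned into true-twin cliques, each of which is determined by a subset $S\subseteq T$ encoding the common neighborhood. There are at most $2^k$ classes, and for each of \vcp, \isp, \octp, \pitp, standard problem-specific reduction rules preserve only $\mathrm{poly}(k)$ representatives per class (e.g.\ any optimal solution is symmetric on twin cliques, so one can bound how many vertices per class must be kept). For \cp under $tc$, a PTK follows by branching on the $\le 2^k$ subsets of $T$ that the target clique meets and then solving on each restricted instance using the PTK for $vc$ from \cite{DBLP:journals/siamdm/BodlaenderJK14}. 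For the PC entries under $nd$ (for \pitp, \imp, \cnp, \hcp), I would encode an instance by the vector recording, for each of the $k$ neighborhood classes, how many class members are used in each role of the solution; in each case the number of admissible vectors is $\mathrm{poly}(n)$ and the feasibility/optimality conditions can be written as a polynomial-size system (an ILP with $\mathrm{poly}(k)$ variables, or a polynomial-size CSP on class counts), yielding a polynomial compression into an auxiliary problem.

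For the negative entries under $vc$ and $tc$, it suffices to prove the hardness under $vc$ and observe that it transfers. For \stp, \cnp, and \cdsp I would design cross-compositions from known MK[2]-hard base problems (\textsc{Set Cover}/\textsc{Hitting Set} parameterized by universe size, or the MK[2]-hard versions of the same problems from \cite{DBLP:journals/algorithmica/HermelinKSWW15}); the gadget places the universe elements on the cover side so that $vc$ is controlled by the universe, and encodes sets/instances as non-cover vertices. For \imp and \cvcp I would similarly give WK[1]-compositions from \textsc{Hitting Set}. For the ``no PC'' entries under $mw$, I would give OR-cross-compositions: $t$ instances of an NP-hard base problem can be glued under a single root module, and the modular-width of the result remains polynomial in the size of one instance plus $O(\log t)$, ruling out PC under $\NP \not\subseteq \mathrm{coNP}/\mathrm{poly}$ via the usual framework.

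The main obstacle I expect is twofold. First, the PC-for-$nd$ arguments for \hcp and \imp are delicate because cycle/matching constraints couple multiple twin classes in a global way, so the encoding as a compressed instance needs a careful normal form of solutions (e.g.\ canonical traversal order within each clique class) before the count-vector reduction goes through. Second, the MK[2]-hardness compositions for \stp and \cnp under $vc$ must simultaneously (a) preserve the OR/AND semantics of the composition and (b) keep the vertex cover bounded by a function of the universe/instance-count, which typically requires placing ``heavy'' gadget vertices on the cover side and forcing the remaining bulk into an independent set; getting the gadgets right will be where most of the case analysis concentrates.
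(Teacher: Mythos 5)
Your high-level plan correctly identifies the three blocks of work, and the transfer arguments ($tc \le vc$ pushes lower bounds from $vc$ to $tc$, etc.) match the paper. However, the proposal misses several of the paper's key ideas, and some of the sketched steps would actually fail.

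First, for the positive results under $nd$, the paper does not use a per-problem count-vector/ILP encoding. Instead it proves a single meta-theorem: any typical graph problem solvable in $2^{O(nd^c)}|G|^{O(1)}$ time admits a compression of bitlength $O(nd^{c+1}+nd^2)$, simply by encoding the labeled quotient graph with weights once $nd^c \ge \log|V|$, and solving directly otherwise. This is analogous to the folklore ``FPT implies a kernel'' argument and handles \hcp, \imp, \pitp, \cnp, etc.\ uniformly; your problem-specific normal-form arguments (especially for \hcp and \imp) would be considerably more work and are not needed.

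Second, your sketch for the $mw$ ``no PC'' entries (``$t$ instances glued under a single root module'') omits the central obstruction: a disjoint union of $t$ instances of, say, \fvsp does not naturally OR- or AND-compose, because the target budget for the union depends on the exact optimum of each component, not just on whether each is a yes-instance. The paper resolves this by introducing refinement problems (\textsc{Vertex Cover Refinement}, \textsc{Feedback Vertex Set Refinement}, \textsc{Odd Cycle Transversal Refinement}, \textsc{Induced Matching Refinement}, \textsc{Dominating Set Refinement}), proving each NP-hard, and OR-cross-composing from those; the calibrated budgets make the disjoint union work. For \hcp the construction is also nontrivial (a ``threading'' set $X$ of $t-1$ universal vertices). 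Without these ideas your $mw$ lower bounds do not go through.

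Third, two of your $tc$ upper-bound sketches have gaps. For \pitp$(tc)$, ``standard problem-specific reduction rules'' do not suffice --- the paper explicitly points out this is hard and instead reduces to an intermediate problem (\cfa) which is shown to have a polynomial kernel, then reduces back. For \cp$(tc)$, branching on the $2^{k}$ subsets of $T$ would make exponentially many oracle calls, which is not a polynomial-time Turing kernelization; the paper instead makes one query per connected component of $G - T$, of which there are at most $|V|$.

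Finally, for the $vc$ hardness entries, the paper uses PPTs (many-one reductions with polynomially bounded output parameter) from CNF-SAT($n$) and Multicolored-Clique($k\log n$), not cross-compositions; these establish MK[2]-/WK[1]-hardness and hence conditional PTK lower bounds, which is a stronger conclusion than a cross-composition alone would give. Also, \cvcp$(vc)$ is a referenced entry, so no new reduction is needed there.
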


In particular, although vertex cover is a large graph parameter that usually admits positive preprocessing results, we derive several negative results.  
%But, this paper demonstrates some negative preprocessing results in this parameter. 
This is achieved by devising new polynomial parametric transformations, a refined Karp reduction, to prove some problems to be MK/WK-hard.  This means that the problems have no polynomial Turing kernels unless all MK/WK-complete problems in the specific hierarchy have polynomial Turing kernels, and have no polynomial kernels unless coNP $\subseteq$ NP/poly. On the positive side, a polynomial kernel for \pitp is trivial in parameter $vc$, but becomes complicated in parameter $tc$, though feasible as we show.  It appears difficult to create reduction rules for the later parameter, but we are able to reduce it to a kernelizable intermediate problem that we call~\cfa, which is a variant of an assignment problem that may be of independent interest.  So we first produce a polynomial compression  for \textsc{Triangle Partition$(tc)$} by reducing it to the new problem, then reduce the new problem back to \textsc{Triangle Partition} to obtain the polynomial kernel. For the parameter neighborhood diversity, we design a meta-theorem for obtaining the polynomial compressions for all the problems in the table.  As for modular width, we obtain polynomial compression lower bounds using the \textit{cross-composition} technique  \cite{DBLP:journals/siamdm/BodlaenderJK14}, furthermore, we create new original problems when giving the cross-compositions for 
%, which includes novel reduction ideas for the problems
\problem{Connected Vertex Cover($mw$)}, \problem{Odd Cycle Transversal($mw$)}, \problem{Feedback Vertex Set($mw$)}, and \problem{Induced Matching($mw$)}.

\section{Preliminaries}

%\noindent \textbf{Notation and Terminology.} %Assume the following definitions is in graph $G=(V,E)$. 
Let $G = (V, E)$ be a graph.  
For $S \subseteq V$, $G - S$ denotes the subgraph of $G$ induced by $V\setminus S$. For $v\in V$, $N(v)$ denotes the set of neighbors of $v$. 
%The triangle induced by three vertices $u,v$, and $w$ is denoted by $u-v-w$. 
The parameter \emph{vertex cover} of $G$, denoted by $vc(G)$ or $vc$, is the minimum number of vertices that are incident to all edges of $G$. 
Twin-cover was first proposed in \cite{DBLP:conf/iwpec/Ganian11}. $S\subseteq V$ is a \emph{twin cover} of $G$ if $G - S$ is a cluster, which is formed from the disjoint union of complete graphs, and any two vertices $u,v$ in the same connected component of $G - S$ satisfy  $N(u)\cap S = N(v)\cap S$. The parameter \emph{twin-cover} of $G$, denoted by $tc(G)$ or $tc$, is the cardinality of a minimum twin-cover of $G$. 
Modular-width was first proposed in \cite{DBLP:journals/mst/CourcelleMR00} and was first introduced into parameterized complexity in \cite{DBLP:conf/iwpec/GajarskyLO13}. 
A \textit{module} of $G$ is $M \subseteq V$ such that, for every $v\in V \setminus M$, either $M\subseteq N(v)$ or $M\cap N(v) = \emptyset$. The empty set, $V$, and all $\{v\}$ for $v\in V$ are \textit{trivial modules}. 
%If all modules of $G$ are trivial modules then $G$ is a \textit{prime} graph. The modular-width of $G$, which is denoted by $mw(G)$ or $mw$, is the vertex number of the largest prime induced subgraph of $G$. 
A module $M$ is \textit{maximal} if $M\subsetneq V$ and there are no modules $M'$ such that $M \subsetneq M' \subsetneq V$. A module $M$ is \textit{strong} if, for any module $M'$, only one of the following conditions holds: (1) $M\subseteq M'$ (2) $M'\subseteq M$ (3) $M\cap M' = \emptyset$.  
Let $P \subseteq 2^V$ be a vertex partition of $V$. If $P$ only includes modules of $G$, then $P$ is a \textit{modular partition}. 
%Clearly, any two modules of $P$ are disjoint. 
For two modules $M, M'$ of $ P$, they are \textit{adjacent} if all $v\in M$ are adjacent to all $v'\in M'$, and they are \textit{non-adjacent} if no vertices of $M$ are adjacent to a vertex of $M'$. A modular partition $P$ that only contains maximal strong modules is a \textit{maximal modular partition}. 
For a modular partition $P$ of $V$, we define \textit{quotient graph} $G_{/P} =(V_M, E_P)$ as follows. $v_{M}$ represents the corresponding vertex of a module $M$ of $P$. Vertex set $V_M$ consists of $v_{M}$ for all $M\in P$. For any $M, M'\in P$, edge $v_{M}v_{M'}\in E_P$ if $M$ and $M'$ are adjacent.  
All strong modules $M$ of $G$ can be represented by an inclusion tree $MD(G)$, which is called the \textit{modular decomposition tree} of $G$. Each $M$ is corresponding to a vertex $v_M$ of $MD(G)$. The root vertex $v_V$ of $MD(G)$ corresponds to $V$. Every leaf $v_{\{v\}}$ of $MD(G)$ corresponds a vertex $v$ of $G$. For any two strong modules $M$ and $M'$, $v_{M'}$ is a descendant of $v_M$ in the  inclusion tree iff $M'$ is a proper subset of $M$. 
Consider an internal vertex $v_M$ of $MD(G)$.
If $G[M]$ is disconnected, then $v_M$ is a \textit{parallel} vertex. If $\overline{G[M]}$ is disconnected, then $v_M$ is a \textit{series} vertex. If both $G[M]$ and $\overline{G[M]}$ are connected, then $v_M$ is a \textit{prime} vertex.
%For an internal vertex $v_M$ of $MD(G)$, $v_M$ is prime if the quotient graph of the maximal modular partition of $G[M]$ is a prime graph. 
The \emph{modular-width} of a graph $G$ is the minimum number $k$ such that the number of children of any prime vertex in $MD(G)$ is at most $k$.
More information about modular-width can be found in \cite{DBLP:conf/iwpec/GajarskyLO13, DBLP:journals/csr/HabibP10}.
Neighborhood diversity was first proposed in \cite{DBLP:journals/algorithmica/Lampis12}.
A modular partition of $V$ is called a \textit{neighborhood partition} if every module in the modular partition is either a clique or an independent set, which are called \textit{clique type} and  \textit{independent type}, respectively. The width of the partition is its cardinality. The \textit{minimum neighborhood partition} of $V$, which can be obtained in polynomial time \cite{DBLP:journals/algorithmica/Lampis12}, is the \textit{neighborhood partition} of $V$ with the minimum width $k$.  
The \textit{neighborhood diversity}, denoted by $nd(G)$ or $nd$, of $G$ is the width of the minimum neighborhood partition of $V$. 
Based on Theorem 3 of \cite{DBLP:conf/iwpec/GajarskyLO13} and Theorem 7 of \cite{DBLP:journals/algorithmica/Lampis12}, $mw(G)\leq nd(G)\leq O(2^{vc(G)})$. Based on Definition 3.1 of \cite{DBLP:conf/iwpec/Ganian11} and Theorem 3 of \cite{DBLP:conf/iwpec/GajarskyLO13}, $mw(G)\leq O(2^{tc(G)})$ and $tc(G) \leq vc(G)$, which means that the negative results in this paper in parameter vertex cover also hold in parameter twin-cover. 
These parameters are related as in Figure \ref{fig:mw-nd-tc-ve-relationship}.

A parameterized problem is a language $Q\subseteq \Sigma^* \times \mathbb{N}$. We call $k$ the parameter if $(x,k) \in Q$. $Q$ is fixed-parameter tractable (FPT) if it is decidable in $f(k)|x|^{O(1)}$ time for some computable function $f$. $Q(vc)$ denotes $Q$ parameterized by vertex cover of its input graph. The definitions of $Q(tc)$, $Q(nd)$, and $Q(mw)$ go the same way.

\begin{definition}
A \emph{kernelization} (\emph{compression})  for a parameterized problem $Q$ is a polynomial-time algorithm, which takes an instance $(x,k)$ and produces an instance $(x',k')$ called a kernel, such that $(x,k) \in Q$ iff $(x',k') \in Q$ ($(x',k') \in Q'$ for a problem $Q'$), and the size of $(x',k')$ is bounded by a computable function $f$ in $k$. Moreover, we say $Q$ admits a kernel or kernelization (compression) of  $f(k)$ size.
\end{definition}

In particular, $Q$ admits a polynomial kernel or polynomial kernelization (PK) (polynomial compression (PC)) if $f$ is polynomial. Compared to kernelization, compression allows the output instance to belong to any problem.

\begin{definition}
A \emph{Turing kernelization} (\emph{Turing compression}) for a parameterized problem $Q$ is a polynomial-time algorithm with the ability to access an oracle for $Q$ (a problem $Q'$) that can decide whether $(x,k)\in Q$ with queries of size at most a computable function $f$ in $k$, where the queries are called Turing kernels. Moreover, we say $Q$ admits a Turing kernel or Turing kernelization (Turing compression) of $f(k)$ size.
\end{definition}

\begin{figure}
\centering
\begin{tikzpicture}
\draw[color=black] (-2,1.2) node[scale=0.9] {$mw$};
\draw[-stealth]  (-3,0.2) -- (-2.05,1);
\draw[-stealth] [dashed] (-1,0.2) -- (-1.95,1);

\draw[color=black] (-3,0) node[scale=0.9] {$nd$};
\draw[color=black] (-1,0) node[scale=0.9] {$tc$};

\draw[stealth-]  (-1,-0.2) -- (-1.95,-1);
\draw[stealth-] [dashed] (-3,-0.2) -- (-2.05,-1);

\draw[color=black] (-2,-1.2) node[scale=0.9] {$vc$};

\draw[color=black] (2,1.2) node[scale=0.9] {PTC};
\draw[-stealth] (3,0.2) -- (2.05,1);
\draw[-stealth] (1,0.2) -- (1.95,1);

\draw[color=black] (3,0) node[scale=0.9] {PTK};
\draw[color=black] (1,0) node[scale=0.9] {PC};

\draw[stealth-] (1,-0.2) -- (1.95,-1);
\draw[stealth-] (3,-0.2) -- (2.05,-1);

\draw[color=black] (2,-1.2) node[scale=0.9] {PK};

\end{tikzpicture}
\caption{\footnotesize{The left and the right figures are the relations among the parameters and the types of preprocessing, respectively, discussed in this paper. The left figure includes parameters modular-width ($mw$), neighborhood diversity ($nd$), twin-cover ($tc$), and vertex cover ($vc$). The right figure includes polynomial Turing compression (PTC), polynomial Turing kernelization (PTK), polynomial compression (PC), and polynomial kernelization (PK). The arrows indicate generalization, e.g. PTC generalizes PTK and thus a problem has a PTK implies it has a PTC; $mw$ generalizes $tc$ and thus is bounded by a function $f$ in $tc$, more specifically, solid and dashed arrows imply linear function $f$ and exponential function $f$, respectively.}}
\label{fig:mw-nd-tc-ve-relationship}
\end{figure}
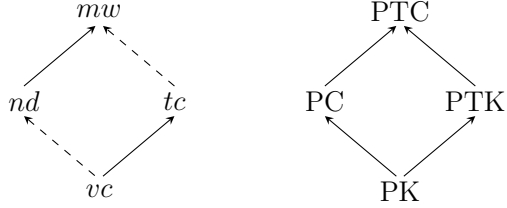

In particular, $Q$ admits a polynomial Turing kernel or a polynomial Turing kernelization (PTK) (polynomial Turing compression (PTC)) if $f$ is a polynomial function. The relations among these preprocessing operations can be found in Figure \ref{fig:mw-nd-tc-ve-relationship}. Finding a framework to rule out PTC or even PTK under some widely believed complexity hypothesis is a long-standing open problem \cite{fomin2019kernelization}. PTCs of several special FPT problems are refuted under the assumption that exponential hierarchy does not collapse \cite{DBLP:journals/tcs/Luo22}. 
A completeness theory for PTK (PTC) was proposed in \cite{DBLP:journals/algorithmica/HermelinKSWW15}, which constructs a WK/MK-hierarchy and demonstrates a large number of problems in this hierarchy. 
Moreover, all the problems in WK[i]-complete or MK[j]-complete have PTKs (PTCs) if any problem in WK[i]-hard or MK[j]-hard has a PTK (PTC), where $i\geq 1$ and $j\geq 2$. 
For a parameterized problem $Q$, we can provide a polynomial parametric transformation (PPT) from a known WK/MK-hard problem to $Q$ to demonstrate $Q$ is WK/MK-hard, which gives evidence that $Q$ does not have a PTK (PTC). The definition of PPT is as follows.

\begin{definition}[\cite{DBLP:journals/tcs/BodlaenderTY11, DBLP:journals/algorithmica/HermelinKSWW15}]
A \emph{polynomial parametric transformation} (PPT) is a  polynomial-time many-one reduction between two parameterized problems such that the parameter of the output instance is polynomially bounded by the parameter of the input instance.
\end{definition}

Next, we provide the definitions of some problems discussed in this paper, all of which are NP-hard. Given a graph $G$, \textsc{Hamiltonian Cycle} (\textsc{Hamiltonian Path}) asks whether $G$ contains a cycle (path) that visits each vertex of $G$ exactly once. Given a formula $\Phi$ of the conjunctive normal form (CNF), \textsc{CNF-SAT} asks whether $\Phi$ has a satisfying assignment.
Given a graph $G$, an integer $k$, and a coloring function $c : V(G) \rightarrow [k]$. \textsc{Multicolored Clique} asks whether $G$ has a multicolored clique of size $k$, i.e. a clique with $k$ vertices colored by $k$ different colors.
Assume $(G,k)$ is the input of the following problems, where $G=(V,E)$ is a graph and $k$ is an integer. 
\textsc{Chromatic Number} asks whether there exists at most $k$ colors to color the vertices of $G$ such that no two adjacent vertices share the same color.
\textsc{Clique} asks whether there exists a set $X$ of at least $k$ vertices of $G$ such that any two vertices in $X$ are adjacent.
\textsc{Connected Dominating Set} asks whether there exists a set $X$ of at most $k$ vertices of $G$ such that $G[X]$ is connected and every vertex not in $X$ is adjacent to at least one member of $X$.
\textsc{Connected Vertex Cover} asks whether there exists a set $X$ of at most $k$ vertices of $G$ such that $G[X]$ is connected and every edge of $G$ has at least one endpoint in $X$.
\textsc{Dominating Set} asks whether there exists a set $X$ of at most $k$ vertices of $G$ such that every vertex not in $X$ is adjacent to at least one member of $X$.
\textsc{Feedback Vertex Set} asks whether there exists a set $X$ of at most $k$ vertices of $G$ such that $G - X$ is a forest.
\textsc{Independent Set} asks whether there exists a set $X$ of at least $k$ vertices of $G$ such that any two vertices in $X$ have no edge.
\textsc{Induced Matching} asks whether there exists a set $X$ of $2k$ vertices of $G$ such that the subgraph induced by $X$ is a matching consisting of $k$ edges.
\textsc{Odd Cycle Transversal} asks whether there exists a set $X$ of at most $k$ vertices of $G$ such that $G - X$ is a bipartite graph.
\textsc{Triangle Partition} asks whether $G$ contains $k/3$ vertex disjoint triangles whose union includes every vertex of $G$.
\textsc{Steiner Tree} asks whether there exists a connected subgraph of $G$ that contains at most $k$ edges and contains all vertices of $K$.
\textsc{Vertex Cover} asks whether there exists a set $X$ of at most $k$ vertices of $G$ such that every edge of $G$ has at least one endpoint in $X$.

\section{Parameterization by vertex cover number}
\label{par-vertex-cover-section}
%Consider the \textsc{Connected  Vertex Cover$(vc)$}. It is known that \textsc{Connected Vertex Cover$(k)$} is WK[1]-hard  \cite{DBLP:journals/algorithmica/HermelinKSWW15}. Since the connected vertex cover number of a graph is at least the vertex cover number of the graph, \textsc{Connected Vertex Cover$(vc)$} is WK[1]-hard.  
%\textsc{Feedback Vertex Set$(vc)$} admits a polynomial kernel \cite{DBLP:conf/icalp/Iwata17}. \textsc{Maximum cut$(vc)$} admits a polynomial kernel (in a bachelor thesis).
%\textsc{$q$-coloring($vc$)} has a polynomial kernel \cite{DBLP:journals/iandc/JansenK13}.
\textsc{Hamiltonian Cycle$(vc)$} \cite{DBLP:journals/tcs/BodlaenderJK13}, \textsc{Odd Cycle Transversal$(vc)$}
\cite{DBLP:conf/iwpec/JansenK11} have PKs.
A PK for \textsc{Vertex Cover$(vc)$} \cite{DBLP:journals/jal/ChenKJ01} implies a PK for \textsc{Independent Set$(vc)$} since, for a graph, the vertex cover number pluses the independence number equals the vertex number. Clearly, 
A PK for \textsc{Feedback Vertex Set($k$)}  \cite{DBLP:conf/icalp/Iwata17} implies a PK for \textsc{Feedback Vertex Set$(vc)$} since the vertex cover number is at least the feedback vertex set number. 
Furthermore, the WK[1]-hardness of \textsc{Connected Vertex Cover($k$)}  \cite{DBLP:journals/algorithmica/HermelinKSWW15} implies the WK[1]-hardness of \textsc{Connected Vertex Cover($vc$)} since the vertex cover number is at most the connected vertex cover number. 
%The input of \textsc{Induced Matching} is $(G,k)$, decide whether there exist $2k$ vertices whose induced subgraph in $G$ is a matching of size $k$.
A PPT from a WK[1]-hard problem \textsc{Multicolored-Clique}($k \log n$) \cite{DBLP:journals/algorithmica/HermelinKSWW15} to \textsc{Induced Matching}$(vc)$ is provided as follows.

\begin{theorem}
\textsc{Induced Matching}$(vc)$ is WK[1]-hard.
\end{theorem}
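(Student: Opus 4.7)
The plan is to give a polynomial parametric transformation (PPT) from the WK[1]-hard problem \textsc{Multicolored Clique}$(k\log n)$. Set $s=\lceil\log n\rceil$ and fix, for every vertex $u$ of the input graph $G$ with color classes $V_1,\ldots,V_k$, a unique binary encoding $\beta^u=(\beta^u_1,\ldots,\beta^u_s)\in\{0,1\}^s$. The design idea is to encode a candidate multicolored clique by $ks$ binary ``switches'' (one per bit of each color's chosen vertex) together with $\binom{k}{2}$ ``pair selectors'' witnessing pairwise adjacency of the chosen vertices in $G$.

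Concretely, the reduction builds a graph $H$ consisting of: a \emph{switch triangle} on $a_{i,\ell},t^0_{i,\ell},t^1_{i,\ell}$ with all three edges, for each $(i,\ell)\in[k]\times[s]$; a \emph{pair selector} vertex $s_{ij}$ for each $i<j$; and, for each $\{u,v\}\in E(G)$ with $u\in V_i$, $v\in V_j$, $i<j$, a \emph{candidate} vertex $n^{ij}_{uv}$ joined to $s_{ij}$ and to every ``wrong-bit'' vertex $t^{1-\beta^u_\ell}_{i,\ell}$ and $t^{1-\beta^v_\ell}_{j,\ell}$ for $\ell\in[s]$. The target induced matching size is $\kappa=ks+\binom{k}{2}$, and $X=\{t^b_{i,\ell}:i\in[k],\ell\in[s],b\in\{0,1\}\}\cup\{s_{ij}:i<j\}$ is a vertex cover of $H$ with $|X|=2ks+\binom{k}{2}=O((k\log n)^2)$, establishing the required polynomial parameter bound.

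For the forward direction, a multicolored clique $u_1,\ldots,u_k$ with $u_i\in V_i$ yields the induced matching $\{a_{i,\ell}t^{\beta^{u_i}_\ell}_{i,\ell}\}\cup\{s_{ij}n^{ij}_{u_iu_j}\}$ of size $\kappa$; the consistency edges out of each selected $n^{ij}_{u_iu_j}$ reach only the ``wrong-bit'' vertices, which are unmatched by the switches, so no forbidden edges arise between matched endpoints.

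The main obstacle is the reverse direction, where I must show that any induced matching $M$ of size $\kappa$ encodes a multicolored clique. Two structural facts drive the argument: (i) within each switch triangle the matching takes at most one edge of any kind, whether a bit-choice edge $a_{i,\ell}t^b_{i,\ell}$, the triangle edge $t^0_{i,\ell}t^1_{i,\ell}$, or a consistency edge $n^{ij}_{uv}t^b_{i,\ell}$ (two such edges always share a vertex or are joined by an $a-t$ or $t^0-t^1$ edge); and (ii) because $s_{ij}$ is adjacent to every candidate of pair $(i,j)$, a pair edge $s_{ij}n^{ij}_{u_0v_0}\in M$ forbids every other $n^{ij}_{uv}$ from participating in a consistency edge in $M$. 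Writing $y$ for the number of unused pair selectors, these give $|M|\le ks+\binom{k}{2}-y=\kappa-y$; equality at $\kappa$ forces $y=0$, which by (ii) forbids all consistency edges, which in turn forbids any switch from using its triangle edge (a switch using $t^0_{i,\ell}t^1_{i,\ell}$ would conflict with the pair edge of every pair containing color $i$, forcing $y\ge k-1$). Hence every switch uses a bit-choice edge $a_{i,\ell}t^{b_{i,\ell}}_{i,\ell}$ and every selector a pair edge $s_{ij}n^{ij}_{u_{ij}v_{ij}}$; the induced matching condition finally forces $b_{i,\ell}=\beta^{u_{ij}}_\ell$ (and symmetrically for color $j$), so for each color $i$ the $V_i$-sides of the selected candidates collapse to a single $u_i\in V_i$, and the existence of each $n^{ij}_{u_iu_j}$ in $H$ witnesses that $\{u_1,\ldots,u_k\}$ is a multicolored clique in $G$.
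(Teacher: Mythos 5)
Your reduction is correct, and it is a genuinely different construction from the paper's. The paper also reduces from \textsc{Multicolored Clique}$(k\log n)$, but it uses the edge set $E(G)$ itself as the candidate vertices and attaches, for every ordered triple of distinct colors $(i,j,j')$ and every bit position $\ell$, a gadget triangle $\{\xijl,\yijl,\zijl\}$ that compares bit $\ell$ of the $V_i$-endpoint selected for pair $(i,j)$ with that selected for pair $(i,j')$; this gives $\Theta(k^3\log n)$ triangles and a vertex cover of size $\binom{k}{2}+O(k^3\log n)$. You instead introduce only $k\lceil\log n\rceil$ switch triangles $\{a_{i,\ell},t^0_{i,\ell},t^1_{i,\ell}\}$, one per (color, bit) pair, which \emph{commit to a global bitstring per color}; a candidate $n^{ij}_{uv}$ is then wired only to the wrong-bit switch vertices, so the induced-matching constraint directly forces every selected candidate to agree with the committed bits, and uniqueness of the encoding collapses the $V_i$-sides to a single vertex $u_i$. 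Your variant is more economical (vertex cover $O(k\log n + k^2)$ instead of $O(k^3\log n)$) and is closer in spirit to the standard ``selection gadget'' pattern for \textsc{Multicolored Clique} reductions, at the cost of needing the explicit degree-$2$ filler vertices $a_{i,\ell}$ and the extra case analysis ruling out the $t^0t^1$ triangle edges. Both proofs settle the reverse direction by the same counting skeleton (at most one matched edge per gadget triangle, at most one per pair selector, tightness forces everything), so the overall structure of the correctness argument is parallel even though the gadget design differs.
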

\begin{proof}
Let $(G, c)$ be an instance of \textsc{Multicolored-Clique}($k \log n$), where $G = (V, E)$ and $c : V \rightarrow [k]$.  Assume that $V = [n]$ and, for $v \in V$, denote by $b_{\ell}(v) \in \{0, 1\}$ the $\l$-th bit of the binary representation of $v$. Without loss of generality, suppose $\log n$ is an integer.
We construct an instance $H$ of \textsc{Induced Matching}($vc$). The vertex set is $V(H) = E \cup F \cup Z$, and the graph can be described as follows:
(1) $E$ is the set of edges of $G$.  That is, each edge of $G$ has a corresponding vertex in $H$, and they form an independent set in $H$.
(2) For $i, j \in [k]$ with $i<j$, add a vertex $f_{i,j}$ to $F$. The neighbors of $f_{i,j}$ are $N(f_{i,j}) = \{ uv \in E : c(u) = i, c(v) = j \}$.
(3)  For every pair of pairs $(i, j), (i, j') \in [k] \times [k]$, with $i, j, j'$ distinct, and for every $\ell \in [\log n]$, add the vertices $\xijl, \yijl, \zijl$ to $Z$.
Their neighborhoods are $N(\xijl) = \{uv \in E : c(u) = i, c(v) = j, b_\l(u) = 1\} \cup \{\yijl,\zijl \},
N(\yijl) = \{uv \in E : c(u) = i, c(v) = j', b_\l(u) = 0\} \cup \{\xijl, \zijl\},$  and $N(\zijl) = \{\xijl, \yijl\}.$
Notice that $\xijl, \yijl, \zijl$ induce a triangle in $Z$.

We claim that $G$ contains a multicolored clique iff $H$ contains an induced matching of size $\binom{k}{2}  + |Z|/3$.  Note that $F \cup Z$ is a vertex cover of $H$. 
 Since $|F| = \binom{k}{2}$ and $|Z| \in O(k^3 \log n)$, the size of the parameter is polynomial in $k \log n$.
Let $v_1, \ldots, v_k$ be vertices of a multicolored clique of $G$, where $c(v_i) = i$ for every $i\in [k]$.  We construct an induced matching $I$ of $H$ in two steps.  In step 1, for each edge $v_iv_j$ in the clique, $i < j$, add to $I$ the edge $\{ v_i v_j, f_{i,j} \}$ (one can easily check that these edges are independent).  In step 2, consider values $i, j, j', \l$ and the corresponding triangle in $Z$.  If $b_\l(v_i) = 0$, then there is no edge in $H$ between any $v_i v_{j''}$ and $\xijl \in Z$, for $j'' \neq i$.  There is also no edge in $H$ between any $v_jv_{j'}$ and $\xijl$, for any $j, j' \neq i$. Hence we may add $\xijl \zijl$ to $I$, as it is independent from the edges added in step 1.
Similarly, if $b_\l(v_i) = 1$, we may add $\yijl \zijl$ to $I$ instead.
Notice that any two edges that we add in step 2 are independent since no edges are shared between any two triangles in $Z$.
It follows that $I$ is an induced matching, and its size is exactly $\binom{k}{2} + |Z|/3$.

Suppose that $H$ contains an induced matching $I$ of size $\binom{k}{2} + |Z|/3$. 
Notice that for $i, j, j', \l$, at most one edge incident to an element of $\{\xijl, \yijl, \zijl\}$ can be in $I$, since they form a triangle.
Therefore these can account for at most $|Z|/3$ edges of $I$.  Since $E$ is an independent set in $H$, the remaining $\binom{k}{2}$ edges of $I$ can only be edges incident to the $f_{i,j}$ vertices of $F$.
We may thus assume that each $f_{i,j}$ is incident to an edge of $I$.  In turn, we may assume that for each triangle in $Z$, at least one of its vertices is incident to an edge in $I$.
We claim that $C = \{ uv : i, j \in [k], \{uv, f_{i,j}\} \in I \}$ form the edges of a multicolored clique of $G$.  
Notice that for each $i \neq j$, there exists exactly one edge $uv \in C$ such that $c(u) = i, c(v) = j$.

Now let $i \in [k]$ and assume, towards a contradiction, that there are $uv, u'v' \in C$ such that $c(u) = c(u') = i$ but $u \neq u'$.  Let $j = c(v), j' = c(v')$ (by the previous remark we may assume $j \neq j'$).
Since $u \neq u'$, there is some $\l$ such that $b_\l(u) \neq b_\l(u')$, say $b_\l(u) = 1, b_\l(u') = 0$, without loss of generality.
Then in $H$, $\xijl$ is a neighbor of $uv$ and $\yijl$ is a neighbor of $u'v'$. 
Since $\{uv, f_{ij}\}, \{u'v', f_{i,j'}\} \in I$, neither $\xijl$ or $\yijl$ can be incident to an edge in $I$ (see Figure~\ref{fig:induced-matching}). 
This implies that $\zijl$ is also not incident to such an edge.  That is, no vertex of the triangle for values $i, j, j', \l$ is incident to an edge of $I$, a contradiction.  
\end{proof}

\begin{figure}
\centering
\begin{tikzpicture}
% factors and their names

\filldraw[color=blue, fill=blue!3] (0,0) ellipse (0.6 and 1.1); % F
\filldraw[color=blue, fill=blue!3] (3,0) ellipse (0.8 and 1.2); % E

\filldraw[color=blue, fill=blue!3] (7,0) ellipse (1.8 and 1.2); % Z

\draw[color=black] (0,1.5) node {$F$};
\draw[color=black] (7,1.5) node {$Z$};
\draw[color=black] (3,1.5) node {$E$};

% in F
\draw[color=black] (0,0.8) node {$f_{i,j}$};
\fill [color=black] (0,0.4) circle (1.5pt);
\draw[color=black] (0,-0.3) node {$f_{i,j'}$};
\fill [color=black] (0,-0.6) circle (1.5pt);

% between F and E
\draw (0,0.4) -- (3,1);
\draw (0,-0.6) -- (3,-0.2);
\draw (0,-0.6) -- (3,-1);

% in E
\draw[color=black] (3,0.75) node {$uv$};
\fill [color=black] (3,1) circle (1.5pt);
\draw[color=black] (3,0.1) node {$uv''$};
\fill [color=black] (3,-0.2) circle (1.5pt);
\draw[color=black] (3,-0.75) node {$u'v'$};
\fill [color=black] (3,-1) circle (1.5pt);

% between Z and E
\draw (6.2,0.4) -- (3,1);
\draw (6.2,-0.4) -- (3,-1);

% in Z
\draw[color=black] (6.3,0.7) node {$\xijl$};
\draw[color=black] (6.3,-0.7) node {$\yijl$};
\draw[color=black] (7.9,0) node {$\zijl$};

\fill [color=black] (6.2,0.4) circle (1.5pt);
\fill [color=black] (6.2,-0.4) circle (1.5pt);
\fill [color=black] (7,0) circle (1.5pt);
\draw (6.2,0.4) -- (6.2,-0.4);
\draw (6.2,-0.4) -- (7,0);
\draw (7,0) -- (6.2,0.4);

\end{tikzpicture}
\caption{Illustration of the reduction of \imp problem. The color of $u$ and $u'$ is $i$. The color of $v$, $v'$, and $v''$ are $j$, $j'$, and $j'$, respectively. In addition, $b_l(u) = 1$ and $b_l(u') = 0$.}
\label{fig:induced-matching}
\end{figure}
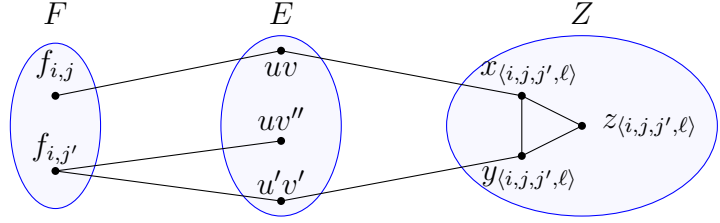

\begin{corollary}
\textsc{Induced Matching}$(vc)$ does not admit a PC unless coNP $\subseteq$ NP/poly.
\end{corollary}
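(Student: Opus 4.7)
The plan is to bootstrap from the polynomial parametric transformation (PPT) established by the proof of the theorem above. A standard fact (Bodlaender--Jansen--Kratsch) is that PPTs preserve the non-existence of polynomial compressions: if $L_1$ PPT-reduces to $L_2$ and $L_2$ admits a PC, then so does $L_1$, by composing the reduction with the compression (the target parameter is polynomial in the source parameter). Consequently, it suffices to show that the source of the PPT, namely \textsc{Multicolored-Clique}$(k \log n)$, has no PC unless coNP $\subseteq$ NP/poly.

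For this source lower bound, I would give an OR-cross-composition from the NP-hard (unparameterized) \textsc{Multicolored Clique}. Using the equivalence relation that bundles together instances sharing both vertex count $n$ and color count $k$ (which is polynomial-time decidable), given $t$ such instances $(G_1, c_1), \ldots, (G_t, c_t)$, the composition simply outputs their disjoint union $G = G_1 \sqcup \cdots \sqcup G_t$ with the induced coloring into $[k]$. Since no edges are added between distinct components, $G$ contains a multicolored clique if and only if at least one $G_i$ does. The output has $tn$ vertices, so its parameter is $k \log(tn) \leq n(\log n + \log t)$, which is polynomial in $\max_i |x_i| + \log t$ (using that $k \leq n$ in any non-trivial instance, else the answer is trivially NO). The cross-composition theorem then rules out a PC for \textsc{Multicolored-Clique}$(k \log n)$ modulo coNP $\subseteq$ NP/poly.

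To close the argument, we verify that the theorem's construction is a legitimate PPT into \textsc{Induced Matching}$(vc)$: the explicit vertex cover $F \cup Z$ of the constructed graph $H$ has size $\binom{k}{2} + |Z| = O(k^3 \log n)$, which is polynomial in the source parameter $k \log n$. Chaining the PPT with the cross-composition lower bound then yields the corollary. The main conceptual step is recognizing that a lower bound for the source problem can be established via disjoint union; once the OR-cross-composition is in place, the transfer through the PPT is routine and the corollary follows immediately.
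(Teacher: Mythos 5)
Your argument is correct and follows essentially the same route as the paper: the PPT to \textsc{Induced Matching}$(vc)$ established in the preceding theorem, combined with the known fact that its source, \textsc{Multicolored-Clique}$(k\log n)$, admits no PC unless coNP~$\subseteq$~NP/poly, immediately gives the corollary. The only difference is that the paper obtains the source lower bound as a black box from the WK[1]-hardness framework of Hermelin et al., whereas you re-derive it explicitly via an OR-cross-composition by disjoint union, carefully checking that the output parameter $k\log(tn)\leq n(\log n+\log t)$ stays polynomial in $\max_i|x_i|+\log t$; this makes the argument self-contained but is otherwise the same chain of reasoning.
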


We will provide PPTs from the MK[2]-hard problem CNF-SAT($n$) \cite{DBLP:journals/algorithmica/HermelinKSWW15} to the problems in Theorem \ref{cn-hard-kerenl}, \ref{ST-is-MK-hard}, and \ref{cds-hard}. Suppose w.o.l.g that the formula $\Phi$ of CNF-SAT($n$) has no duplicate clauses and each clause has no duplicate variables. Let $x_1,\ldots,x_n$ and $C_1,\ldots,C_m$ be the variables and the clauses of $\Phi$, respectively.
It has been proved \cite{DBLP:journals/siamdm/BodlaenderJK14} that \textsc{Clique}($vc$) and \textsc{Chromatic Number}($vc$) do not admit PCs unless coNP $\subseteq$ NP/poly, but, clique($vc$) has a PTK. For the PTK of \textsc{Chromatic Number($vc$)}, we give the following negative result.

\begin{theorem}
\label{cn-hard-kerenl}
\textsc{Chromatic Number}$(vc)$ is MK[2]-hard.
\end{theorem}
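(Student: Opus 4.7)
The plan is to give a polynomial parametric transformation from the MK[2]-hard problem \textsc{CNF-SAT}$(n)$ to \textsc{Chromatic Number}$(vc)$. Given a CNF formula $\Phi$ with variables $x_1, \ldots, x_n$ and clauses $C_1, \ldots, C_m$, I will build in polynomial time a graph $H$ and target value $k = n+1$ such that $H$ is $(n+1)$-colorable iff $\Phi$ is satisfiable, while $vc(H) = O(n)$.

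First I would build a palette gadget: take a clique $Y = \{y_1, \ldots, y_n\}$ together with a single vertex $z$ adjacent to all of $Y$. Then $Y \cup \{z\}$ induces $K_{n+1}$, forcing $\chi(H) \geq n+1$ and, in any proper $(n+1)$-coloring, one may assume $y_i$ receives color $i$ and $z$ receives color $n+1$. For each variable $x_i$, add two vertices $v_i, \bar{v}_i$ joined by an edge and made adjacent to every $y_j$ with $j \neq i$. This forces $\{v_i, \bar{v}_i\} = \{i, n+1\}$ in any proper $(n+1)$-coloring, so the pair encodes a truth value; by convention, $v_i = n+1$ (equivalently $\bar{v}_i = i$) encodes $x_i = \text{true}$.

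For the clause gadget I would add, for each $C_j$, one new vertex $c_j$ with the following neighbors: $z$; $y_i$ for every $x_i$ that does not occur in $C_j$; $v_i$ for every positive occurrence $x_i \in C_j$; and $\bar{v}_i$ for every negative occurrence $\neg x_i \in C_j$. The first two edge types rule out color $n+1$ and every color $i$ whose variable is irrelevant to $C_j$, so $c_j$ must receive color $i$ for some $x_i$ appearing in $C_j$. A direct case check shows that color $i$ is available at $c_j$ precisely when the corresponding literal evaluates to true: for a positive occurrence, the neighbor $v_i$ has color $i$ exactly when $x_i = \text{false}$, blocking color $i$ at $c_j$; the dual statement holds for negative occurrences via $\bar{v}_i$.

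Since $Y \cup \{z\} \cup \bigcup_i \{v_i, \bar{v}_i\}$ is a vertex cover of $H$ of size $3n+1$ (every $c_j$ lies outside), we get $vc(H) \leq 3n+1$, linear in the source parameter $n$. What remains is a short two-direction verification: a satisfying assignment for $\Phi$ yields a proper $(n+1)$-coloring by coloring the variable gadgets according to the assignment and each $c_j$ with the index $i$ of any literal in $C_j$ that is true; conversely, any proper $(n+1)$-coloring forces the binary encoding on every pair $(v_i, \bar{v}_i)$ and assigns each $c_j$ a color $i$ whose corresponding literal in $C_j$ must be true, producing a satisfying assignment. The only delicate design choice is aligning polarities in the clause-vertex adjacency so that ``literal true'' uniformly corresponds to ``color $i$ free at $c_j$''; the convention above was selected for exactly this reason, and once it is in place the equivalence and the parameter bound together establish MK[2]-hardness.
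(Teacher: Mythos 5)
Your proposal is correct and takes essentially the same approach as the paper: a PPT from \textsc{CNF-SAT}$(n)$ with a palette clique of size $n+1$, a literal pair gadget per variable (two adjacent vertices each attached to all palette vertices except their own index), and one clause vertex per clause forced to take the color index of a true literal, yielding $vc = 3n+1$. The only cosmetic difference is in how the clause vertex is wired: the paper connects $t_j$ to the literal vertices of literals \emph{absent} from $C_j$ and lets the pair $\{w_i,\overline{w}_i\}$ jointly block color $i$ for non-occurring variables, whereas you connect $c_j$ directly to the palette vertices $y_i$ of non-occurring variables and to the literal vertices of literals \emph{present} in $C_j$; both wirings produce the same color-availability condition, and the two constructions have the same size and vertex-cover bound.
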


\begin{proof}
Given an instance $\Phi$ of CNF-SAT, construct a graph $G$ as follows. In what follows, we assume that $i,k \in [n]$ and $j \in [m]$ (e.g. ``for all $i$'' means for all $i \in [n]$). First, add a complete graph with $n$ vertices $v_1,\ldots,v_n$. Secondly, add vertices $w_i,\overline{w}_i$ and edge $w_i\overline{w}_i$ for all $i$, and edges $w_iv_k$, $\overline{w}_iv_k$ for all $i$ and $k$ if $i\neq k$. Thirdly, add vertices $t_1,\ldots,t_m$, moreover, for all $i,j$, add $t_jw_i$ if literal $x_i$ is not in the clause $C_j$ of the instance $\Phi$, and add $t_j\overline{w}_i$  if literal $\overline{x}_i$ is not in the clause $C_j$ of the instance $\Phi$. Finally, add a vertex $u$ and edges $u v_i$, $ut_j$ for all $i$ and $j$.
The purpose of $u$ is to limit the number of possible colors that the $t_i$ vertices can use.
Clearly, all vertices of sets $\{v_1,\ldots,v_n\}$, $\{w_1,\ldots,w_n\}$, and $\{\overline{w}_1,\ldots,\overline{w}_n\}$ together with vertex $u$ are a vertex cover of $G$, so $vc(G) \leq 3n+1$.

We claim that $\Phi$ is satisfiable iff $G$ can be colored with $n+1$ colors.
Assume $\Phi$ is satisfiable. There is an assignment for variables $x_1,\ldots,x_n$ such that $\Phi$ is evaluated to true. 
First, each $v_i$ is colored by $c_i$, and $u$ is colored by $c$. For every pair $w_i$ and $\overline{w}_i$, assign $c_i$ to the vertex whose corresponding literal is true in the assignment of $\Phi$, and the color $c$ to the other vertex. Every $C_j$ of $\Phi$ has at least one literal assigned to true, say $x_i$ (or $\overline{x}_i$ ), thus, some neighbor $\overline{w}_i$ (or $w_i$) of $t_j$ is colored by $c$, and we can assign color $c_i$ to $t_j$. As a result, graph $G$ is colored by $n+1$ colors.
For the reverse direction, assume $G$ is colored by colors $c_1,\ldots,c_n$, and $c$. The subgraph induced by $v_1,\ldots,v_n$ and $u$ is a complete graph with $n+1$ vertices, so $n+1$ colors are needed to color these vertices. Suppose w.l.o.g that $v_i$ is colored by $c_i$ for every $i$, and $u$ is colored by $c$. For every pair $w_i$ and $\overline{w}_i$, their colors are different and come from $\{c_i,c\}$.
%The colors for each vertex pair $w_i$ and $\overline{w}_i$ must come from the color set $\{c_i,c\}$, moreover, the two vertices have different colors. 
Each $t_j$ is adjacent to all $w_i$ (or $\overline{w}_i$) if the literal $x_i$ (or $\overline{x}_i$) is not in $C_j$. Thus, in every $C_j$ of $\Phi$, there is at least one literal whose corresponding vertex is not colored by $c$, otherwise, we need at least $n+1$ colors to color the neighbor vertices of $t_j$, and $n+2$ colors to color $G$. As a result, for every pair $w_i$ and $\overline{w}_i$, choose the vertex that is not colored by $c$ and assign its corresponding literal true, then all clauses of $\Phi$ are satisfied.
\end{proof}

Note that Theorem \ref{cn-hard-kerenl} also implies \textsc{Chromatic Number}($vc$) has no PC unless NP $\subseteq$ coNP/poly, which has been proved in \cite{DBLP:journals/siamdm/BodlaenderJK14}, while our proof is much shorter.

\begin{theorem}\label{ST-is-MK-hard}
\textsc{Steiner Tree}$(vc)$ is MK[2]-hard.
\label{st}
\end{theorem}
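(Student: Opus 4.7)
The plan is to give a polynomial parametric transformation (PPT) from \textsc{CNF-SAT}$(n)$ to \textsc{Steiner Tree}$(vc)$, following the same literal-selection style used for Theorem \ref{cn-hard-kerenl}. Let $\Phi$ be the input formula with variables $x_1,\ldots,x_n$ and clauses $C_1,\ldots,C_m$. The graph $G$ I build has a single vertex $r$, two literal vertices $w_i$ and $\overline{w}_i$ for each variable $x_i$, a ``selector'' vertex $p_i$ for each variable, and a clause vertex $t_j$ for each clause. The edges are: $rw_i$ and $r\overline{w}_i$ for each $i$; $p_iw_i$ and $p_i\overline{w}_i$ for each $i$; and an edge between $t_j$ and $w_i$ (resp.\ $\overline{w}_i$) whenever $x_i$ (resp.\ $\overline{x}_i$) appears in $C_j$. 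The terminal set is $K=\{r\}\cup\{p_i:i\in[n]\}\cup\{t_j:j\in[m]\}$, and the budget is $k=2n+m$.

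For the forward direction, given a satisfying assignment $\alpha$ of $\Phi$, for each variable I add the two edges $r w_i$ and $p_i w_i$ if $\alpha(x_i)=\text{true}$ and otherwise the two edges $r\overline{w}_i$ and $p_i\overline{w}_i$; for each clause $C_j$ I add a single edge from $t_j$ to one of its satisfied literal vertices, which is already present. This is a tree on $2n+m$ edges containing all of $K$. For the reverse direction, suppose $T$ is a Steiner tree for $K$ with at most $2n+m$ edges; then $T$ has at most $2n+m+1$ vertices, of which $1+n+m$ are terminals, leaving at most $n$ Steiner vertices, each necessarily in $\{w_i,\overline{w}_i : i\in[n]\}$. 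Since each terminal $p_i$ has neighborhood $\{w_i,\overline{w}_i\}$, at least one of $w_i,\overline{w}_i$ lies in $V(T)$, so by a pigeonhole count \emph{exactly} one does per variable. Define $\alpha(x_i)$ to be true iff $w_i\in V(T)$. Each $t_j$ must be joined to $T$ through one of its neighbors, and any such neighbor is a literal vertex whose associated literal in $C_j$ is set to true by $\alpha$, so $C_j$ is satisfied.

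For the parameter bound, the set $\{r,w_1,\overline{w}_1,\ldots,w_n,\overline{w}_n\}$ covers every edge of $G$, since every edge is incident either to $r$ or to some $w_i$ or $\overline{w}_i$; hence $vc(G)\leq 2n+1$, which is polynomial in the parameter $n$ of the source instance. Combined with the equivalence above, this yields the desired PPT from \textsc{CNF-SAT}$(n)$ and hence the MK[2]-hardness of \textsc{Steiner Tree}$(vc)$.

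The main subtlety, and the only place where care is needed, is the budget calibration: the tight count $2n+m$ forces the number of Steiner vertices to be exactly $n$, and together with the $p_i$ terminals this pins down \emph{exactly} one literal vertex per variable. Without the $p_i$ selectors, a Steiner tree could include both $w_i$ and $\overline{w}_i$ for some $i$ and neither for another, which would allow contradictory implicit assignments when decoding from the clause-edge choices; the $p_i$ gadgets eliminate this loophole at the cost of only $n$ extra edges in the budget and $n$ extra terminals, leaving $vc(G)$ linear in $n$.
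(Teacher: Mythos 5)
Your proof is correct and follows essentially the same construction as the paper's: your $r, w_i, \overline{w}_i, p_i, t_j$ correspond to the paper's $u, u_i, \overline{u}_i, v_i, c_j$, with the same budget $2n+m$, the same terminal set, and the same tight-counting argument in the reverse direction (the paper additionally puts an edge between $u_i$ and $\overline{u}_i$ to form a triangle, but that edge is never used, so omitting it as you do changes nothing). The argument is sound at every step.
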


\begin{proof}
Let $i \in [n]$ and $j\in [m]$. Given an instance $\Phi$ of CNF-SAT. Construct a graph $G$ as follows. First, add vertices $c_1,\ldots,c_m,$ vertex $ u$, and $n$ triangles to $G$, where the vertices of the $i$-th triangle are $u_i$, $\overline{u}_i$ and $v_i$ for all $i$. Then, for all $i$ and $j$, add edge $u_ic_j$ if literal $x_i$ is in $C_j$ of $\Phi$, and add edge $\overline{u}_ic_j$   if literal $\overline{x}_i$ is in $C_j$ of $\Phi$. Moreover, connect $u$ with all $u_i$ and all $\overline{u}_i$. In addition, the terminal set $K$ consists of all $v_i$, all $c_j$, and $u$. Clearly, the vertices $\{u_1,\ldots,u_n\}  \cup \{\overline{u}_1,\ldots,\overline{u}_n\}$ form a vertex cover of $G$, thus, $vc(G) \leq 2n$.
	
We claim that $\Phi$ is satisfiable iff there is a connected subgraph $T$ of $G$ that contains at most $2n+m$ edges and all vertices of $K$.
Assume $\Phi$ is satisfiable. There is an assignment for variables $x_1,\ldots,x_n$ that $\Phi$ is evaluated true. Consider the vertex set $V(T)$ of $T$. First, add all vertices of $K$ to $V(T)$. Secondly, for all $i$, add vertex $u_i$ to $V(T)$ if the corresponding literal $x_i$ is evaluated true in the assignment, and add vertex $\overline{u}_i$ to $V(T)$ if the corresponding literal $\overline{x}_i$ is evaluated true in the assignment. Since $\Phi$ is satisfiable, at least one vertex of $V(T)\setminus K$ is adjacent to $c_j$ for every $j$, which means that $N(c_j) \cap (V(T)\setminus K) \neq \emptyset$.
 Consider the edge set $E(T)$ of $T$. First, add edge $uv$ to $E(T)$ for all $v\in V(T)\setminus K$. Secondly, for every $c_j$, add the edge between $c_j$ and any one vertex of $N(c_j) \cap (V(T)\setminus K)$ to $E(T)$. Thirdly, since exact one of $x_i$ and $\overline{x}_i$ is evaluated true, exact one of $u_i$ and $\overline{u}_i$ is in $V(T)$ for every $i$. We may add an edge between $v_i$ and the vertex in $N(v_i) \cap V(T)$ to $E(T)$ for every $i$.  Clearly, $T$ is a tree with $2n+m$ edges and $K \subseteq V(T)$.
For the reverse direction, if there is a connected subgraph $T$ of $G$ that contains at most $2n+m$ edges and contains all vertices of $K$. It means that $T$ contains at most $2n+m+1$ vertices, otherwise, $T$ is not connected. 
 Suppose vertex set $U$ consists of all $u_i$ and $\overline{u}_i$. Clearly, we have $V(G) = K \cup U$. 
Consider every pair of $u_i$ and $\overline{u}_i$. Since $v_i$ is an isolated vertex in $G - \{u_i, \overline{u}_i\}$, at least one vertex of $u_i$ and $\overline{u}_i$ is in $V(T)$. In addition, $K$ has $n+m+1$ vertices and is a subset of $V(T)$, so $V(T) \cap (V(G)\setminus K) = V(T) \cap U$ contains at most $n$ vertices. Thus, exact one of $u_i$ and $\overline{u}_i$ is in $V(T)$ for every $i$. Since $T$ is connected and $N(c_j) \subseteq U$ for every $j$, $N(c_j) \cap (V(T)\cap U) \neq \emptyset$ for every $j$.
Now, we may give an assignment to $\Phi$ as follows. For every $i$, assign true to variable $x_i$ of $\Phi$ if vertex $u_i$ is in $V(T)\cap U$, and assign false to variable $x_i$ of $\Phi$ if vertex $\overline{u}_i$ is in $V(T)\cap U$. Since every $c_i$ has at least one adjacent vertex in $V(T)\cap U$, every clause of $\Phi$ is evaluated true under that assignment.
\end{proof}

\begin{corollary}
\textsc{Steiner Tree$(vc)$} does not admit a PC unless NP $\subseteq$ coNP/poly.
\end{corollary}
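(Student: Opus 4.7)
The plan is to obtain this corollary as a direct consequence of the polynomial parametric transformation constructed in Theorem \ref{ST-is-MK-hard}. The reduction built there maps any instance $\Phi$ of \textsc{CNF-SAT} on $n$ variables to an instance $(G, K, 2n+m)$ of \textsc{Steiner Tree} whose vertex cover is bounded by $2n$, which is polynomial in the parameter $n$ of the source instance. Since \textsc{CNF-SAT} parameterized by the number of variables $n$ is the canonical problem that does not admit a polynomial compression unless $\mathsf{NP} \subseteq \mathsf{coNP}/\mathsf{poly}$ (by the results of Fortnow--Santhanam and Dell--van~Melkebeek, which underpin MK[2]-hardness), and since a polynomial parametric transformation composes with any hypothetical polynomial compression to yield one for the source problem, the corollary follows.

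Concretely, I would argue the contrapositive: suppose \textsc{Steiner Tree}$(vc)$ admits a polynomial compression into some language $Q'$. Given any \textsc{CNF-SAT} instance $\Phi$ on $n$ variables, first apply the reduction of Theorem~\ref{ST-is-MK-hard} to obtain an equivalent \textsc{Steiner Tree} instance with parameter at most $2n$, then feed this instance to the assumed compression. The output is an instance of $Q'$ of size polynomial in $2n$, hence polynomial in $n$, and by equivalence of all intermediate instances it is a yes-instance of $Q'$ iff $\Phi$ is satisfiable. This yields a polynomial compression for \textsc{CNF-SAT}$(n)$, which collapses the polynomial hierarchy to $\mathsf{NP} \subseteq \mathsf{coNP}/\mathsf{poly}$.

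There is essentially no obstacle here, since the hard work was already done in Theorem~\ref{ST-is-MK-hard}: the correctness of the reduction and the polynomial bound on the parameter are exactly the two ingredients needed. The only subtlety to record is that the transformation must be a genuine PPT (polynomial running time and polynomial parameter blow-up), which is immediate from the construction of $G$, since $G$ has $O(n + m)$ vertices and edges, and $vc(G) \leq 2n$. I would simply state these observations and cite Theorem~\ref{ST-is-MK-hard} together with the standard lower bound for \textsc{CNF-SAT}$(n)$.
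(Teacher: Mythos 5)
Your proposal is correct and matches the paper's (implicit) reasoning exactly: the corollary follows from the MK[2]-hardness established in Theorem~\ref{ST-is-MK-hard} via the standard argument that a PPT from \textsc{CNF-SAT}$(n)$ composed with a hypothetical polynomial compression for \textsc{Steiner Tree}$(vc)$ would yield a polynomial compression for \textsc{CNF-SAT}$(n)$, which is impossible unless $\mathsf{NP} \subseteq \mathsf{coNP}/\mathsf{poly}$. No further commentary is needed.
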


It is known that dominating set($vc$) is MK[2]-hard \cite{DBLP:conf/icalp/DomLS09, DBLP:journals/algorithmica/HermelinKSWW15}. In the following, we study the \textsc{Connected Dominating Set}$(vc)$.

\begin{theorem}
\label{cds-hard}
 \textsc{Connected Dominating Set}$(vc)$ is MK[2]-hard.
\end{theorem}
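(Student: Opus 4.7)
My plan is to give a polynomial parametric transformation from CNF-SAT$(n)$ to \cdsp$(vc)$, in the same spirit as the reductions for \textsc{Steiner Tree}$(vc)$ and \textsc{Chromatic Number}$(vc)$ given in Theorems~\ref{ST-is-MK-hard} and~\ref{cn-hard-kerenl}. Let $\Phi$ have variables $x_1,\dots,x_n$ and clauses $C_1,\dots,C_m$. I will build a graph $G$ whose vertex cover is $O(n)$ and which admits a connected dominating set of size $n+1$ if and only if $\Phi$ is satisfiable.

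The construction I propose is as follows. Add a central vertex $u$, together with a pendant $u'$ whose only neighbor is $u$. For each variable $x_i$, add a ``literal triangle gadget'' consisting of vertices $u_i, \overline{u}_i, v_i$ with edges $u_iv_i$ and $\overline{u}_iv_i$, and connect $u$ to both $u_i$ and $\overline{u}_i$. For each clause $C_j$, add a vertex $c_j$ and join $c_j$ to $u_i$ whenever $x_i \in C_j$ and to $\overline{u}_i$ whenever $\overline{x}_i \in C_j$. The set $\{u, u_1, \overline{u}_1, \ldots, u_n, \overline{u}_n\}$ is a vertex cover, so $vc(G) \le 2n+1$, which is polynomial in $n$ as required for a PPT from CNF-SAT$(n)$.

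For correctness, the forward direction is easy: given a satisfying assignment, take $D := \{u\} \cup \{u_i : x_i = \text{true}\} \cup \{\overline{u}_i : x_i = \text{false}\}$. Every vertex of $D$ other than $u$ is adjacent to $u$, so $D$ is connected; the pendant $u'$ and all $u_i, \overline{u}_i$ are dominated through $u$, each $v_i$ by its chosen literal vertex, and each $c_j$ by whichever satisfying literal the assignment provides. For the reverse direction, suppose $D$ is a connected dominating set with $|D|\le n+1$. A short argument forces $u \in D$: the pendant $u'$ must be dominated, and if $u' \in D$ with $u \notin D$ then $u'$ is isolated in $G[D]$ while other vertices of $G$ still need to be covered. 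Given $u \in D$, the budget of $n$ extra vertices and the fact that the triples $\{u_i,\overline{u}_i,v_i\}$ are disjoint force exactly one vertex of each triple into $D$ (so that each $v_i$ is dominated). The option $v_i \in D$ is excluded by connectivity, since $v_i$'s only neighbors are $u_i, \overline{u}_i$; hence exactly one of $u_i,\overline{u}_i$ lies in $D$, giving a Boolean assignment. Finally, since no $c_j$ itself fits into the budget, each $c_j$ must be dominated by one of its literal neighbors in $D$, which means the corresponding assignment satisfies every clause.

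The main subtlety, as in the Steiner Tree construction, is to pin down the CDS without slack: I need to simultaneously (i) force $u$ into $D$, (ii) force the ``per-variable'' vertex to be a literal vertex rather than $v_i$, and (iii) forbid clause vertices from being used inside $D$ as cheap shortcuts. The pendant $u'$ handles (i), the connectivity requirement combined with the isolated structure of $v_i$ handles (ii), and the tight budget $n+1$ combined with the fact that $c_j$ has no neighbour in $\{u\} \cup \{v_i\}$ handles (iii). After verifying these three points carefully, the PPT is complete, yielding MK[2]-hardness of \cdsp$(vc)$, and as a corollary the absence of a PC unless $\mathrm{NP} \subseteq \mathrm{coNP/poly}$.
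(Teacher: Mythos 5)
Your construction and argument match the paper's proof essentially verbatim: the paper likewise introduces a central vertex $p$ with a pendant $q$ (your $u,u'$) to force $p$ into the solution, the same per-variable gadget on $\{u_i,\overline{u}_i,v_i\}$, the same clause vertices adjacent to their literal vertices, the same vertex cover bound $2n+1$, and the same budget-plus-connectivity argument forcing exactly one of $u_i,\overline{u}_i$ per variable. The only (immaterial) difference is that the paper also includes the edge $u_i\overline{u}_i$, making each gadget a genuine triangle, whereas you leave it out; the correctness argument goes through identically either way.
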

\begin{proof}
Let $f_{cds}(G)$ represent the size of the minimum connected dominating set (min-CDS) of $G$. The construction here is similar to that of Theorem \ref{st}. Assume $i\in [n], j \in [m]$.  Given an instance $\Phi$ of CNF-SAT. Construct a graph $G$ as follows. First, add vertices $c_1,\ldots,c_m$ as well as $n$ triangles, and the vertices of the $i$-th triangle are $u_i$, $\overline{u}_i$, and $v_i$ for all $i$. Secondly, for all $i, j$, add edge $u_ic_j$ if literal $x_i$ is in clause $C_j$ of $\Phi$, and add edge $\overline{u}_ic_j$ if literal $\overline{x}_i$ is in clause $C_j$ of $\Phi$. 
Thirdly, add a vertex $p$ and edges $pu_i$, $p\overline{u}_i$ for all $i$. Finally, add a vertex $q$ and the edge $pq$. Clearly, all the vertices of $\{u_1,\ldots,u_n\}$ and $\{\overline{u}_1,\ldots,\overline{u}_n\}$ as well as vertex $p$ are a vertex cover of $G$. Thus, $vc(G) \leq 2n+1$.

We claim $\Phi$ is satisfiable iff $f_{cds}(G) \leq n+1$.
Assume $\Phi$ is satisfiable. There is an assignment for $x_1,\ldots,x_n$ that $\Phi$ is evaluated true. First, add $p$ to a vertex set $D$. Then, for all $i$, add vertex $u_i$ to $D$ if the corresponding literal $x_i$ is evaluated true in the assignment, and add vertex $\overline{u}_i$ to $D$ if the corresponding literal $\overline{x}_i$ is evaluated true in the assignment. Since all $u_i$ and $\overline{u}_i$ are neighbors of $p$, $D$ is a connected dominating set of $G$, moreover, the size of $D$ is $n+1$.
For the other direction, assume $f_{cds}(G) \leq n+1$. Suppose $D$ is a min-CDS of $G$. If $p,q \in D$, then $D \setminus \{q\}$ is a min-CDS, a contradiction. If $p,q \notin D$, then $q$ is not dominated by $D$ and is not in $D$, a contradiction. If $p\notin D$ and $q \in D$, then the subgraph induced in $G$ by $D$ is not connected, a contradiction. As a result, $q\notin D$ and $p \in D$. For every triangle $u_i-\overline{u}_i-v_i$, there is at least one vertex in $D$, otherwise, $v_i$ is not dominated by $D$ or is not in $D$. Thus, the size of $D$ is $n+1$, moreover, there is exactly one vertex from each triangle. Consider each triangle. If $v_i\in D$, then $u_i, \overline{u}_i \not\in D$, and $G[D]$ is disconnected. Thus, $v_i \not \in D$, and $D$ includes exactly one of $u_i, \overline{u}_i$.
For every $i$, assign true to $x_i$ of $\Phi$ if $u_i \not \in D$, and assign false to $x_i$ of $\Phi$ if $\overline{u}_i \in D$. Since $D$ is a dominating set of $G$, any clause of $\Phi$ has at least one literal assigned true.
\end{proof}

\begin{corollary}
 \textsc{Connected Dominating Set}$(vc)$ does not admit a PC unless NP $\subseteq$ coNP/poly.
\end{corollary}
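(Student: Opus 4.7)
The plan is to derive the corollary directly from the PPT constructed inside the proof of Theorem~\ref{cds-hard}, via the standard machinery that transfers polynomial compression lower bounds along polynomial parametric transformations. First I would observe that the reduction in the proof of Theorem~\ref{cds-hard} takes an instance $\Phi$ of \textsc{CNF-SAT} on $n$ variables and $m$ clauses, runs in polynomial time, and produces a graph $G$ together with the integer $n+1$ such that $\Phi$ is satisfiable iff $f_{cds}(G) \le n+1$. Crucially, the construction establishes $vc(G) \le 2n+1$, so the output parameter is linear in the input parameter $n$; hence the reduction is genuinely a PPT from \textsc{CNF-SAT}$(n)$ to \cdsp$(vc)$.

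Second, I would invoke the classical result of Dell and van Melkebeek that \textsc{CNF-SAT}$(n)$ does not admit a polynomial compression unless NP $\subseteq$ coNP/poly; this is the canonical hypothesis anchoring all MK[2]-type compression lower bounds and is exactly the statement used (implicitly) when the paper declares a problem MK[2]-hard. The corollary then follows by the usual contrapositive argument: if \cdsp$(vc)$ admitted a polynomial compression $\mathcal{C}$, then composing the PPT of Theorem~\ref{cds-hard} with $\mathcal{C}$ would yield a polynomial-time algorithm that maps any instance of \textsc{CNF-SAT}$(n)$ to an equivalent instance of some target language whose size is polynomial in $n+1$, and hence polynomial in $n$. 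This would be a polynomial compression of \textsc{CNF-SAT}$(n)$, contradicting the Dell--van Melkebeek bound.

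I expect no real obstacle: all the substantive work, namely verifying correctness of the reduction and bounding $vc(G)$ polynomially in $n$, has already been done in Theorem~\ref{cds-hard}. The only detail to verify is that the composition of a polynomial-time reduction with polynomial parameter blow-up, followed by a polynomial-size compression of the target parameter, still has total output size polynomial in the original parameter — which is immediate since polynomials compose to polynomials. In particular, this mirrors verbatim the structure of the corollaries following Theorems~\ref{cn-hard-kerenl} and \ref{ST-is-MK-hard}, so the proof will essentially be a one-line appeal to the PPT and the standard lower bound.
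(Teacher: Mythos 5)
Your proposal is correct and matches the paper's (implicit) reasoning exactly: the corollary is stated without a separate proof precisely because it follows from Theorem~\ref{cds-hard} by the standard argument that a PPT from \textsc{CNF-SAT}$(n)$, combined with a hypothetical polynomial compression for the target, would yield a polynomial compression for \textsc{CNF-SAT}$(n)$, contradicting Dell--van Melkebeek. The only cosmetic point is that NP $\subseteq$ coNP/poly and coNP $\subseteq$ NP/poly are equivalent, so the slight variation in phrasing between this corollary and others in the paper is immaterial.
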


Note that \dsp is MK[2]-hard even parameterized by vertex cover number and dominating set number \cite{DBLP:conf/icalp/DomLS09, DBLP:journals/algorithmica/HermelinKSWW15}. This also holds for our results, because the connected dominating set number is less than the vertex cover number of $G$ in our proof. It is worth mentioning that \idsp is also MK[2]-hard using a very similar proof as that of Theorem \ref{cds-hard}, where the only difference is that vertices $p,q$ are not needed in the new construction.

In addition, a linear kernel for \textsc{Triangle Partition}$(vc)$ is trivial since at most $\frac{vc}{2}$ vertices can be outside of the vertex cover. However, a PK for \textsc{Triangle Partition}$(tc)$ becomes complicated, which is shown in Theorem \ref{triangle-partition-pk-in-tc}.

\section{Parameterization by twin-cover}

A PK of \textsc{Hamiltonian Cycle} parameterized by the \textit{vertex-deletion distance from a cluster}, which is at most $tc$, is given in \cite{DBLP:journals/tcs/BodlaenderJK13}. 
Thus, \textsc{Hamiltonian Cycle$(tc)$} has a PK. \textsc{Feedback Vertex Set$(tc)$} admits a PK  \cite{DBLP:journals/dmtcs/Ganian15}.
In addition, we know $tc(G) \leq vc(G)$. Therefore, the hardness results for the problems parameterized by $vc$ in section \ref{par-vertex-cover-section} also hold for those problems parameterized by $tc$.
We now turn to \textsc{Triangle Partition($tc$)}.  As we already mentioned, reduction rules seem difficult to produce, but we can reduce the problem to a kernelizable intermediate problem that we call~\cfa. We first give its definition and PK as follows.

\medskip
\noindent
\textbf{Input}: a  bipartite graph $G = (B \cup S, E)$, where $B$ and $S$ are respectively called buyers and sellers; a profit  function $p : B \rightarrow \mathbb{N}$; a weight function $w : B \rightarrow  \mathbb{N}$; a capacity function $c : S \rightarrow \mathbb{N}$; a set $P \subseteq \binom{B}{2}$ called conflicting pairs; an integer $q$. In addition, the weights, profits, and capacities are at most exponential in $|B|$, that is $2^{poly(|B|)}$.

\medskip
\noindent
\textbf{Question}: does there exist a subset $F \subseteq E$ of edges satisfying that  
(1) each $b \in B$ is incident to at most one edge of $F$ (each buyer is assigned to at most one seller);
(2) for each $s \in S$, $\sum_{b : bs \in F} w(b) \leq c(s)$ (each seller is assigned buyers of at most its capacity);
(3) for each $\{b_1, b_2\} \in P$, at most one of $b_1$ or $b_2$ is incident to an edge of $F$ (no pair of conflicting buyers is assigned);
(4) $\sum_{b \in V(F) \cap B} p(b) \geq q$ (profit of assigned buyers is at least $q$, where here $V(F)$ is the set of vertices incident to an edge of $F$).

Intuitively, each buyer $b$ incurs profit $p(b)$ if we can assign it to a seller.  However, $b$ has a weight of $w(b)$, and each seller $s$ can accommodate a total weight of $c(s)$.  
Moreover, $P$ specifies pairs of buyers that cannot both be assigned (even forbidding assigning them to different sellers).  We want to assign buyers under these constraints to achieve a profit of at least $q$. Clearly, \cfa~is in NP.

\begin{lemma}
\label{kernel-cfa-para-B-full-app}
\cfa~has PK parameterized by $|B|$.
\end{lemma}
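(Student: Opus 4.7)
The plan is to reduce the number of sellers $|S|$ to a polynomial in $|B|$, since the remaining parts of the instance (the buyers, the conflict graph on $B$, and all numeric values) already occupy $\operatorname{poly}(|B|)$ bits by the size assumption on weights, profits, and capacities. I would first apply elementary cleaning rules: delete every seller $s$ with $c(s) < \min_{b \in N(s)} w(b)$, since it cannot host any single buyer and is useless; delete buyers with no feasible seller and update $P$ and $q$ accordingly; and round each remaining $c(s)$ down to $\max\{w(B') : B' \subseteq N(s),\; w(B') \leq c(s)\}$. The rounding preserves the set of feasible assignments and confines every capacity to the set $W = \{w(B') : B' \subseteq B\}$ of at most $2^{|B|}$ values.

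Next, I would group the remaining sellers by their profile $(N(s), c(s))$ and keep only $|B|$ representatives per profile. This is safe because any feasible $F$ assigns each buyer to at most one seller, and therefore uses at most $|B|$ sellers in total, so in particular at most $|B|$ of any fixed profile. Any extra sellers sharing a profile with those actually used in a solution are never needed, since their neighborhoods and capacities are identical to kept copies. After this step $|S|$ is at most $|B|$ times the number of distinct profiles.

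The main step, which I also expect to be the main obstacle, is to reduce the number of distinct profiles to $\operatorname{poly}(|B|)$. I would start with a dominance rule: call $(N', c')$ dominated by $(N, c)$ when $N' \subseteq N$ and $c' \leq c$; if a profile has more than $|B|$ dominators, then in any solution at least one dominator is unused and can substitute for the dominated one, so the dominated profile may be deleted. The difficult part will be arguing that after exhaustive application of this and companion rules, the number of surviving profiles is polynomial, because in the worst case an antichain in $2^B \times W$ can have $\binom{|B|}{\lfloor |B|/2 \rfloor}$ elements. Closing this gap will likely require a more refined combinatorial argument, for instance a sunflower- or representative-sets-style shrinking that fully exploits the $|B|$-bound on the number of sellers used in any solution, together with the fact that conflicting pairs further restrict which buyer subsets can simultaneously be parts in a feasible assignment. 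That argument is where the bulk of the proof effort will lie.
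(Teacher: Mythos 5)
You have correctly identified that the crux is bounding $|S|$, and your preliminary cleaning rules (removing sellers too small for every neighbor, removing unassignable buyers, and deduplicating sellers with identical profiles) are sound. But you also candidly flag, and I agree, that your dominance/profile argument has a real gap: after deleting dominated profiles, the surviving $(N(s), c(s))$ pairs form an antichain in $2^{B} \times W$, which can have $\binom{|B|}{\lfloor |B|/2 \rfloor}$ elements, and nothing in your sketch pushes this below exponential. The appeal to a ``representative sets'' or ``sunflower'' argument is just a hope at this point; the feasibility constraints here (capacities, conflicts, one seller per buyer) do not carry an obvious matroid or uniform set-system structure to plug into those machineries, so this would require substantial new work that the sketch does not provide.

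The paper's proof takes a much more elementary route that sidesteps profiles entirely, by bounding the \emph{degree of each buyer} instead of the number of seller types. Its rules are: (1) delete degree-$0$ vertices; (2) delete any edge $bs$ with $w(b) > c(s)$; and (3) if after (2) some buyer $b$ still has degree at least $|B|+1$, delete all edges at $b$ and give $b$ a fresh private seller $s$ with $c(s) = w(b)$. Rule (3) is safe in the reverse direction because at most $|B|-1$ buyers other than $b$ are assigned in any solution, hence at most $|B|-1$ of $b$'s original $\geq |B|+1$ neighbors host another buyer; some neighbor is completely free and (by rule (2)) has capacity at least $w(b)$, so $b$ can be reassigned there. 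After exhaustion every buyer has degree at most $|B|$ and every seller has positive degree, so $|S| \leq |B|^{2}$, and the numeric data is $|B|^{O(1)}$ bits by the stated size assumption. The idea you were missing is to localize the shrinking at the buyers, where the bound ``at most $|B|$ buyers are assigned'' bites directly, rather than trying to globally thin the space of seller profiles.
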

\begin{proof}
The reduction rules are as follows. (1) Delete vertices of degree $0$. (2) Delete edges $bs \in E$ such that $w(b) > c(s)$. (3) Assume that rule 2 is not applicable and that there is $b \in B$ of degree at least $|B| + 1$.  Then remove every edge of $E$ incident to $b$, add a new vertex $s$ in $S$ of capacity $c(s) = w(b)$, and add the edge $bs$ to $E$. Clearly, rules 1 and 2 are safe. Rule 3 is safe because if we have a solution for the original instance, then we may use it directly for the modified instance if $b$ is not assigned.  If $b$ is assigned to some seller, in the modified instance we just assign $b$ to the newly created $s$, achieving the same profit. 
Conversely, if we have a solution to the modified instance, we use the same solution for the original instance --- unless $b$ is assigned to $s$.  In that case, since $b$ has more than $|B|$ neighbors in the original instance, one of them has no assigned buyer in the solution to the modified instance, and we may reassign $b$ to this free neighbor.

Suppose we have applied the above rules until exhaustion. We know that every vertex of $B$ has degree at most $|B|$ by Rule 2 and Rule 3 and, since there are no isolated vertices by Rule 1, it follows that $|S| \leq |B|^2$.
Thus, the number of vertices is at most $|B| + |B|^2$.
Since the weights, profits, and capacities are exponential in $|B|$, each of them can be encoded in $|B|^{O(1)}$ bits. Thus, the size of the kernel is a polynomial in $|B|$.
\end{proof}

\begin{theorem}
\label{triangle-partition-pk-in-tc}
\textsc{Triangle Partition($tc$)} has a PK.
\end{theorem}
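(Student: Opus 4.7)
The plan is to give a polynomial compression of \textsc{Triangle Partition}$(tc)$ into \cfa, invoke Lemma~\ref{kernel-cfa-para-B-full-app}, and then Karp-reduce the resulting \cfa kernel back into \textsc{Triangle Partition}; such a polynomial Karp reduction exists because \cfa is in \NP and \textsc{Triangle Partition} is \NP-complete, so composing it with the compression gives a polynomial kernel.

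Let $T$ be a twin cover of $G$ with $|T| = k = tc$. Then $G - T$ is a disjoint union of cliques $K_1, \ldots, K_m$, and every vertex of $K_i$ shares the common neighborhood $N_i \subseteq T$; write $r_i = |K_i| \bmod 3$. I would first reject if $|V(G)| \not\equiv 0 \pmod 3$, then classify the triangles of any hypothetical partition into four kinds: (A) three vertices forming a triangle of $G[T]$; (B) an edge of $G[T]$ lying inside some $N_i$ plus one vertex of $K_i$; (C) a single $T$-vertex in some $N_i$ plus two vertices of $K_i$; (D) three vertices inside one $K_i$. Letting $x_i$ be the number of vertices of $K_i$ used by external type-B/C triangles, the remainder $|K_i| - x_i$ must be packed into type-D triangles, so $x_i \equiv r_i \pmod 3$; and since each external triangle eats at least one $T$-vertex, $\sum_i x_i \le 2k$. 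Hence I can safely trim every $K_i$ whose size exceeds an $O(k)$ threshold down to a canonical size congruent to $r_i$ modulo $3$.

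Next I would encode the trimmed instance as an \cfa instance. The main buyers are the $T$-fragments occupied by one external triangle --- singletons $\{t\}$ with weight $2$ and profit $1$ (type C), edges of $G[T]$ with weight $1$ and profit $2$ (type B), and triangles of $G[T]$ with weight $0$ and profit $3$ (type A) --- giving $O(k^3)$ such buyers; two buyers conflict iff they share a $T$-vertex, and the target $q$ is chosen to force every $T$-vertex to be covered by exactly one fragment. On the seller side, type-A buyers attach to a dummy seller, and the cliques are bundled by their class $(N_i, r_i)$, each class being represented by a gadget that both caps the total external weight it can absorb and enforces that the external usage drawn from each underlying clique is congruent to $r_i$ modulo $3$. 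Concretely I would split each clique's capacity into one ``residue'' sub-seller of capacity $r_i$ (absent when $r_i = 0$) and a sequence of capacity-$3$ ``chunk'' sub-sellers covering the rest; the residue and chunks are glued together by a constant number of profit-$0$ padding buyers per class, together with a corresponding offset to $q$, so that missing any required padding loses profit and thereby forces each chunk to be either empty or at weight exactly $3$, and the residue sub-seller to carry weight exactly $r_i$. Using equivalence classes rather than individual cliques is what keeps the padding-buyer count within $O(k^3)$, even though $m$ may be large.

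The main obstacle is designing and verifying this gadget so that \cfa-solutions correspond exactly to triangle partitions. The forward direction is easy: a triangle partition reads off the buyer assignment by sending each external triangle to its $T$-fragment buyer and the correct sub-seller of the clique it uses. The reverse direction is more delicate: from a \cfa-assignment reaching $q$, the gadget guarantees $x_i \equiv r_i \pmod 3$ for every clique, so the unused $|K_i| - x_i$ vertices of each clique (a multiple of $3$) can be packed into internal type-D triangles, and the interchangeability of vertices within each $K_i$ lets us realize the buyer-level assignments as concrete triangles with the correct adjacencies via the twin-cover property. Once the equivalence is established, Lemma~\ref{kernel-cfa-para-B-full-app} shrinks the \cfa instance to size polynomial in $|B| = O(k^3)$, and the generic Karp reduction back to \textsc{Triangle Partition} produces the polynomial kernel.
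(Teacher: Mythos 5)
Your high-level plan --- compress \textsc{Triangle Partition($tc$)} into \cfa, apply Lemma~\ref{kernel-cfa-para-B-full-app}, and Karp-reduce back using NP-completeness --- is exactly the paper's strategy, and your preliminary steps (the type-A/B/C/D classification, the observation that $x_i\equiv r_i\pmod 3$, the bound $\sum_i x_i\le 2k$, and trimming cliques to $O(k)$ size) are all correct and parallel the paper's. The divergence, and the gap, is in how the modular constraint $x_i\equiv r_i\pmod 3$ gets encoded into \cfa.

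You keep singletons, edges, and triangles of $G[T]$ as buyers with weights $2,1,0$, and then try to enforce the per-clique residue condition on the seller side via a residue-plus-chunks gadget glued with padding buyers. This has two concrete problems. First, \cfa capacities are one-sided upper bounds; there is no native mechanism for ``exactly at capacity'' or ``usage is a multiple of $3$,'' and your description of how padding buyers and a $q$-offset force a chunk to be exactly $0$ or $3$ is not actually spelled out --- as written (``profit-$0$ padding buyers'') it does not work at all, since zero-profit buyers cannot affect feasibility of the profit target. Second, and more seriously, the claim that padding buyers total $O(k^3)$ is unjustified: you need one gadget per equivalence class $(N_i,r_i)$, and the number of distinct neighborhoods $N_i\subseteq T$ can be as large as $\min(m,2^{k})$, so $|B|$ can blow up to roughly $2^k$ and Lemma~\ref{kernel-cfa-para-B-full-app} then only gives a kernel polynomial in $2^k$, not in $k$. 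Moreover, bundling several cliques into one class-level gadget can only control the \emph{aggregate} usage $\sum_{i\in\text{class}}x_i$, not each individual $x_i$; recovering per-clique residues from the aggregate would require additional lower-bound constraints ($x_i\ge r_i$) that \cfa also cannot express.

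The paper sidesteps all of this with two moves you did not make. (1) After trimming, there are at most $tc$ ``good'' cliques (those with $|C_i|\not\equiv 0\pmod 3$), since each must consume at least one $T$-vertex; the paper simply absorbs all good cliques into $X$, increasing $|X|$ to $O(tc^2)$, after which \emph{every} remaining clique has size divisible by $3$ and the residue condition becomes trivially $x_i\equiv 0\pmod 3$. (2) Rather than using single fragments as buyers, the paper takes $B=\binom{X}{3}\cup\binom{X}{6}$ and proves (the Claim in Theorem~\ref{triangle-partition-pk-in-tc}) that the $T$-fragments assigned to any one clique can always be regrouped into bundles of three $T$-vertices (one edge plus one singleton, using $3$ clique vertices; or three singletons, using $6$) or six $T$-vertices (three edges, using $3$), so that \emph{every buyer consumes a multiple of $3$ clique vertices by construction}. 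Since each clique's capacity is also a multiple of $3$, the residue condition is automatic and the plain upper-bound constraint of \cfa suffices --- no padding, no modular gadget, no lower bounds. This yields $|B|=O(tc^{12})$ with no dependence on the number of cliques. You would need to either adopt this grouping idea or give a concrete, verified gadget (and bound its size in $tc$ only) before your compression is complete.
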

\begin{proof}
We first provide a Karp reduction from \textsc{Triangle Partition($tc$)} to \cfa. Given an instance $(G, k)$ of \textsc{Triangle Partition($tc$)} together with a minimum twin-cover $X$ of $G=(V,E)$, where $|X| = tc$, let $C = \{C_1, \ldots, C_m\}$ be the partition of $V \setminus X$ into cliques.
Let $i\in [m]$. For each $C_i \in C$, if $C_i$ has more than $2tc+2$ vertices, then we repeatedly delete any $3$ vertices from it until its size satisfies that $2tc \leq |C_i| \leq  2tc+2$. The reduction is safe because at most $2tc$ vertices in $C_i$ form triangles with vertices outside $C_i$.
We call a clique in $C$ a good clique if it has a number of vertices that is not a multiple of 3, otherwise, we call it a bad clique.  
We may assume that there are at most $tc$ good cliques, since each good clique needs at least one vertex in $X$ to form a triangle. Now, we update $X$ by adding all good cliques into it, which is still a twin-cover of $G$. The number of vertices in all good cliques is $O(tc^2)$, so $|X|=O(tc^2)$. 
Thus, we may assume that we are given an instance $(G, k)$ of \textsc{Triangle Partition($tc$)} together with a twin-cover $X$ of $G=(V, E)$, where $|X| = O(tc^2)$, such that the partition $C = \{C_1, \ldots, C_m\}$ of $V \setminus X$ into cliques has only bad cliques.
The next result will be the main tool to relate \textsc{Triangle Partition($tc$)} and \cfa.

\begin{claim}
\label{instance-for-tip}
$G$ has a triangle partition iff there exists a partition $A = \{A_1, A_2, \ldots, A_l\}$ of $X$ into subsets of size $3$ or $6$, a map $f : A \rightarrow C \cup \{\emptyset\}$,  and weight $w : A \rightarrow \mathbb{N}$, such that 
\begin{enumerate}
    \item 
the following holds for each $A_i \in A$:
if $f(A_i) = \emptyset$, then $G[A_i]$ is a triangle;
otherwise, $f(A_i) = C_j$ for some $C_j$ satisfying $A_i \subseteq N(C_j)$.  

Also, if $|A_i| = 3$ and $G[A_i]$ has at least one edge, then $w(A_i) = 3$; 
if $|A_i| = 3$ and $G[A_i]$ has no edge, then $w(A_i) = 6$; 
if $|A_i| = 6$, then $G[A_i]$ has a matching of size $3$ and $w(A_i) = 3$,
\item 
for each $C_j \in C$, we have 
$\sum_{A_i \in f^{-1}(C_j)} w(A_i) \leq |C_j|$.
\end{enumerate}

\end{claim}
\begin{proof}
($\Rightarrow$) Let $T$ be a triangle partition of $G$.  We construct the partition $A$ and maps $f, w$.  For each $T_i \in T$ such that $T_i \subseteq X$, add $T_i$ to $A$ and put $f(T_i) = \emptyset$ ($w(T_i)$ does not matter).  This satisfies the desired conditions since $G[T_i]$ is a triangle.
Let $X'$ be the vertices of $X$ that are in a triangle of $T$ that also includes vertices in a clique of $C$. 
It remains to partition $X'$.  
For every clique $C_j$, consider the set $P_j = \{ C_j \cap T_i : T_i \in T, |C_j \cap T_i| \in \{1, 2\} \} \setminus \{\emptyset\}$.  Thus $P_j$ contains the ``portion of triangles'' that involve $C_j$, and the other portion of each such triangle is in $X'$. 
Note that the sum of sizes of the sets of $P_j$ must be a multiple of $3$, say $3q$ for some integer $q$. In addition, assume $P_j$ has $x$ sets of size 1 and $y$ sets of size 2. Then $3q = x + 2y$. 
(1) Repeatedly do the following process until $\min \{x,y\} = 0$: if there exist two elements $T_i\cap C_j, T_{i'} \cap C_j$ in $P_j$ such that their sizes are 1 and 2, respectively, then add $B = (T_i \cup T_{i'}) \setminus C_j$ to $A$ and both $x,y$ decrease by 1.  Moreover, put $f(B) = C_j$ and $w(B) = 3$; (2) Repeatedly do the following process until $x = 0$ ($y = 0$): if $x \neq 0$ (if $y \neq 0$), then, for any three elements $T_i\cap C_j, T_{i'}\cap C_j, T_{i''}\cap C_j$ in $P_j$, add $B = (T_i \cup T_{i'} \cup T_{i''}) \setminus C_j$ to $A$ and $x$ ($y$) decrease by 3.  Moreover, put $f(B) = C_j$ and $w(B) = 6$ ($w(B) = 3$). 
We can decrease $q$ by 1 during each repeat for the first step and the case $x\neq 0$ of the second step. For the case $y\neq 0$ of the second step, $3q = 2y$ means $q$ is an even number, so we can decrease $q$ by 2 during each repeat. Thus, after the two steps, we have $3q=0$ and all elements of $P_j$ are properly arranged. Moreover, $|B| = 3$ and $G[B]$ has at least one edge in the process (1), $|B| = 3$ and $G[B]$ has no edge in the case $y\neq 0$ of the process (2), $|B| = 6$ and $G[B]$ has a matching of size $3$  in the case $x\neq 0$ of the process (2).
Since $w(B)$ in the processes always equals 3 times the decrease of $q$, $\sum_{A_i \in f^{-1}(C_j)} w(A_i)\leq 3q$, which is at most $|C_j|$.

($\Leftarrow$) Suppose that $A, f$, and $w$ as in the statement exist.  We construct a triangle partition $T$.
For each $A_i$ such that $f(A_i) = \emptyset$, $G[A_i]$ is a triangle and we can add $A_i$ to $T$.
Consider $A_i$ such that $f(A_i) = C_j$ for some $j$.
Suppose that $|A_i| = 3$, with $A_i = \{x, y, z\}$.
If $|A_i| = 3$ and $G[A_i]$ has at least one edge, say $xy$, add to $T_i$ two triangles, one formed with $xy$ and a vertex of $C_j$, the other formed with $z$ and two vertices of $C_j$.  This uses $w(A_i) = 3$ vertices of $C_j$.
If $|A_i| = 3$ and $G[A_i]$ has no edge, add to $T_i$ three triangles, one for each of $x, y, z$, each using two vertices of $C_j$.  This uses $w(A_i) = 6$ vertices of $C_j$.
Suppose that $|A_i| = 6$.  Take a matching of size $3$ in $G[A_i]$ and use the three edges to form three triangles, each edge of the matching using a vertex of $C_j$.  This uses $w(A_i) = 3$ vertices of $C_j$. 
We do this for every $A_i$.  Since $\sum_{A_i \in f^{-1}(C_j)} w(A_i) \leq |C_j|$ for every $C_j$, each clique has enough vertices to use for the above process.
Moreover, every vertex of $X$ is included in a triangle in the above process, since $A_i$ partitions $X$.  Finally, note that the above uses, for each $C_j$, a number of vertices that is a multiple of $3$.  Therefore, the unused vertices of each clique can be split into triangles. 
\end{proof}

Now, for each instance of \textsc{Triangle Partition($tc$)}, 
we describe a corresponding instance of \cfa, which is used to decide whether a partition as described above exists.  We construct a graph $H = (B \cup S, E')$, functions $p, w, c$, conflicting pairs $P$ and integer $q$.
The vertices of $B$ and $S$ are in correspondence with some subsets of $V(G)$.  
First define $B = \binom{X}{3} \cup \binom{X}{6}$.  Add to $P$ the pair $\{b_1, b_2\} \in \binom{B}{2}$ if and only if the sets $b_1$ and $b_2$ have a non-empty intersection.  For each $b \in B$, assign the profit $p(b) = |b|$ and the weight
\begin{align*}
    w(b) = \begin{cases}
        3 & \mbox{if $|b| = 3$ and $G[b]$ has at least one edge;} \\
        6 & \mbox{if $|b| = 3$ and $G[b]$ has no edge;} \\
        3 & \mbox{if $|b| = 6$ and $G[b]$ has a matching of size $3$;} \\
        \infty &\mbox{otherwise.}
    \end{cases}
\end{align*}    
We next define $S$.  First, add each clique $C_i$ of $C$ as a vertex of $S$.  Then, for each triple $t = \{x, y, z\} \in \binom{X}{3}$ such that $G[t]$ is a triangle, add a vertex called $D_t$ to $S$.  
Summarizing, we have $S = C \cup \{D_t:  t \subseteq X$ and $G[t]$ induces a triangle$\}$.
For each $C_i \in C$, put the capacity $c(C_i) = |C_i|$ and for each triangle $t$ of $X$, put the capacity $c(D_t) = 3$. 
Now, define the edges $E'$ of $H$.  
Consider $t \in \binom{X}{3}  \cup \binom{X}{6}$, which is a vertex of $B$.  
If $G[t]$ is a triangle, add an edge from the vertex $t$ in $B$ to the vertex $D_t$ in $S$. Next, assume $G[t]$ is not a triangle.
Let $C(t) \subseteq C$ be the set of cliques of $C$ that contain $t$ in their neighborhood.
Add an edge from $t \in B$ to every $C_i \in C(t)$ that is in $S$.  
Finally, define $q = |X|$.
This completes the construction.

We show that $G$ admits a triangle partition iff the constructed instance of \cfa~admits a solution.
Suppose that $G$ admits a partition $T$ into triangles.  
Let $A$ be a partition of $X$ that satisfies Claim \ref{instance-for-tip}, with map $f$ and weights $w'$.  
Construct a set $F \subseteq E'$ of edges of $H$ as follows.
We define 
\begin{align*}
F = &\{ \{A_i, D_{A_i}\} : (A_i \in A) \wedge  (f(A_i) = \emptyset)\} \cup 
    \{ \{A_i, f(A_i)\} : (A_i \in A) \wedge (f(A_i) \in C)\}.
\end{align*}
Clearly, each $b \in B$ is incident to at most one edge of $F$.
Because $A$ is a partition, no two $A_i$'s intersect, and thus no conflicting pair of vertices of $B$ is incident to an edge of $F$. 
Consider each $s \in S$. If $s$ does not incident with any edge in $F$, then $\sum_{b : bs \in F} w(b) = 0  < c(s)$.
If $s$ is some $D_{A_i}$ such that $\{A_i, D_{A_i}\} \in F$, then $c(s) = 3$ since $G[A_i]$ is a triangle, and  $\sum_{b : bs \in F} w(b) = w(A_i) = 3$ since $G[A_i]$ has at least one edge. Thus, $\sum_{b : bs \in F} w(b) \leq c(s)$.
If $s$ is some $f(A_i)$ such that $\{A_i, f(A_i)\} \in F$, then 
\begin{align*}
\sum_{b : bs \in F} w(b) = \sum_{A_i \in f^{-1}(s)} w(A_i) = \sum_{A_i \in f^{-1}(s)} w'(A_i) \leq |s|  = c(s)
\end{align*}
by Claim \ref{instance-for-tip}. 
Consider the profit of the assigned buyers. Since $q = |X|$, $A$ partitions $X$, and $p(A_i) = |A_i|$ for each $A_i$, we have
\begin{align*}
\sum_{b \in V(F) \cap B} p(b) = \sum_{A_i\in A} p(A_i)  = \sum_{A_i\in A} |A_i| = |X| = q.
\end{align*}
Thus, $F$ is a solution to \cfa.

Conversely, suppose that there is a solution $F \subseteq E'$ to \cfa. Next, we construct a partition $A$  of $X$, a map $f$, and weights function $w'$ that satisfies Claim \ref{instance-for-tip}. 
Let $A = \{A_i \in B : A_i$ is incident to an edge of $F\}$.  By our construction of the conflicting pairs $P$, no two $A_i$'s intersect.  Moreover, since the profit is at least $q = |X|$, the sum of sizes of the $A_i$'s must be $|X|$, and thus $A$ partitions $X$. In addition, the size of each $A_i$ is either 3 or 6 since the size of any element of $B$ is 3 or 6. 
We define weight $w': A \rightarrow \mathbb{N}$ as follows: for any $A_i \in A$, $w'(A_i) = w(A_i)$. 
We define map $f: A \rightarrow C\cup \{\emptyset\}$ as follows: if $A_i \in A$ and $G[A_i]$ is a triangle, then $f(A_i) = \emptyset$; if $A_i \in A$ and $G[A_i]$ is not a triangle, then $f(A_i) = C_j$ such that $C_j \in C$ and $\{A_i, C_j\} \in F$. 
Consider $f$ and $w'$. Since $\emptyset \not\in C$, $G[A_i]$ is a triangle if $f(A_i) = \emptyset$.
Assume $f(A_i) = C_j$ for some $C_j \in C$. Then, $\{A_i, C_j\} \in F \subseteq E$ and $A_i \subseteq N(C_j)$ in $G$. 
$A_i$ must be one of the following three types: (1) $|A_i| = 3$ and $G[A_i]$ has at least one edge, (2) $|A_i| = 3$ and $G[A_i]$ has no edge, (3) $|A_i| = 6$ and $G[A_i]$ has a matching of size $3$, otherwise, $w(A_i) = \infty$ and $w(A_i) > c(C_j)$, a contradiction.
Let $|A_i| = 3$ and $G[A_i]$ has at least one edge. Then $w'(A_i) = w(A_i) = 3$.
Let $|A_i| = 3$ and $G[A_i]$ has no edge. Then $w'(A_i) = w(A_i) = 6$. 
Let $|A_i| = 6$. Then, $A_i$ must be the type (3). Thus, $G[A_i]$ has a matching of size $3$ and $w'(A_i) = w(A_i) = 3$.
According to the definitions of $w'$ and $f$ and the fact that each seller is assigned buyers of at most its
capacity in $H$, we have, for each $C_j \in C$,
\begin{align*}
\sum_{A_i \in f^{-1}(C_j)} w'(A_i) = \sum_{A_i \in f^{-1}(C_j)} w(A_i) = \sum_{\{A_i, C_j\} \in F} w(A_i) \leq c(C_j) = |C_j|.
\end{align*}
Thus, the constructed $A$, $f$, and $w'$ satisfy the requirements of Claim \ref{instance-for-tip}.

Clearly, in the produced instance, the size of $B$ is $tc^{O(1)}$. Based on Lemma \ref{kernel-cfa-para-B-full-app} and the Karp reduction above, we have a compression from \textsc{Triangle Partition($tc$)} to \cfa~of size $|B|^{O(1)} = tc^{O(1)}$. Moreover, since \cfa~is in NP, 
there exists a Karp reduction from \cfa~to the NP-complete problem \textsc{Triangle Partition($tc$)}. Thus, \textsc{Triangle Partition($tc$)} has a PK.
\end{proof}

\begin{theorem}
 \textsc{Clique($tc$)} has a PTK.
\end{theorem}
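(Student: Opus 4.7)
The plan is to exploit the twin-cover structure to show that any clique in $G$ decomposes into a subset of $X$ together with vertices from at most one component of $G - X$, so that the problem reduces to finding a clique in a subgraph induced by a subset of $X$ of size at most $tc$. First, compute a minimum twin-cover $X$ of $G=(V,E)$ with $|X|=tc$, and let $C_1,\ldots,C_m$ be the cluster components of $G-X$, with $N_i := N(C_i)\cap X$. By the definition of twin-cover, every vertex of $C_i$ has exactly $N_i$ as its neighborhood in $X$, and since $G-X$ is a disjoint union of cliques, no clique of $G$ can contain vertices from two distinct components $C_i, C_j$.

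Next I would argue a swap lemma: if $K$ is a clique of $G$ meeting $C_i$, replacing $K\cap C_i$ by the whole of $C_i$ yields a clique at least as large, because every vertex of $C_i$ is adjacent to every vertex of $K\setminus C_i \subseteq N_i$. Hence the maximum clique of $G$ either is contained in $G[X]$, or has the form $C_i \cup K_i$ for some $i$, where $K_i$ is a clique of $G[N_i]$ of size $\omega(G[N_i])$. This gives the reduction: $G$ has a clique of size at least $k$ iff either $G[X]$ has a clique of size at least $k$, or some $C_i$ satisfies $|C_i| \geq k$, or some $C_i$ satisfies that $G[N_i]$ contains a clique of size at least $k-|C_i|$.

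Turning this into a polynomial Turing kernelization is now immediate. The algorithm first tests the trivial cases, and then for each $i \in [m]$ with $|C_i|<k$ queries the \cp oracle on the instance $(G[N_i],\,k-|C_i|)$, and finally queries it on $(G[X],\,k)$. Since $N_i\subseteq X$ and $|X|=tc$, every query has size $O(tc^2)$, and there are at most $m+1\le n+1$ queries, each computable in polynomial time. The algorithm outputs \textsc{yes} iff at least one query returns \textsc{yes}, and correctness follows from the characterization above.

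There is no real obstacle here; the only point requiring care is the swap argument, which relies crucially on the twin-cover property that all vertices in a component of $G-X$ share the same $X$-neighborhood. Everything else (disjointness of components in any clique, bounding query sizes by $O(tc^2)$) is immediate from the structure of twin-cover.
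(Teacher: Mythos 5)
Your proof is correct and follows essentially the same strategy as the paper: every clique of $G$ meets at most one cluster component of $G-X$, the shared-neighborhood (twin) property lets you replace the clique's portion in a component by the entire component, and the problem then reduces to $O(n)$ oracle calls on subgraphs with at most $tc+O(1)$ vertices. The one substantive difference is in how the queries are phrased, and here your version is actually the more careful one. The paper queries, for each component $C$ with representative $v_C$, whether $G[T\cup\{v_C\}]$ has a clique of size at least $k-|V(C)|+1$, and asserts that the disjunction of these answers matches the answer for $G$. Read literally, that ``iff'' can fail in the backward direction: a clique of size $k-|V(C)|+1$ lying entirely inside $T$ and not containing $v_C$ does not lift to a clique of size $k$ in $G$ when $|V(C)|>1$, so such a query could return a false positive. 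Your formulation avoids this by querying $G[N_i]$ with budget $k-|C_i|$ (which forces the witness clique to sit inside the common neighborhood of $C_i$ and hence to extend by all of $C_i$), treating the $|C_i|\geq k$ case separately, and adding the extra query $(G[X],k)$ to catch cliques contained in the cover. So: same decomposition and same query count, but your swap lemma and the explicit restriction to $G[N_i]$ supply the precision that the paper's one-line argument glosses over.
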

\begin{proof}
Given an instance $(G, k)$ of \textsc{Clique($tc$)} together with a minimum twin-cover $T$ of $G=(V,E)$. Then $G[V\setminus T]$ is a cluster. 
For every connected component $C$ of $G[V\setminus T]$, produce a new instance $(G[T\cup \{v_C\}],k)$ for \textsc{Clique($tc$)}, where $v_C\in V(C)$. The number of the connected components of $G[V\setminus T]$, which is at most $|V|$, equals the number of the produced new instances each of whose size equals $|T|+1 = tc + 1$. Clearly, $G$ has a clique of size at least $k$ iff at least one of the produced new graphs $G[T\cup \{v\}]$ contains a clique of size at least $k - |V(C)| +1$.
\end{proof}

\begin{theorem}
\textsc{Vertex Cover($tc$)} and \textsc{Odd Cycle Transversal($tc$)} have PKs.
\end{theorem}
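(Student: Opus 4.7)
The plan is to reduce the graph structurally using the twin-cover and then invoke a polynomial kernel for the solution-size parameterization of each problem. In polynomial time, compute a minimum twin-cover $T$ of $G$ with $|T| = tc$, and write $G - T$ as a disjoint union of cliques $C_1, \ldots, C_m$, where all vertices of a given $C_i$ share a common neighborhood $N_i \subseteq T$.

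For \textsc{Vertex Cover}$(tc)$, apply the following reduction rule: whenever some $C_i$ has $|C_i| \geq 2$, pick any $v \in C_i$, delete $v$ from $G$, and decrement $k$ by one. This is safe because any vertex cover of $G$ must contain at least $|C_i|-1$ vertices of $C_i$, and since the vertices of $C_i$ are pairwise twins we may, by symmetry, assume $v$ is among them: if $v$ is not in a given cover, swap $v$ with a vertex of $C_i$ that is, which is an automorphism of $G$ and thus preserves the cover property. After exhaustion every $C_i$ is a singleton, so $T$ is a vertex cover of the reduced graph and $vc(G) \leq tc$. If $k \geq tc$, output a trivial yes-instance; otherwise $k < tc$, and applying the classical $O(k^2)$-vertex kernel for \textsc{Vertex Cover} parameterized by solution size~\cite{DBLP:journals/jal/ChenKJ01} yields a kernel of size $O(tc^2)$.

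For \textsc{Odd Cycle Transversal}$(tc)$ the plan is analogous, but cliques are reduced to size at most two. A clique of size $\geq 3$ in $G - T$ is a triangle, so any OCT must contain at least $|C_i| - 2$ of its vertices; the same twin-swap argument shows that any specific vertex of $C_i$ may be assumed to be in the OCT, so we safely delete it and decrement $k$ by one. After exhaustion $G - T$ is a disjoint union of edges and isolated vertices, hence bipartite, so $T$ is itself an OCT of $G$ and the OCT number of $G$ is at most $tc$. If $k \geq tc$ we return yes; otherwise $k < tc$, and invoking the known polynomial kernel for \textsc{Odd Cycle Transversal} in the solution-size parameterization returns a kernel of size polynomial in $tc$.

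The main subtlety is the verification of the twin-swap reduction: swapping two twin vertices between the solution and its complement must preserve both the cover-edges property (for \textsc{Vertex Cover}) and the bipartite-remainder property (for \textsc{Odd Cycle Transversal}). This is routine once one observes that such a swap is induced by an automorphism of $G$, so that the resulting set is the image under that automorphism of a valid solution. After this is in place, the result follows by composing our twin-cover reductions with the known polynomial kernels parameterized by solution size, which is legitimate because the structural reductions guarantee $k \leq tc$ on every surviving instance; the non-trivial black box is the OCT kernel, whereas the VC case admits a direct Buss-style finish giving the same $O(tc^2)$ bound without invoking external kernels.
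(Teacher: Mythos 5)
Your structural reduction is the same as the paper's: shrink each clique component of $G - T$ (to one vertex for \textsc{Vertex Cover}, to two vertices for \textsc{Odd Cycle Transversal}), decrementing $k$ accordingly, and justify safety by the twin/automorphism swap; this part is correct in both cases. The \textsc{Vertex Cover} half then matches the paper exactly: after shrinking, $T$ is a vertex cover, so $vc \leq tc$ and the Chen--Kanj--Jia kernel finishes.

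The gap is in the \textsc{Odd Cycle Transversal} half, in the choice of black box. You close by ``invoking the known polynomial kernel for \textsc{Odd Cycle Transversal} in the solution-size parameterization,'' but the only known polynomial kernel for OCT parameterized by the solution size is the \emph{randomized} matroid-based kernel of Kratsch and Wahlstr\"om; no deterministic polynomial kernel for OCT$(k)$ is known. Plugging that in only yields a randomized polynomial kernel for OCT$(tc)$, which is weaker than the stated claim. The paper avoids this by observing that, after the reduction, $G - T$ is a disjoint union of edges and isolated vertices, so $T$ is not merely an odd cycle transversal but a \emph{feedback vertex set} of the reduced graph; it then invokes the (deterministic) polynomial kernel for OCT parameterized by feedback vertex set from Jansen and Kratsch. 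To repair your argument, use exactly the same reduction rule but note the stronger structural fact that $fvs \leq tc$ on the reduced instance and cite the OCT$(fvs)$ kernel instead of OCT$(k)$.
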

\begin{proof}
Assume problem $Q$ is either \textsc{Vertex Cover($tc$)} or \textsc{Odd Cycle Transversal($tc$)}. Given an instance $(G, k)$ of $Q$ together with a minimum twin-cover $T$ of $G=(V,E)$. Assume $C$ is a connected component (a complete graph) of $G[V\setminus T]$.
Let $Q$ be \textsc{Vertex Cover($tc$)}. Since at least $|V(C)| - 1$ vertices of $C$ are in the solution, we repeatedly do the following process for every $C$ in $G[V\setminus T]$ with more than one vertex: delete all but one vertex of $C$ and $k = k - |V(C)| + 1$. Now, parameter $|T|$ is at least the vertex cover number of $G$. Thus, we can obtain a $2 tc$ kernel by using the kernelization for $\textsc{Vertex Cover}(vc)$ \cite{DBLP:journals/jal/ChenKJ01}.
Let $Q$ be \textsc{Odd Cycle Transversal($tc$)}. Suppose $|V(C)|\geq 3$. Since at least $|V(C)| - 2$ vertices of $C$ are in the solution, we repeatedly do the following process for every $C$ in $G[V\setminus T]$ with more than two vertices:  delete all but two vertices of $C$ from $G$, $k = k - |V(C)| + 2$. Now, parameter $|T|$ is at least the feedback vertex set number of $G$. Thus, we can obtain a polynomial kernel by using the kernelization for $\octp$ in parameter feedback vertex set \cite{DBLP:conf/iwpec/JansenK11}.
\end{proof}

\begin{corollary}
\textsc{Independent Set($tc$)} has a PK.
\end{corollary}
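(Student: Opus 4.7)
The plan is to leverage the standard duality between \textsc{Independent Set} and \textsc{Vertex Cover}: in any graph $G$ on $n$ vertices, a set $I \subseteq V(G)$ is an independent set if and only if $V(G) \setminus I$ is a vertex cover, so $G$ has an independent set of size at least $k$ if and only if $G$ has a vertex cover of size at most $n - k$. Crucially, the twin-cover number depends only on $G$ and is identical for the two problems, so $tc(G)$ can be taken as the parameter on either side of this equivalence.

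The procedure I would carry out has three steps. First, given an instance $(G, k)$ of \textsc{Independent Set}$(tc)$ together with a minimum twin-cover $T$ of $G$, I would reformulate it as the equivalent instance $(G, n - k)$ of \textsc{Vertex Cover}$(tc)$, which retains the very same twin-cover $T$ of size $tc$. Second, I would invoke the polynomial kernelization for \textsc{Vertex Cover}$(tc)$ established in the preceding theorem to produce an equivalent vertex cover instance $(G', k')$ whose total size is bounded by a polynomial in $tc$. Third, I would translate $(G', k')$ back into an \textsc{Independent Set} instance $(G', |V(G')| - k')$, which is equivalent to the original instance by the same duality and whose encoding size is polynomial in $tc$.

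Correctness is immediate from the chain of equivalences
\[
(G,k) \in \textsc{IS} \iff (G, n-k) \in \textsc{VC} \iff (G', k') \in \textsc{VC} \iff (G', |V(G')| - k') \in \textsc{IS},
\]
combined with the correctness of the \textsc{Vertex Cover}$(tc)$ kernel. There is essentially no obstacle: the only point to verify is that the final independent set instance has size polynomial in $tc$, which follows because the previous theorem bounds $|V(G')|$ (and hence also $|V(G')| - k'$) by a polynomial in $tc$.
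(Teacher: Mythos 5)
Your proof is correct and matches the paper's intended (unstated) argument: the paper itself uses exactly this Gallai-type duality between \textsc{Independent Set} and \textsc{Vertex Cover} in the vertex-cover section, and the corollary here follows by the same transfer applied to the preceding \textsc{Vertex Cover}$(tc)$ kernelization. One point worth being explicit about, which you handle only implicitly, is that the output parameter $tc(G')$ is bounded because $tc(G') \leq |V(G')| = tc^{O(1)}$; with that noted the argument is complete.
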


\section{Parameterization by neighborhood diversity}
Every graph property expressible in monadic second-order logic (MSO$_1$) has a PC parameterized by $nd$ \cite{DBLP:journals/jcss/GanianSS16}. We provide a meta-theorem to provide PCs for the problems in Table \ref{summary-table-kernels} that are not covered by MSO$_1$.
We say a decision problem $Q$ is a \textit{typical graph problem} if the instance of $Q$ is $(G,k)$, where $G$ is an undirected graph and $k$ is the parameter that can be encoded in $O(\log |G|)$ bits. 
A folklore theorem on parameterized complexity says that every FPT problem has a kernel. Analogously, we can show that every problem that can be solved in $2^{O(nd^c)} |G|^{O(1)}$ time admits PC when parameterized by $nd$.

\begin{theorem}
\label{core PC lemma on nd}	
Let $Q$ be a \textit{typical graph problem} that admits a $2^{O(nd^c)} |G|^{O(1)}$ time algorithm for some constant $c > 0$. Then $Q$ has a compression of bitlength $O(nd^{c+1} + nd^2)$. 
\end{theorem}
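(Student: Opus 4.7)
The plan is to adapt the folklore argument that every FPT problem admits a kernel, exploiting the compact description of graphs of bounded neighborhood diversity in order to obtain the claimed bitlength bound.

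First I would compute the minimum neighborhood partition $P = \{V_1, \ldots, V_{nd}\}$ of the input graph $G$ in polynomial time via the algorithm of \cite{DBLP:journals/algorithmica/Lampis12}. Write $T(nd) = 2^{O(nd^c)}$ for the time bound from the hypothesis, so that $Q$ can be decided in $T(nd) \cdot |G|^{O(1)}$ time. The main step is a case split on whether $T(nd) \le n$, where $n = |V(G)|$. If so, the given algorithm runs in time polynomial in $n$, so we decide $(G,k) \in Q$ directly and output a single-bit YES/NO instance of a trivial reference language, at constant bitlength.

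In the opposite case we have $n < T(nd)$, whence $\log n = O(nd^c)$. In this regime, encode $(G,k)$ compactly as follows: $O(nd^2)$ bits for the adjacency matrix of the quotient graph $G_{/P}$; $O(nd)$ bits to record, for each of the $nd$ types, whether it is a clique type or an independent type; and $nd \cdot \lceil \log n \rceil = O(nd^{c+1})$ bits to store the cardinalities $|V_1|, \ldots, |V_{nd}|$. Since $Q$ is a \emph{typical graph problem}, $k$ is encodable in $O(\log n) = O(nd^c)$ bits and is absorbed. Summing, the total bitlength is $O(nd^{c+1} + nd^2)$. To turn this into a compression, define a target problem $Q'$ whose instances are exactly such compact encodings and which accepts precisely when the reconstructed pair $(G,k)$ lies in $Q$; since the paper's definition of compression allows the target to be an arbitrary problem $Q'$, and $Q$ is decidable, this is legitimate.

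The main obstacle is bookkeeping rather than a deep difficulty: one must verify that the hidden constants in $T(nd) = 2^{O(nd^c)}$ do not inflate the bitlength of the compact encoding, which follows directly from the bound $\log n = O(nd^c)$ active in the branch where the encoding is emitted, and that the parameter $k$ fits in the target bound, which the ``typical graph problem'' hypothesis guarantees since $k$ uses $O(\log |G|) = O(\log n)$ bits. Once these are checked, the two branches together yield a polynomial-time compression of bitlength $O(nd^{c+1} + nd^2)$, as required.
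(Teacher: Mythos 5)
Your proposal is correct and follows essentially the same route as the paper: compute the minimum neighborhood partition, case-split on whether the $2^{O(nd^c)}$ factor is polynomial in $n$ (so that the given algorithm decides the instance outright), and otherwise exploit $\log n = O(nd^c)$ to encode the labeled and weighted quotient graph together with $k$ in $O(nd^{c+1} + nd^2)$ bits, letting the target problem $Q'$ accept exactly those encodings whose reconstructed instance is a yes-instance of $Q$. The only cosmetic difference is that you phrase the case split via $T(nd) \le n$ rather than $nd^c \le \log|V|$, and you make the handling of hidden constants in the exponent slightly more explicit.
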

\begin{proof}
We provide a compression from $Q$ to a problem $Q'$ which will be defined later. 
Let $(G,k)$ be an instance of $Q$, where $G=(V,E)$.  We can obtain the minimum neighborhood partition $P$ of $V$ in polynomial time. If $nd^c \leq \log |V|$, then decide the input using the $2^{O(nd^c)} |G|^{O(1)} = |G|^{O(1)}$ time algorithm. Assume $nd^c \geq \log |V|$ henceforth. Consider the quotient graph $G_{/P} =(V', E')$, where a vertex $v_M$ of $V'$ is corresponding to the type $M$ in $P$. First, label $v_M$ with $0$ if $M$ is an independent type, and label $v_M$ with $1$ if $M$ is a clique type. Secondly, assign a weight $|M|$ to $v_M$ for each $v_M \in V'$, which can be encoded in $\log |M| \leq nd^c$ bits. Clearly, the labeled quotient graph can be encoded in $O(nd^{c+1} + nd^2)$ bits, because the edges and vertices of $G_{/P}$ can be encoded in $O(nd^2)$ bits, and the labels of the vertices of $G_{/P}$ can be encoded in $O(nd^{c+1})$ bits. In addition, we say $G$ is the original graph of the labeled quotient graph. We now define the problem $Q'$ as follows. The input of the problem is a labeled quotient graph and a parameter $k$, the objective is to decide whether the original graph of the labeled quotient graph together with the parameter $k$ is a yes instance of $Q$. By the definition of typical graph problems, parameter $k$ can be encoded in  $O(\log |G|) = O(nd^c)$ bits. Thus, the size of the instance of $Q'$ is  $O(nd^{c+1} + nd^2)$ bits.
\end{proof}

Since every graph property expressible in MSO$_1$ can be solved in time 
$2^{O(nd)}|G|^{O(1)}$ \cite{DBLP:journals/algorithmica/Lampis12}, the result of Theorem \ref{core PC lemma on nd} for PC covers problems that are potentially outside MSO$_1$.
The following problem can be solved in $2^{O(nd)}|G|^{O(1)}$ time: 
\hcp \cite{DBLP:journals/talg/BergerKMV20, DBLP:journals/jcss/KowalikLNSW22, DBLP:journals/algorithmica/Lampis12}, \cvcp \cite{DBLP:journals/tcs/BergougnouxK19}, \cdsp \cite{DBLP:journals/tcs/BergougnouxK19}, \stp \cite{DBLP:journals/tcs/BergougnouxK19}, \dsp \cite{DBLP:conf/mfcs/BodlaenderLRV10}, \octp \cite{DBLP:journals/corr/abs-2201-04678}, \vcp \cite{DBLP:journals/algorithmica/FominLMT18}, \isp \cite{DBLP:journals/algorithmica/FominLMT18}, \cp \cite{DBLP:journals/algorithmica/FominLMT18}, \fvsp \cite{DBLP:journals/algorithmica/FominLMT18}, \imp \cite{DBLP:journals/algorithmica/FominLMT18}, and \cnp \cite{DBLP:conf/iwpec/GajarskyLO13}, where the facts that $nd(G) \leq cw(G) +1$ \cite{DBLP:journals/algorithmica/Lampis12} and $mw(G)\leq nd(G)$ are needed when some results of the references are used. 
In addition, \pitp \cite{DBLP:journals/dam/Knop20} can be solved in $2^{nd^{O(1)}}|G|^{O(1)}$ time.

\begin{corollary} \label{quadratic compression coro}
Parameterized by neighborhood diversity, the following problems have quadratic compressions: \stp, \imp, \cnp, \hcp, \cdsp, \dsp, \cp, \isp, \vcp, \fvsp, \octp, \cvcp.
Moreover, \pitp  admits a PC.
\end{corollary}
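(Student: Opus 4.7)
The plan is to apply Theorem \ref{core PC lemma on nd} directly or with a tiny extension to each problem on the list. The first step is to check that each listed problem is a \textit{typical graph problem}: in every case the instance is $(G,k)$ with $k \le |V(G)|$, so $k$ fits in $O(\log|G|)$ bits. The only exception is \stp, whose input also contains a terminal set $K \subseteq V(G)$; since vertices inside the same module of the minimum neighborhood partition have identical neighborhoods, they are interchangeable as candidates for being terminals, so it suffices to augment the labeled quotient graph by recording, for each $v_M \in V'$, the integer $|M \cap K|$. This extra label uses at most $\log|M| = O(nd^c)$ bits per module, leaving the asymptotic bitlength $O(nd^{c+1} + nd^2)$ from Theorem \ref{core PC lemma on nd} unchanged.

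The second step is to quote, for each problem, a known algorithm that runs in time $2^{O(nd)}|G|^{O(1)}$. The paragraph preceding the corollary already lists such algorithms for \hcp, \cvcp, \cdsp, \stp, \dsp, \octp, \vcp, \isp, \cp, \fvsp, \imp, and \cnp, in some cases obtained through the inequalities $nd(G)\le cw(G)+1$ and $mw(G)\le nd(G)$. Instantiating $c=1$ in Theorem \ref{core PC lemma on nd} therefore yields a compression of bitlength $O(nd^2)$, which is what is meant by a quadratic compression. For \pitp only a $2^{nd^{O(1)}}|G|^{O(1)}$-time algorithm is available; here $c$ is some constant, and the theorem produces a compression of bitlength $nd^{O(1)}$, that is, a polynomial compression (but not necessarily quadratic), which is exactly what the corollary claims.

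The main obstacle is bookkeeping rather than any genuine mathematical difficulty. For a handful of problems the cited references state the single-exponential running time in terms of clique-width or modular-width rather than neighborhood diversity, so one has to first translate the bound through the inequalities $nd(G)\le cw(G)+1$ and $mw(G)\le nd(G)$ before invoking Theorem \ref{core PC lemma on nd}. Once each $2^{O(nd)}|G|^{O(1)}$ (resp.\ $2^{nd^{O(1)}}|G|^{O(1)}$) bound has been verified and the minor terminal-set adjustment for \stp has been made, the corollary is an immediate consequence of Theorem \ref{core PC lemma on nd}.
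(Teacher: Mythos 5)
Your proposal is correct and follows the same approach as the paper: it is an immediate application of Theorem~\ref{core PC lemma on nd} with $c=1$ for the twelve problems admitting $2^{O(nd)}|G|^{O(1)}$-time algorithms (giving $O(nd^2)$-bit compressions) and with an arbitrary constant $c$ for \pitp (giving a polynomial compression). The one place where you go beyond the paper's terse presentation is worth keeping: \stp is not, strictly speaking, a \emph{typical graph problem} in the paper's sense because its instance carries a terminal set $K$, and your fix of annotating each quotient-graph vertex $v_M$ with $|M \cap K|$ (which is sound because vertices in the same module of the neighborhood partition have identical neighborhoods and are therefore interchangeable as terminals) is exactly the additional bookkeeping needed to make the meta-theorem apply to \stp without changing the asymptotic bitlength.
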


Note that \cp, \vcp, and \isp do not admit compressions of size $O(nd^{2-\epsilon})$ unless NP $\subseteq$ coNP/poly \cite{DBLP:journals/jacm/DellM14}. 
In addition, it is worth mentioning that, in Theorem \ref{core PC lemma on nd}, if $Q$ is NP-complete and $Q'$ is in NP, then $Q$ admits a PK according to Theorem 1.6 of \cite{fomin2019kernelization}. Thus, PKs for some problems here, such as \cp, are obtained straightforwardly. 
We do not specify them here.

\section{Parameterization by modular-width}

We first introduce the concept of \textit{and/or-cross-composition} proposed in \cite{DBLP:journals/siamdm/BodlaenderJK14}, which is a technique for proving polynomial compression lower bounds.

\begin{definition}[Polynomial equivalence relation \cite{DBLP:journals/siamdm/BodlaenderJK14}]
An equivalence relation $R$ on $\Sigma^*$ is called a \textit{polynomial equivalence relation} if the following two conditions hold:
 \begin{enumerate}
 \item
there is an algorithm that given two strings $x,y\in \Sigma^*$ decides whether $x$ and $y$ belong to the same equivalence class in $(|x|+|y|)^{\Oh(1)}$ time;
 \item
for any finite set $S \subseteq \Sigma^*$ the equivalence relation $R$ partitions the elements of $S$ into at most $(\max_{x\in S}|x|)^{\Oh(1)}$ classes.
\end{enumerate}
\end{definition}

\begin{definition}[And-cross-composition (or-cross-composition) \cite{DBLP:journals/siamdm/BodlaenderJK14}]
Let $L\subseteq \Sigma^*$ be a set and let $Q\subseteq \Sigma^* \times \mathbb{N}$ be a parameterized problem. We say that $L$ \textit{and-cross-composes} (\emph{or-cross-composes}) into $Q$ if there is a polynomial equivalence relation $R$ and an algorithm which, given $t$ strings $x_1,\ldots,x_t$ belonging to the same equivalence class of $R$, computes an instance $(y,k)\in \Sigma^* \times \mathbb{N}$ in time polynomial in $\sum_{i=1}^{t} |x_i|$ such that:
 \begin{enumerate}
\item 
the instance $(y,k)$ is yes for $Q$ iff all instances $x_i$ are yes for $L$ (at least one instance $x_i$ is yes for $L$);
\item 
the parameter value $k$ is bounded by a polynomial in $\max_{i=1}^{t} |x_i|+\log t$.
\end{enumerate}
\end{definition}

\begin{theorem}[\cite{DBLP:journals/siamdm/BodlaenderJK14}]
Let $L$ be an NP-hard problem under Karp reductions. If $L$ and/or-cross-composes into a parameterized problem $Q$, then $Q$ does not admit a PC unless NP $\subseteq$ coNP/poly.
\end{theorem}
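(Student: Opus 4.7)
The plan is to assume for contradiction that $Q$ admits a polynomial compression of bitsize $p(k)$ for some polynomial $p$, and to derive from this hypothesis a distillation-type algorithm for $L$ whose existence would force $\mathsf{NP} \subseteq \mathsf{coNP}/\mathsf{poly}$. For the or-case this will invoke the Fortnow--Santhanam theorem on or-distillation, while for the and-case it will invoke Drucker's theorem on and-distillation.

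First I would reduce to instances of equal length. Given $t$ instances $x_1,\ldots,x_t$ of $L$, pad each to length $n := \max_i |x_i|$. The polynomial equivalence relation $R$ then partitions the padded inputs into at most $n^{O(1)}$ equivalence classes. I would apply the cross-composition algorithm separately to each non-empty class to obtain an instance $(y_j, k_j)$ of $Q$ with $k_j$ bounded by a polynomial in $n + \log t$, and then feed $y_j$ through the hypothetical polynomial compression to obtain an instance $y_j' \in Q'$ of bitsize $p(k_j) \leq (n + \log t)^{O(1)}$ for some target language $Q'$ that, without loss of generality, lies in $\mathsf{NP}$.

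Second, I would combine the compressed instances across buckets. In the or-case, $\bigvee_j (y_j' \in Q')$ is true iff at least one original $x_i$ is a yes-instance of $L$; in the and-case, $\bigwedge_j (y_j' \in Q')$ is true iff all $x_i$ are yes-instances. Encoding this disjunction or conjunction as membership in a single new language $Q''$ (still in $\mathsf{NP}$, since we take only $n^{O(1)}$ disjuncts/conjuncts) yields a polynomial-time map from $t$ instances of $L$ of length $n$ to a single instance of $Q''$ of total size $n^{O(1)} \cdot (n + \log t)^{O(1)}$. Choosing $t$ as large as, say, $2^n$, this size remains polynomial in $n$ while $t$ is exponential, which is exactly an or-distillation (respectively and-distillation) of $L$ into $Q''$.

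The main obstacle is not the construction above, which is bookkeeping engineered to match the cross-composition definition, but invoking the deep complexity-theoretic consequences that forbid such distillations. The or-case rests on the Fortnow--Santhanam theorem, which shows that any or-distillation of an NP-hard language into any language implies $\mathsf{NP} \subseteq \mathsf{coNP}/\mathsf{poly}$; the and-case rests on Drucker's substantially harder generalization that yields the same conclusion for and-distillations. Once these are in hand, the assumed polynomial compression for $Q$ is ruled out unless $\mathsf{NP} \subseteq \mathsf{coNP}/\mathsf{poly}$, completing the argument.
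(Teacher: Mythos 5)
The paper does not reproduce a proof of this theorem; it is stated as an imported black box from Bodlaender, Jansen, and Kratsch, so there is no in-paper argument to compare against. Your sketch follows the standard route used in that reference: group the inputs by the polynomial equivalence relation into $n^{O(1)}$ classes, cross-compose each class, run the hypothetical compression on each output, combine the $n^{O(1)}$ compressed instances into one instance of a single target language $Q''$, and observe that this is an or-distillation (respectively and-distillation) of the NP-hard $L$, to which Fortnow--Santhanam (respectively Drucker) applies. Two small points worth tightening. First, your remark that the compression target $Q'$ ``without loss of generality lies in NP'' is neither needed nor justified --- a polynomial compression is allowed to target an arbitrary language --- but this is harmless because both the or- and and-distillation lower bounds already hold for an arbitrary target language. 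Second, the output size bound $n^{O(1)}(n+\log t)^{O(1)}$ is polynomial in $n$ alone only after one enforces $\log t = O(n)$, which requires first deduplicating the input strings (there are at most $2^{O(n)}$ distinct ones); you gesture at this by ``choosing $t$ as large as $2^n$,'' but it should be phrased as a preprocessing step of the distillation algorithm rather than as a choice of $t$.
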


We provide PC lower bounds for each problem in this section using the and/or-cross-composition technique \cite{DBLP:journals/siamdm/BodlaenderJK14}. For each and/or-cross-composition from $L$ to $Q$ and its related polynomial equivalence relation $R$ on $\Sigma^*$ in this paper, there will be an equivalent class under $R$, which is called a bad class, including all strings each of which is not a valid instance of $L$. Since the bad class can be handled trivially, we may assume that $\Sigma^*$ only includes valid instances of $L$.

\begin{theorem}
\label{clique-no-pc-mw}
\textsc{Clique($mw$)} does not admit a PC unless NP $\subseteq$ coNP/poly.
\end{theorem}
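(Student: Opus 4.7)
The plan is to or-cross-compose the NP-hard problem \cp into \textsc{Clique}$(mw)$ using the framework reviewed just before the theorem. I would declare two \cp-instances $(G,k)$ and $(G',k')$ to be $R$-equivalent iff $|V(G)|=|V(G')|$ and $k=k'$ (malformed strings forming a single ``bad'' class that maps to a trivial no-instance). Given $t$ $R$-equivalent instances $(G_1,k),\ldots,(G_t,k)$ on $n$ vertices each, I would simply output $(H,k)$, where $H = G_1 \sqcup G_2 \sqcup \cdots \sqcup G_t$ is the plain disjoint union.

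Correctness of the or-condition is immediate: since there are no edges between distinct $G_i$'s, every clique of $H$ lies inside a single $G_i$, so $H$ contains a $k$-clique iff at least one $G_i$ does.

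The substantive step is the parameter bound $mw(H)=O(n)$, which is polynomial in $\max_i|x_i|+\log t$ as required by the cross-composition framework. Since $H$ is disconnected, the root of $MD(H)$ is a \emph{parallel} vertex and hence contributes nothing to the modular-width. I would then argue that the children of the root coincide with the connected components of $H$: the standard observation is that a module $M$ containing vertices $u\in C$ and $v\in C'$ from two distinct components would force every neighbor $w$ of $u$ inside $C$ to satisfy $M\subseteq N(w)$, which is impossible because $v\notin N(w)$. Consequently, the subtree of $MD(H)$ rooted at each component is exactly the modular decomposition tree of that component, and $mw(H)=\max_i mw(G_i)\leq n$.

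The only mildly delicate point is the structural analysis of modules of a disjoint union when some $G_i$'s may contain isolated vertices or be themselves disconnected. I would handle this either by a light padding step ensuring each $G_i$ is connected with at least two vertices, or by a direct case check that isolated components cannot enlarge the prime fan-out beyond $\max_i mw(G_i)$. With this in place, the theorem follows by invoking the or-cross-composition result of Bodlaender, Jansen and Kratsch cited in the preliminaries.
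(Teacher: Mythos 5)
Your proposal matches the paper's proof essentially verbatim: the same polynomial equivalence relation (same vertex count and same target $k$), the same or-cross-composition by plain disjoint union of the $t$ instances, and the same two observations that cliques in $H$ cannot span components and that $mw(H)=\max_i mw(G_i)\le n$ since the parallel root of $MD(H)$ is not prime. You merely spell out the modular-decomposition argument that the paper compresses to ``Clearly,'' and you correctly flag (and correctly dismiss as harmless) the minor subtlety when some $G_i$ is itself disconnected.
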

\begin{proof}
Let $\omega (G)$ represent the clique number of $G$. We provide an or-cross-composition from \textsc{Clique} to \textsc{Clique($mw$)}. Assume any two instances $(G_1,k_1)$, $(G_2,k_2)$ of \textsc{Clique} are equivalent under $R$ iff $|V(G_1)|=|V(G_2)|$ and $|k_1|=|k_2|$. Obviously, $R$ is a polynomial equivalence relation. 
Consider the or-cross-composition. Given $t$ instances $(G_1,k),\ldots,(G_t,k)$ of \textsc{Clique} in an equivalence class of $R$, where $|V(G_i)|=n$ for all $i\in [t]$. Produce $(G',k)$ in $\Oh(tn^2)$ time, where $G' = \bigcup_{i=1}^{t}G_i$. Clearly, $mw(G') = \max_{1\leq i\leq t} mw(G_i) \leq n$ and $\omega (G')=\max_{1\leq i\leq t} \omega (G_i)$. Thus, at least one $G_i$ with $\omega (G_i)\geq k$ iff $\omega (G')\geq k$.  
\end{proof}

\begin{corollary}
\textsc{Independent Set($mw$)} and \textsc{Vertex Cover($mw$)} do not admit PCs unless NP $\subseteq$ coNP/poly.
\end{corollary}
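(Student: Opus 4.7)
The plan is to derive both results by cheap polynomial parametric transformations from \textsc{Clique}$(mw)$, invoking Theorem~\ref{clique-no-pc-mw}. The key observation is that modular-width is preserved under two standard operations: graph complementation and taking the whole graph (for the Vertex Cover--Independent Set duality). So once we argue that $mw(\overline{G}) = mw(G)$, both transformations follow in one line each.

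First I would handle \textsc{Independent Set}$(mw)$. Given an instance $(G,k)$ of \textsc{Clique}$(mw)$, I would output $(\overline{G},k)$. A set $X\subseteq V(G)$ is a clique in $G$ iff it is an independent set in $\overline{G}$, so the reduction is correct. For the parameter bound I would note that a subset $M\subseteq V$ is a module of $G$ iff it is a module of $\overline{G}$ (the definition ``every outside vertex is adjacent to all or none of $M$'' is self-dual under complementation), hence $G$ and $\overline{G}$ share the same modular decomposition tree. The only change is that parallel and series internal vertices swap, while prime vertices remain prime with the same children; therefore $mw(\overline{G})=mw(G)$. This gives a PPT from \textsc{Clique}$(mw)$ to \textsc{Independent Set}$(mw)$, and the non-existence of a PC follows from Theorem~\ref{clique-no-pc-mw} unless $\mathrm{NP}\subseteq \mathrm{coNP}/\mathrm{poly}$.

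Next I would handle \textsc{Vertex Cover}$(mw)$ by composing the previous reduction with the classical complementation between vertex covers and independent sets. Given $(G,k)$ of \textsc{Clique}$(mw)$, output $(\overline{G}, n-k)$ where $n=|V(G)|$. Then $G$ has a clique of size at least $k$ iff $\overline{G}$ has an independent set of size at least $k$ iff $\overline{G}$ has a vertex cover of size at most $n-k$. Again $mw(\overline{G})=mw(G)$, so the parameter is preserved, yielding a PPT from \textsc{Clique}$(mw)$ to \textsc{Vertex Cover}$(mw)$.

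The only step that needs any care is the equality $mw(\overline{G})=mw(G)$; everything else is a one-line transformation. I do not anticipate a real obstacle, since module-invariance under complementation is essentially immediate from the definition of a module and the fact that the modular decomposition tree is defined in terms of (strong) modules rather than adjacencies. So the whole argument collapses to a short remark plus an appeal to Theorem~\ref{clique-no-pc-mw}.
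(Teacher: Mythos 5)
Your proposal is correct and matches what the paper intends: the corollary follows from Theorem~\ref{clique-no-pc-mw} by complementation (which preserves modular-width, since modules and the modular decomposition tree are invariant, with parallel and series nodes swapping and prime nodes staying prime) together with the standard Independent Set--Vertex Cover duality. The paper leaves this implicit, and your argument fills it in exactly as expected.
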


\begin{theorem}
\textsc{Chromatic Number($mw$)} does not admit a PC unless NP $\subseteq$ coNP/poly.
\end{theorem}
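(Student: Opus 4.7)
The plan is to give an and-cross-composition from \cnp (which is NP-hard under Karp reductions) to \cnp($mw$), closely mirroring the or-cross-composition used in Theorem~\ref{clique-no-pc-mw} for \cp. Define the polynomial equivalence relation $R$ on instances of \cnp by $(G_1,k_1) \equiv (G_2,k_2)$ iff $|V(G_1)| = |V(G_2)|$ and $k_1 = k_2$; this is clearly polynomial and partitions any finite set of instances of length at most $n$ into $O(n^2)$ classes.

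Given $t$ equivalent instances $(G_1,k), \ldots, (G_t,k)$ with $|V(G_i)| = n$, the construction is to output $(G',k)$, where $G' = \bigsqcup_{i=1}^{t} G_i$ is the disjoint union. This is computable in time $O(tn^2)$. The key observation is that the chromatic number of a disjoint union satisfies
\[
\chi(G') \;=\; \max_{1 \le i \le t} \chi(G_i),
\]
so $\chi(G') \le k$ iff $\chi(G_i) \le k$ for every $i$, giving exactly the AND semantics required.

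It remains to bound $mw(G')$ polynomially in $\max_i |x_i|$. In the modular decomposition of a disjoint union, the root is a parallel node whose children are (or further decompose) the connected components; no new prime nodes are introduced above the trees for the individual $G_i$. Hence $mw(G') \le \max_i mw(G_i) \le n$, which is polynomial in $\max_i |x_i| + \log t$, fulfilling the parameter bound of an and-cross-composition.

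There is no serious obstacle here: the only thing to verify carefully is that forming a disjoint union preserves a polynomial modular-width bound (which follows directly from the fact that the added root is parallel, not prime) and that $\chi$ of a disjoint union equals the maximum over components (which is immediate since components can be colored independently using the same palette of $k$ colors). Applying the and-cross-composition framework of \cite{DBLP:journals/siamdm/BodlaenderJK14}, the resulting hardness rules out a PC for \cnp($mw$) unless NP $\subseteq$ coNP/poly. As corollaries, the same disjoint-union construction also yields PC lower bounds for other problems whose natural quantity is additive or maximized over components under disjoint union, but the statement as given concerns only \cnp($mw$).
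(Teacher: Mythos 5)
Your proposal is correct and matches the paper's argument exactly: the paper also gives an and-cross-composition from \cnp via disjoint union, referring back to the or-cross-composition for \cp as the template. The only minor caveat is your bound of $O(n^2)$ equivalence classes, which implicitly assumes $k\le |V(G)|$; this is harmless since instances with $k>|V(G)|$ are trivially yes and can be placed in a separate class, but it is worth a word.
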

\begin{proof}
We can provide an and-cross-composition from \cnp to \textsc{Chromatic Number($mw$)}. The reduction goes a similar way to that of Theorem \ref{clique-no-pc-mw}, in which the output instance is the disjoint union of all the input instances, and the only difference is the use of \emph{and}-cross-composition here. 
\end{proof}

\begin{theorem}
\textsc{Hamiltonian Path($mw$)} does not admit a PC unless NP $\subseteq$ coNP/poly.
\end{theorem}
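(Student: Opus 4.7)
The plan is to prove this via an \emph{and}-cross-composition from \textsc{Hamiltonian Path} (which is NP-hard) to \textsc{Hamiltonian Path}$(mw)$. Let $R$ be the polynomial equivalence relation that puts two instances of \textsc{Hamiltonian Path} in the same class iff their vertex sets have the same cardinality. Given $t$ equivalent instances $G_1, \ldots, G_t$, each on $n$ vertices, I would construct $G'$ as the disjoint union $G_1 \cup G_2 \cup \ldots \cup G_t$ together with $t-1$ new pairwise non-adjacent vertices $u_1, \ldots, u_{t-1}$, where each $u_j$ is adjacent to every vertex of $V(G_1) \cup \ldots \cup V(G_t)$.

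For correctness I would argue that $G'$ has a Hamiltonian path iff every $G_i$ does. In the forward direction, if each $G_i$ admits a Hamiltonian path $P_i$, then $P_1, u_1, P_2, u_2, \ldots, u_{t-1}, P_t$ is a Hamiltonian path of $G'$ since each $u_j$ is universal to every $V(G_i)$. In the reverse direction, let $P$ be a Hamiltonian path of $G'$. Because the $u_j$'s are pairwise non-adjacent, no two of them can appear consecutively in $P$, so deleting the $t-1$ of them splits $P$ into at most $t$ path segments, each contained in a single connected component of $G' - \{u_1, \ldots, u_{t-1}\} = G_1 \cup G_2 \cup \ldots \cup G_t$. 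Since all $t$ components must be fully covered by these at most $t$ segments, pigeonhole forces each $G_i$ to receive exactly one non-empty segment spanning $V(G_i)$, which is then a Hamiltonian path of $G_i$.

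For the parameter bound, observe that $G'$ is the graph join of the independent set $\{u_1, \ldots, u_{t-1}\}$ with the disjoint union $G_1 \cup G_2 \cup \ldots \cup G_t$. In $MD(G')$ this yields a \emph{series} root with two children: the module $\{u_1, \ldots, u_{t-1}\}$ refines into a parallel node with $t-1$ singleton leaves, and the module $V(G_1) \cup \ldots \cup V(G_t)$ refines into a parallel node with children $V(G_1), \ldots, V(G_t)$, each then decomposed internally according to $MD(G_i)$. Hence no prime node of $MD(G')$ occurs outside the individual $MD(G_i)$'s, giving
\begin{align*}
mw(G') \;\leq\; \max_{i \in [t]} mw(G_i) \;\leq\; n,
\end{align*}
which is polynomial in $\max_{i=1}^{t} |x_i| + \log t$. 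Combined with the NP-hardness of \textsc{Hamiltonian Path}, the and-cross-composition theorem stated above then yields the claim.

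The main subtlety I anticipate is the reverse direction of correctness: the pairwise non-adjacency of the $u_j$'s is essential, because otherwise two $u_j$'s could appear consecutively in $P$ and reduce the number of available non-empty segments below $t$, breaking the pigeonhole step that forces one Hamiltonian path in each $G_i$.
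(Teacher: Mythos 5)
Your construction is exactly the paper's: the same and-cross-composition that takes the disjoint union $G_1\cup\cdots\cup G_t$ and joins it completely to an independent set $X$ of $t-1$ new vertices, yielding a series root over two parallel children in $MD(G')$ so that $mw(G')=\max_i mw(G_i)\le n$. (Your polynomial equivalence relation only requires equal $|V(G_i)|$ whereas the paper also requires equal $mw(G_i)$; both are valid, and your parameter bound does not need the extra constraint.) Where you genuinely diverge is in the reverse direction of correctness. The paper proves it via a counting argument: it defines $\sigma(L')$ as the number of edges of the Hamiltonian path $L$ with exactly one endpoint in an induced subgraph $L'$, establishes case-by-case bounds $1\le\sigma(L_H)\le 2|V(H)|$ depending on which endpoints of $L$ lie inside $H$, and then compares $\sigma(L_{G[X]})$ with $\sum_i\sigma(L_{G_i})$ to force $\sigma(L_{G_i})\in\{1,2\}$ for each $i$. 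Your argument reaches the same conclusion more directly: because $X$ is independent, removing its $t-1$ vertices from the Hamiltonian path $P$ leaves at most $t$ maximal segments of $S$-vertices; each segment is confined to one $G_i$; and since all $t$ non-empty components of $G'-X$ must be covered, pigeonhole forces exactly $t$ segments, one per $G_i$, each spanning $V(G_i)$. This is a cleaner and more self-contained route to the same pigeonhole conclusion that the paper's $\sigma$-bookkeeping ultimately extracts, and it is correct; the only small point worth being explicit about is that you actually get \emph{exactly} $t$ non-empty segments (so neither endpoint of $P$ lies in $X$), which the pigeonhole step already implies.
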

\begin{proof}
We and-cross-composes \textsc{Hamiltonian Path} into \textsc{Hamiltonian Path($mw$)}. Assume any two instances $G_1$, $G_2$ are equivalent under $R$ iff $|V(G_1)|=|V(G_2)|$ and $mw(G_1)=mw(G_2)$. Clearly, $R$ is a polynomial equivalence relation. Assume vertex set $X=\{v_1,$ $\ldots$ $,$  $v_{t-1}\}$.
Consider the and-cross-composition. Given $t$ instances $G_1,\ldots,G_t$ of \textsc{Hamiltonian Path} in an equivalence class of $R$, where $G_i=(V_i,E_i)$, $|V_i|=n$ and $mw(G_i)=mw$ for all $i\in [t]$. Produce an instance $G=(V,E)$ in $\Oh(tn^2)$ time, where  $S=\bigcup_{i=1}^{t}V_i$, $V=S \cup X$, and $E$ is $\bigcup_{i=1}^{t}E_i$ together with all possible edges $uv$ such that $u \in S$ and $v\in X$.
Clearly, the root $v_{V}$ of the modular decomposition tree $MD(G)$ is series since $\overline{G}$ is not connected. Moreover, $v_{V}$ has two children $v_{X}$ and $v_{S}$, each of which is a parallel vertex. Furthermore, all children of $v_{X}$ are leaves of $MD(G)$ and all children of $v_{S}$ are the roots of modular decomposition trees $MD(G_i)$ for all $G_i$. Therefore, $mw(G) = mw \leq n$. Next, we prove that every $G_i$ contains a Hamiltonian path iff $G$ contains a Hamiltonian path.

For the forward direction, suppose every $G_i$ contains a Hamiltonian path $L_i$, and $x_i$, $y_i$ are the ending vertices of $L_i$. For all $1\leq i\leq t-1$, connect $y_i$ of $L_i$ with $v_i$, and connect $v_i$ with $x_{i+1}$ of $L_{i+1}$. Then, we obtain a new path $L = (V, E')$, where $E' = \bigcup_{i=1}^{t-1}(E(L_i) \cup\{y_iv_i, v_ix_{i+1}\}) \cup E(L_t)$. Clearly, $L$ is a Hamiltonian path of $G$.

For the reverse direction, suppose $G$ contains a Hamiltonian path $L=(V,E')$, where $x$, $y$ are the ending vertices of $L$. For a subgraph $S$ of $G$, $L_{S}$ denotes the subgraph induced in $L$ by $V(S)$. 
Assume $L'$ is an induced subgraph of $L$, and $\sigma (L')$ denotes the number of all the edges of $L$ with exactly one endpoint in $L'$. (Note that we consider only the edges in $L$ whenever we use the function $\sigma$.) Clearly, $\sigma (L')=\sigma (L'')$ if $L''$ is the subgraph induced in $L$ by $V\setminus V(L')$.
Let $H$ be a proper induced subgraph of $G$ and $K = G - H$. Since $V(K)\neq \emptyset$ and $L$ is connected, $ \sigma (L_H) \geq 1$. Since the degree of any vertex of $L$ is at most two, we have $ \sigma (L_H)  \leq 2|V(H)|$. More specifically, we consider the following three cases.
First, assume $x,y \in V(H)$. $K$ contains at least an internal vertex of $L$. So $L_H$ contains at least two paths, each of which has at least one ending vertex that is adjacent to a vertex of $K$ to ensure the connectivity of $L$, so $ \sigma (L_H) \geq 2$. The degrees of $x, y$ are one in $L$, so $ \sigma (L_H)  \leq 2|V(H)|-2$.	
Secondly, assume $x,y \in V(K)$. We have $2\leq  \sigma (L_H)  \leq 2|V(H)|$ since $ \sigma (L_H) = \sigma (L_K) \geq 2$. Moreover, $L_H$ is a Hamiltonian path of $H$ if $ \sigma (L_H) =2$.
Thirdly, assume exactly one of $x, y$ is in $V(H)$. Without loss of generality, suppose $x\in V(H)$. We have $1\leq  \sigma (L_H)  \leq 2|V(H)|-1$ since the degree of $x$ in $L$ is one. Moreover, $L_H$ is a Hamiltonian path of $H$ if $ \sigma (L_H) =1$.
Now, consider the subgraphs $G_1,\ldots,G_t$, and $G[X]$ of $G$. According to the construction of $G$, we always have $ \sigma(L_{G[X]}) =\sum_{i=1}^{t} \sigma(L_{G_i}) $.
Assume $x,y \in X$. $ \sigma(L_{G[X]}) $ $\leq$ $2|X|-2$ $\leq$ $2t-4$, but $2t \leq \sum_{i=1}^{t} \sigma(L_{G_i})$ since $ \sigma(L_{G_i}) \geq 2$ for all $i$, a contradiction.
Assume exactly one of $x, y$ is in $X$. Without loss of generality, assume $x \in X$ and $y\in V_1$. Then, we have $ \sigma(L_{G[X]}) $ $\leq$ $2|X|-1$ $\leq 2t-3$, but $ 2t-1 \leq \sum_{i=1}^{t} \sigma(L_{G_i})$ since $ \sigma(L_{G_1}) $ $\geq 1$ and $ \sigma(L_{G_i}) $ $\geq 2$ for all $i\neq 1$, a contradiction.
Suppose $x, y \in V\setminus X$. Then $\sigma(L_{G[X]}) \leq 2|X| $  $\leq $  $ 2t-2$. Assume $x\in V_j$ and $y\in V_k$, where $j,k\in [t]$. If $j=k$, then $x, y \in V_j$ and $\sigma(L_{G_i}) $ $\geq 2$ for all $i$. Thus, we have $ \sigma(L_{G[X]}) $ $< 2t$ $\leq \sum_{i=1}^{t} \sigma(L_{G_i}) $, a contradiction. If $j\neq k$, then $ \sigma(L_{G_j}) $ $\geq 1$, $ \sigma(L_{G_k}) $ $\geq 1$, and $ \sigma(L_{G_i}) $ $\geq 2$ for all $i\neq j,k$. Thus, $ \sigma(L_{G[X]}) $ $\leq 2t-2 \leq$ $\sum_{i=1}^{t} \sigma(L_{G_i}) $. Additionally, since $ \sigma(L_{G[X]}) =\sum_{i=1}^{t} \sigma(L_{G_i}) $, we have $\sum_{i=1}^{t} \sigma(L_{G_i}) $  $= 2t-2$. Consequently, $ \sigma(L_{G_j}) $ $= 1$, $ \sigma(L_{G_k}) $ $= 1$, and $ \sigma(L_{G_i}) $ $= 2$ for all $i\neq j,k$. Consider the subgraph $G_j$. Since $x \in G_j$ and $ \sigma(L_{G_j}) $ $= 1$, $L_{G_j}$ is a Hamiltonian path of $G_j$. Similarly, $L_{G_k}$ is a Hamiltonian path of $G_k$.  Consider $G_i$ for $i\neq j,k$. Since $x,y\not \in G_i$ and $ \sigma(L_{G_j}) $ $= 2$, $L_{G_i}$ is a Hamiltonian path of $G_i$. 
\end{proof}

A PPT reduction from \textsc{Hamiltonian Path($mw$)} to \textsc{Hamiltonian Cycle($mw$)} is a routine: add a vertex $v$ as well as the edges between $v$ and all vertices of the input graph to form the output graph.

\begin{corollary}
\textsc{Hamiltonian Cycle($mw$)} does not admit a PC unless NP $\subseteq$ coNP/poly.
\end{corollary}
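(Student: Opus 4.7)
The plan is to combine the previous theorem (PC lower bound for \textsc{Hamiltonian Path($mw$)}) with the polynomial parametric transformation (PPT) sketched just before the corollary, observing that the PPT is parameter preserving up to a small additive constant.

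First I would set up the PPT concretely: given an instance $G=(V,E)$ of \textsc{Hamiltonian Path}, construct $G' = (V \cup \{v\}, E \cup \{vu : u \in V\})$, where $v$ is a fresh universal vertex. The equivalence is standard: any Hamiltonian path $u_1 u_2 \cdots u_n$ of $G$ extends to the Hamiltonian cycle $v\, u_1 \cdots u_n\, v$ in $G'$, and conversely, deleting $v$ from any Hamiltonian cycle of $G'$ leaves a Hamiltonian path of $G$.

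Next I would verify that $mw(G') \le mw(G) + O(1)$, so that the reduction is a genuine PPT with respect to modular width. Since $\{v\}$ is trivially a module of $G'$ and it is adjacent to every vertex of $V$, the root $v_{V \cup \{v\}}$ of $MD(G')$ is a series node whose two children are the leaf for $v$ and the root of $MD(G)$; the rest of $MD(G')$ is exactly $MD(G)$. Hence every prime node of $MD(G')$ is a prime node of $MD(G)$ with the same set of children, giving $mw(G') = \max\{mw(G), 2\}$.

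Finally, I would close the argument in one line: if \textsc{Hamiltonian Cycle}$(mw)$ admitted a PC, then composing the above PPT with that compression would give a PC for \textsc{Hamiltonian Path}$(mw)$, contradicting the preceding theorem unless $\textsf{NP} \subseteq \textsf{coNP/poly}$. The only potentially delicate step is the modular width bound, but it is immediate once one inspects how $MD(G')$ is obtained from $MD(G)$ by attaching a new series root, so no real obstacle arises.
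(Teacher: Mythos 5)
Your proof is correct and follows the same route the paper indicates (the paper only remarks that the PPT is ``routine''); you flesh out both the equivalence and the modular-width bound, which is fine. One small imprecision: the claim that $v_{V\cup\{v\}}$ has exactly two children (the leaf for $v$ and the root of $MD(G)$) assumes $V$ is a strong module of $G'$, which fails when $\overline{G}$ is disconnected (i.e.\ the root of $MD(G)$ is already series, e.g.\ $G=K_3$, $G'=K_4$); in that case the new series root absorbs the old one and its children are $\{v\}$ together with the co-components of $G$. This does not affect your conclusion, since the prime nodes and their child sets are preserved either way, so $mw(G')\le\max\{mw(G),2\}$ (in fact $mw(G')=mw(G)$) and the PPT goes through.
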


Inspired by the use of refinement problems in \cite{Bodlaender_2009}, we define \textsc{Vertex Cover Refinement} as follows: the input is a graph $G$ and a vertex cover set $C$ of $G$, decide whether $G$ has a vertex cover set of size $|C|-1$. 

\begin{lemma}
\label{vcr-nph}
\textsc{Vertex Cover Refinement} is NP-hard under Karp reduction.
\end{lemma}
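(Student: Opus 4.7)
The plan is to give a polynomial-time Karp reduction from the classical NP-hard \vcp problem to \textsc{Vertex Cover Refinement}. Given an input $(G, k)$ of \vcp with $n = |V(G)|$ (we may assume $k < n$, otherwise the answer is trivially yes), I would construct $G'$ by adding $t := n - 1 - k$ fresh vertices, each joined to every vertex of $V(G)$, and take $C' := V(G)$ as the accompanying vertex cover. This makes $|C'| = n$, so the refinement question on $(G', C')$ asks whether $G'$ has a vertex cover of size at most $n - 1$.

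The key step is to pin down the vertex cover number $\tau(G')$ of $G'$. I expect to show
\[
\tau(G') \;=\; \min\{n,\; t + \tau(G)\}.
\]
For the lower bound, any vertex cover $S$ of $G'$ either contains all of $V(G)$, giving $|S| \geq n$, or omits some $v_j \in V(G)$; the latter forces every edge $u_i v_j$ to be covered by $u_i$, placing all $t$ new vertices in $S$ and requiring $S \cap V(G)$ to be a vertex cover of $G$, whence $|S| \geq t + \tau(G)$. The matching upper bounds come from the covers $V(G)$ and $\{u_1,\ldots,u_t\} \cup S_0$, where $S_0$ is an optimum VC of $G$. Substituting $t = n - 1 - k$ then yields $\tau(G') \leq n - 1$ iff $\tau(G) \leq k$, which is exactly the needed equivalence.

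No serious obstacle is anticipated; the construction is short and the case analysis routine. The only boundary to verify is $t = 0$, i.e.\ $k = n - 1$, in which case $G' = G$ and the refinement question degenerates to ``does $G$ have a vertex cover of size $n-1$?'', which is trivially yes for any simple graph, while the source VC question is also trivially yes when $k = n - 1$, so consistency is preserved. Polynomial-time execution of the reduction is immediate from the construction.
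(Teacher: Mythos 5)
Your construction is identical to the paper's: add $n-1-k$ fresh vertices joined to all of $V(G)$ and take $V(G)$ as the accompanying cover. The paper argues the equivalence directly in both directions rather than first deriving the closed form $\tau(G') = \min\{n, t + \tau(G)\}$, but the underlying reasoning (a missing original vertex forces all new vertices into the cover) is the same, so this is the same proof up to presentation.
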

\begin{proof}
We provide a Karp reduction from \vcp to it. 
Given an instance $(G',k)$ of \vcp, where $G = (V',E')$. Without loss of generality, we may assume $k <|V'|-1$. Construct a new graph $G=(V,E)$, where $V$ is the union of $V'$ and a new set $V''$ with $|V'|-1-k$ vertices, and $E$ consists of all edges of $E'$ and all $uv$ such that $u\in V'$ and $v\in V''$. Clearly, $V'$ is a vertex cover of $G$. Hence, $(G,V')$ is an instance of \textsc{Vertex Cover Refinement}. Assume $(G',k)$ is a yes instance of \vcp. There is a vertex cover $S'$ of  $G'$ such that $|S'|\leq k$. Furthermore, $S'\cup V''$ with at most $k + |V'|- 1 - k = |V'|-1$ vertices is a vertex cover of $G$. For the other direction, assume $(G,V')$ is a yes instance of \textsc{Vertex Cover Refinement}. There is a vertex cover $S$ of $G$ such that $|S|\leq |V'|-1$. Thus, at least one vertex $u$ of $V'$ is not in $S$. For any $v\in V''$,  $uv$ is not covered by $S$ if $v$ is not in $S$. Thus, all vertices of $V''$ are included in $S$, and $G'$ has a vertex cover of size $|S|-|V''|\leq k$.
\end{proof}

\begin{theorem}
\textsc{Connected Vertex Cover($mw$)} does not admit a PC unless NP $\subseteq$ coNP/poly.
\end{theorem}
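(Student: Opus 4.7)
\begin{hproof}
The plan is to use an and-cross-composition from \textsc{Vertex Cover Refinement}, which is NP-hard under Karp reductions by Lemma~\ref{vcr-nph}. Declare two instances $(G_1, C_1), (G_2, C_2)$ to be $R$-equivalent iff $|V(G_1)| = |V(G_2)|$ and $|C_1| = |C_2|$. This is clearly a polynomial equivalence relation, so I may assume the $t$ input instances $(G_1, C_1), \ldots, (G_t, C_t)$ all share $|V(G_i)| = n$ and $|C_i| = c$.

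The construction is simply ``disjoint union plus universal vertex''. Let $S = \bigcup_{i=1}^{t} V(G_i)$ (disjoint copies), take $V(G) = \{v\} \cup S$, and let $E(G)$ consist of every edge inside each $G_i$ together with every edge $vu$ for $u \in S$. For the modular width, each $V(G_i)$ is a module of $G$: $v$ is adjacent to all of $V(G_i)$, and any vertex of another $V(G_j)$ is adjacent to none of $V(G_i)$. Thus the root of $MD(G)$ is a series node with children $v$ and the module $S$, and the node for $S$ is parallel with children the roots of the $MD(G_i)$'s. No new prime node is introduced above the $MD(G_i)$'s, giving $mw(G) = \max_i mw(G_i) \leq n$, which is polynomial in $\max_i |x_i|$.

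For correctness I claim that $G$ admits a connected vertex cover of size at most $k := 1 + t(c-1)$ iff every $(G_i, C_i)$ is a yes-instance. For the forward direction, if each $G_i$ has a vertex cover $W_i$ of size $c-1$, then $D := \{v\} \cup \bigcup_i W_i$ covers every edge of $G$ and is connected through $v$, with $|D| = k$. For the converse, let $D$ be a connected vertex cover of $G$ with $|D| \leq k$. If $v \notin D$ then $D \supseteq S$, but $G[S]$ is the disjoint union of the $G_i$'s, which is disconnected for $t \geq 2$, a contradiction; so $v \in D$, and each $D \cap V(G_i)$ must be a vertex cover of $G_i$. Since $C_i$ witnesses that $G_i$ has minimum vertex cover size either at most $c-1$ (yes) or exactly $c$ (no), the bound $|D| \leq 1 + t(c-1)$ forces $|D \cap V(G_i)| \leq c-1$ for every $i$, so every $(G_i, C_i)$ is yes.

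The main obstacle I anticipate is reconciling two competing demands: making a composed vertex cover connected typically requires heavy interconnecting gadgetry, but such gadgetry tends to inflate the modular width through a prime quotient. The universal vertex $v$ sidesteps both issues, as it is a trivial module that only spawns a series node at the top of $MD(G)$, while simultaneously serving as a single connectivity hub so that any collection of local vertex covers of the $G_i$'s becomes connected automatically. Consequently $t$ appears only in the solution size, never in the parameter $mw(G)$, and the and-cross-composition framework then rules out a PC unless NP $\subseteq$ coNP/poly.
\end{hproof}
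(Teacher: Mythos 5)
Your construction (disjoint union of the $G_i$'s plus a universal vertex $v$) and your choice of source problem (\textsc{Vertex Cover Refinement}) match the paper exactly, but you chose an \emph{and}-cross-composition with budget $k = 1 + t(c-1)$ while the paper uses an \emph{or}-cross-composition with budget $tc$. Your converse step has a genuine gap. You argue that from $\sum_i |D \cap V(G_i)| \leq t(c-1)$ it follows that $|D \cap V(G_i)| \leq c-1$ for \emph{every} $i$. That inference is false: the minimum vertex cover of a yes-instance $G_i$ can be much smaller than $c-1$ (the refinement problem only promises that some VC of size $\leq c-1$ exists, not that the minimum is exactly $c-1$), so a few very cheap yes-instances can absorb the excess from a no-instance. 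Concretely, with $t=3$, $c=5$, if $G_1,G_2$ each have a minimum VC of size $2$ and $G_3$ is a no-instance with minimum VC exactly $5$, then $\{v\}$ together with minimum VCs gives a connected vertex cover of size $1 + 2 + 2 + 5 = 10 \leq 1 + 3\cdot 4 = 13 = k$, yet $(G_3,C_3)$ is a no-instance. So your claimed equivalence fails.

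The fix is precisely the switch to \emph{or}: set the target to $tc$ rather than $1+t(c-1)$. If at least one $(G_i,C_i)$ is yes, take $\{v\}$, the full $C_j$ for each no (or untested) index $j$, and a VC of size $\leq c-1$ for the yes index, giving a connected vertex cover of size $\leq 1 + (t-1)c + (c-1) = tc$. Conversely, given a connected vertex cover $D$ of size $\leq tc$ with $v \in D$ (forced as in your argument when $c \leq n-1$; the case $c=n$ is trivial), pigeonhole on $\sum_i |D \cap V(G_i)| \leq tc - 1$ yields some $i$ with $|D\cap V(G_i)| \leq c-1$, so that $(G_i,C_i)$ is yes. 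The one-sided inequality is exactly what an or-composition needs and exactly what an and-composition cannot extract from a disjoint-union construction.
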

\begin{proof}
We or-cross-composes \textsc{Vertex Cover Refinement} into \textsc{Connected Vertex Cover($mw$)}. Assume any two instances $(G_1,C_1)$, $(G_2,C_2)$ are equivalent under $R$ iff $|V(G_1)|=|V(G_2)|$, $|C_1|=|C_2|$, and $mw(G_1)=mw(G_2)$. Clearly, $R$ is a polynomial equivalence relation.

Consider the or-cross-composition.  Given $t$ instances $(G_1,C_1),\ldots,(G_t,C_t)$ of \textsc{Vertex Cover Refinement} in an equivalence class of $R$, where $|V(G_i)|=n$, $|C_i|=k$, and $mw(G_i)=mw$ for all $i\in [t]$. Produce an instance $(G',kt)$ in $\Oh(tn^2)$ time, where $G'=(V',E')$, $V' = \bigcup_{i=1}^{t}V(G_i) \cup \{v\}$, and $E'$ equals $\bigcup_{i=1}^{t}E(G_i)$ together with all possible $vw$ for $w\in V'\setminus\{v\}$. Clearly, $mw(G') = mw \leq n$. 
For one direction, assume at least one input instance is yes, say   $(G_t,C_t)$. There exists a vertex set $C$ with size at most $k-1$ that is a vertex cover of $G_t$. Clearly, $C'$ $=$ $C_1\cup\ldots\cup C_{t-1}\cup C\cup \{v\}$ with at most $kt$ vertices is a connected vertex cover of $G'$.
For the other direction, assume $(G',kt)$ is a yes instance. There exists a set $C'$ with at most $kt$ vertices that is a connected vertex cover of $G'$. If $k\geq n$, then any $k-1$ vertices of $G_i$ are a vertex cover of $G_i$ for every $i$. If $k\leq n-1$, then $v$ is in $C'$, otherwise, we need all vertices of $V(G')\setminus \{v\}$, which includes $nt>kt$ vertices, to cover all the edges that are incident with $v$. Thus, $C'\setminus \{v\}$ with at most $kt-1$ vertices comes from the subgraphs $G_1,\ldots,G_t$ of $G'$ and covers the edges inside these subgraphs. As a result, there is at least one $G_i$ containing a vertex cover of size at most $k-1$. 
\end{proof}

We define a new problem named \textsc{Feedback Vertex Set Refinement} as follows: the input is a graph $G$ and a feedback vertex set $F$ of $G$, decide whether $G$ has a feedback vertex set of size $|F|-1$?

\begin{lemma}
\label{fvsr-nph}
\textsc{Feedback Vertex Set Refinement} is NP-hard under Karp reduction.
\end{lemma}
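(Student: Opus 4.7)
The plan is to Karp-reduce from \textsc{Vertex Cover Refinement}, which is NP-hard by Lemma~\ref{vcr-nph}, using the classical vertex-cover-to-feedback-vertex-set transformation. Given a \textsc{Vertex Cover Refinement} instance $(G_0, C_0)$ with $G_0 = (V_0, E_0)$, I construct $G = (V, E)$ by keeping all of $G_0$ and, for each edge $uv \in E_0$, adding a fresh vertex $e_{uv}$ adjacent to $u$ and to $v$, so that $\{u, v, e_{uv}\}$ induces a triangle in $G$. I then set $F := C_0$ and output the \textsc{Feedback Vertex Set Refinement} instance $(G, F)$.

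The first routine check is that $F = C_0$ is a feedback vertex set of $G$. After deleting $C_0$, the vertices of $V_0 \setminus C_0$ form an independent set since $C_0$ covers every edge of $G_0$, and for each new vertex $e_{uv}$ at least one of its two neighbors $u, v$ lies in $C_0$, so $e_{uv}$ is a pendant or isolated vertex in $G - F$. Hence $G - F$ is a forest.

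The core equivalence to verify is that $G$ has a feedback vertex set of size $|F| - 1$ iff $G_0$ has a vertex cover of size $|C_0| - 1$. The easy direction is that any vertex cover $S \subseteq V_0$ of $G_0$ is a feedback vertex set of $G$ by the same forest argument. The step I expect to require the most care is the converse: starting from a feedback vertex set $S$ of $G$ with $|S| \leq |C_0| - 1$, I have to massage $S$ into a subset of $V_0$ without enlarging it. Whenever $e_{uv} \in S$, I replace $e_{uv}$ by the endpoint $u$, obtaining $S' = (S \setminus \{e_{uv}\}) \cup \{u\}$. Since $e_{uv}$ has only $u$ and $v$ as neighbors in $G$, removing $u$ leaves $e_{uv}$ with at most one neighbor in $G - S'$, so $e_{uv}$ becomes a pendant or isolated vertex there; combined with the fact that $G - S$ was already acyclic, this makes $G - S'$ acyclic. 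The swap never increases the size of the set, so after finitely many swaps I have $S \subseteq V_0$ with $|S| \leq |C_0| - 1$; the triangle $u, v, e_{uv}$ in $G$ then forces $S$ to contain at least one endpoint of every edge $uv \in E_0$, so $S$ is a vertex cover of $G_0$ of size at most $|C_0|-1$, as required.
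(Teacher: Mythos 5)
Your proof is correct, but it takes a genuinely different route from the paper's. The paper reduces directly from \textsc{Feedback Vertex Set}: given $(G,k)$ with $|V(G)| = n$, it pads $G$ with $n-1-k$ universal vertices $u_i$ and then triangulates each new edge $u_iv_j$ with a fresh vertex $w_{i,j}$, so that $V(G)$ becomes an explicit feedback vertex set of size $n$ in the padded graph, and the padding vertices $U$ are forced into any small solution. Your approach instead reuses Lemma~\ref{vcr-nph} (\textsc{Vertex Cover Refinement} is NP-hard) as a stepping stone and applies the classical vertex-cover-to-FVS triangle gadget, observing that it preserves the ``refinement'' structure: $C_0$ transfers verbatim as a feedback vertex set of the triangulated graph, and the swap argument (replacing each $e_{uv}\in S$ by the endpoint $u$) shows any small FVS can be normalized into a vertex cover of $G_0$ of the same size. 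What your approach buys is modularity and brevity --- it composes the already-established refinement lemma with a standard gadget rather than rebuilding the padding machinery. The paper's approach is self-contained and does not depend on Lemma~\ref{vcr-nph}, and its padding template is what the authors also reuse for \textsc{Odd Cycle Transversal Refinement}. Both are valid Karp reductions establishing NP-hardness.
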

\begin{proof}
The original problem of the Karp reduction is \fvsp. Given an instance $(G,k)$ of \fvsp, where the graph $G=(V,E)$ and the vertex set $V=\{v_1,\ldots,v_n\}$. Without loss of generality, assume $k <n-1$. 
Construct a graph $G'=(V',E')$ as follows. First, add $G$ and vertex sets $U =\{u_1,$ $\ldots,$ $u_{n-1-k}\}$, $W=\{w_{i,j} \mid$ $ 1\leq i\leq n-1-k$ and $1\leq j\leq n\}$ into $G'$. Secondly, for every $u_i$ of $U$, connect $u_i$ with all vertices of $G$. Thirdly, for every edge $u_iv_j$, connect $w_{i,j}$ with the two endpoints of $u_iv_j$. Clearly, $V$ is a feedback vertex set of $G'$. Thus, $(G',V)$ is an instance of \textsc{Feedback Vertex Set Refinement}.
Assume $(G,k)$ is a yes instance. There exists a feedback vertex set $F$ of $G$ such that $|F|\leq k$. Suppose forest $T$ is the subgraph in $G$ induced by $V\setminus F$. The subgraph in $G'$ induced by $V'\setminus (F\cup U)$ is a forest which is generated from $T$ by adhering $n-1-k$ leaf vertices $w_{1,j}, w_{2,j}, \ldots ,w_{n-1-k,j}$ to each vertex $v_j$ of $T$. Therefore, $F\cup U$ with size at most $n-1$  is a feedback vertex set of $G'$.
For the other direction, assume $(G',V)$ is a yes instance. There is a feedback vertex set $F'$ of $G'$ such that $|F'|\leq n-1$. For all $u_i$, they are included in $n$ triangles of $G$, which are $u_i-w_{1,1}-v_1$, $u_i-w_{1,2}-v_2$, $\ldots, u_i-w_{1,n}-v_n$. If $u_i$ is not in $F'$, then $F'$ contains at least one vertex of each triangle other than $u_i$. In addition, apart from $u_i$, the vertices of all the $n$ triangles are different. Thus, $F'$ contains at least $n$ vertices, a contradiction. Thus, $U \subseteq F'$ and $G$ has a feedback vertex set of size $|F'|-|U|\leq k$.
\end{proof}

\begin{theorem}
\label{fvs-no-polykernel}
\textsc{Feedback Vertex Set($mw$)} does not admit a PC unless NP $\subseteq$ coNP/poly.
\end{theorem}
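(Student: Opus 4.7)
The plan is to give an or-cross-composition from \textsc{Feedback Vertex Set Refinement}, which is NP-hard by Lemma \ref{fvsr-nph}, into \textsc{Feedback Vertex Set}$(mw)$. The polynomial equivalence relation $R$ I would use declares two instances $(G_1, F_1)$ and $(G_2, F_2)$ equivalent when $|V(G_1)| = |V(G_2)|$, $|F_1| = |F_2|$, and $mw(G_1) = mw(G_2)$, mirroring the relation used in the \textsc{Connected Vertex Cover}$(mw)$ lower bound; it is clearly a polynomial equivalence relation.

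Given $t$ equivalent instances $(G_1, F_1),\ldots,(G_t, F_t)$ with common parameters $|V(G_i)| = n$, $|F_i| = k$, and $mw(G_i) = mw$, the composition I propose is simply the plain disjoint union $G' := G_1 \sqcup \cdots \sqcup G_t$ paired with the budget $k' = kt - 1$. In contrast to \textsc{Connected Vertex Cover}$(mw)$, no universal glue vertex is needed here, since feedback vertex sets impose no connectivity requirement on the host graph. To bound the modular-width, I would observe that a disjoint union is captured by a parallel node at the root of $MD(G')$ whose children are the roots of the $MD(G_i)$; parallel nodes do not contribute to $mw$, so every prime vertex of $MD(G')$ already appears inside some $MD(G_i)$ and has at most $mw \leq n$ children. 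Hence $mw(G') \leq n$, which is polynomial in $\max_i |x_i|$ as required by the cross-composition framework.

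For correctness, the key is that the minimum feedback vertex set size is additive across connected components, so if $\mathrm{fvs}(H)$ denotes that minimum, then $\mathrm{fvs}(G') = \sum_{i=1}^{t} \mathrm{fvs}(G_i)$. If some $(G_i, F_i)$ is a yes-instance then $\mathrm{fvs}(G_i) \leq k - 1$, while $F_j$ certifies $\mathrm{fvs}(G_j) \leq k$ for each $j \neq i$, so $\mathrm{fvs}(G') \leq (k-1) + k(t-1) = kt - 1$. Conversely, if every $(G_i, F_i)$ is no then $\mathrm{fvs}(G_i) \geq k$ for all $i$, forcing $\mathrm{fvs}(G') \geq kt > kt - 1$. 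Therefore $G'$ has a feedback vertex set of size at most $kt-1$ iff at least one of the input instances is yes, which yields the or-cross-composition and, via \cite{DBLP:journals/siamdm/BodlaenderJK14}, rules out a PC for \textsc{Feedback Vertex Set}$(mw)$ unless NP $\subseteq$ coNP/poly.

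The only subtle point to verify is the modular-width bound when some $G_i$ are themselves disconnected: one must argue that merging the multiple parallel roots of the various $MD(G_i)$ into a single parallel root of $MD(G')$ introduces no new prime node. This follows from the standard recursive characterization of modular decomposition of disjoint unions, but should be stated explicitly in the final write-up.
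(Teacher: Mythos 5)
Your proposal is correct and matches the paper's proof essentially step for step: same polynomial equivalence relation (fixing $|V(G_i)|$, $|F_i|$, and $mw(G_i)$), same construction (plain disjoint union with budget $kt-1$), and the same key facts that modular-width is preserved under disjoint union and that the feedback vertex set number is additive over connected components. The closing remark about disconnected $G_i$'s is a reasonable precaution but does not change the argument, which the paper states more tersely.
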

\begin{proof}
We or-cross-compose \textsc{Feedback Vertex Set Refinement} into \textsc{Feedback Vertex Set($mw$)}. Assume any two instances $(G_1,F_1), (G_2,F_2)$ are equivalent under $R$ iff $|V(G_1)|=|V(G_2)|$, $|F_1|=|F_2|$, and $mw(G_1)=mw(G_2)$. Clearly, $R$ is a polynomial equivalence relation. 
Consider the or-cross-composition. Suppose $F_{fvs} (G)$ denotes the feedback vertex number of $G$. Given $t$ instances $(G_1,F_1),\ldots,(G_t,F_t)$ of \textsc{Feedback Vertex Set Refinement} in an equivalence class of $R$, where $|V(G_i)|=n$, $|F_i|=k$, and $mw(G_i)=mw$ for all $i\in [t]$. Produce an instance $(G,kt-1)$ in $\Oh(tn^2)$ time, where $\bigcup_{i=1}^{t}G_i = G = (V,E)$. Clearly, $mw(G)=mw \leq n$ and $F_{fvs} (G)=\sum_{i=1}^{t} F_{fvs} (G_i)$. Thus, $F_{fvs} (G_i)\leq k-1$ for at least one $G_i$ iff $F_{fvs} (G) \leq kt-1$.
\end{proof}

We define a new problem named \textsc{Odd Cycle Transversal Refinement} as follows:
the input is a graph $G$ and an odd cycle transversal $O$ of $G$, decide whether $G$ has an odd cycle transversal of size $|O|-1$?

\begin{theorem}
\textsc{Odd Cycle Transversal($mw$)} has no PCs unless NP $\subseteq$ coNP/poly.
\end{theorem}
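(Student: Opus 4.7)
The plan mirrors Theorem~\ref{fvs-no-polykernel} for \fvsp: first establish NP-hardness of \textsc{Odd Cycle Transversal Refinement} via a Karp reduction from \octp, then or-cross-compose the refinement problem into \textsc{Odd Cycle Transversal}$(mw)$ using disjoint unions. Two standard facts drive the second step: the odd cycle transversal number is additive over disjoint components (a transversal must independently handle each component), and the modular width of a disjoint union equals the maximum of the modular widths of its parts, since taking a disjoint union only introduces a parallel node above the existing modular decomposition trees and creates no new prime node.

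For the NP-hardness step, I would directly port the gadget of Lemma~\ref{fvsr-nph}. Starting from $(G,k)$ with $G=(V,E)$, $|V|=n$, $k<n-1$, build $G'$ by adding $n-1-k$ fresh vertices $U=\{u_1,\ldots,u_{n-1-k}\}$ each adjacent to every vertex of $V$, and for every pair $(u_i,v_j)$ a new degree-two vertex $w_{i,j}$ adjacent to both $u_i$ and $v_j$. Then $V$ is an odd cycle transversal of $G'$, since $G'-V$ is the bipartite graph with parts $U$ and $\{w_{i,j}\}$, so $(G',V)$ is a valid refinement instance. The correctness argument mirrors the FVS case: any transversal $O'$ of $G'$ with $|O'|\leq n-1$ must contain $U$, because if some $u_i \notin O'$ then the $n$ triangles $u_i w_{i,j} v_j$ for $j=1,\ldots,n$ are pairwise vertex-disjoint outside $u_i$, so hitting all of them requires $n$ further vertices, contradicting the budget. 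Then $O'\setminus U$ is an odd cycle transversal of $G$ of size at most $k$. Conversely, given any OCT $F$ of $G$ with $|F|\leq k$, the set $F\cup U$ is an OCT of $G'$ of size at most $n-1$.

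For the or-cross-composition, declare $(G_1,O_1)\sim_R(G_2,O_2)$ iff $|V(G_1)|=|V(G_2)|$, $|O_1|=|O_2|$, and $mw(G_1)=mw(G_2)$, which is clearly a polynomial equivalence relation. Given $t$ equivalent instances sharing parameters $n,k,mw$, output $(G,\,kt-1)$ with $G=\bigsqcup_{i=1}^{t} G_i$, computable in $\mathcal{O}(tn^2)$ time. By the two facts above, $mw(G)=mw\leq n$ and $\mathrm{OCT}(G)=\sum_{i=1}^{t}\mathrm{OCT}(G_i)$. Hence $\mathrm{OCT}(G)\leq kt-1$ iff $\mathrm{OCT}(G_i)\leq k-1$ for some $i$, i.e., iff some input is a yes-instance of the refinement problem. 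Applying the cross-composition theorem then rules out a PC unless NP $\subseteq$ coNP/poly.

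The main (and really only) obstacle is the gadget analysis in the NP-hardness step: one has to be careful that, for each fixed $u_i$, the $n$ triangles through $u_i$ are pairwise vertex-disjoint away from $u_i$, so that collectively they demand at least $n$ distinct transversal vertices. Once this is noted, the remainder parallels the FVS proof verbatim, because triangles are odd cycles and so exactly the same counting argument used for feedback vertex sets transfers to odd cycle transversals.
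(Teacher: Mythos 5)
Your proposal is correct and follows exactly the route the paper intends: the paper's own proof is a two-line deferral to Lemma~\ref{fvsr-nph} and Theorem~\ref{fvs-no-polykernel}, and you have simply carried out that porting explicitly, verifying that the same triangle gadget and disjoint-union composition transfer to odd cycle transversals. The only cosmetic slip is that in the reverse direction you should take $O'\cap V$ (rather than $O'\setminus U$, which may contain $W$-vertices) as the odd cycle transversal of $G$; the size bound and the argument are otherwise identical.
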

\begin{proof}
We first show that \textsc{Odd Cycle Transversal Refinement} is NP-hard under Karp reduction. The process is in the same way as that of Theorem \ref{fvsr-nph}. Then, we demonstrate that \textsc{Odd Cycle Transversal($mw$)} does not admit a PC unless NP $\subseteq$ coNP/poly. The process is in the same way as that of Theorem \ref{fvs-no-polykernel}.
\end{proof}

\textsc{Induced Matching} is NP-complete \cite{DBLP:journals/siamdm/KoS03}. The size of an induced matching is the number of edges of the induced matching. We define a new problem named \textsc{Induced Matching Refinement} as follows: the input is a graph $G$ and an induced matching of $G$ whose size is $k$, decide whether $G$ has an induced matching of size $k+1$?

\begin{lemma}
\label{IM-nph}
\textsc{Induced Matching Refinement} is NP-hard under Karp reduction.
\end{lemma}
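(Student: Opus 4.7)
The plan is to give a Karp reduction from \textsc{Induced Matching} itself, which is NP-complete \cite{DBLP:journals/siamdm/KoS03}. The idea is to attach to $(G,k)$ a universal gadget consisting of $k-1$ new pendant-like edges that, taken together, already form an induced matching of size $k-1$, yet are arranged so that any induced matching containing even one gadget edge is entirely confined to the gadget. Extending to size $k$ therefore forces the algorithm to use only edges of $G$, which brings us back to \textsc{Induced Matching} on $(G,k)$.

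The construction is as follows. Assume $k \geq 1$ (smaller $k$ is trivial). From $G = (V,E)$, build $G'$ by adding fresh vertices $a_1, b_1, \ldots, a_{k-1}, b_{k-1}$; for each $i \in [k-1]$ add the edge $a_ib_i$, and for every $v \in V$ add the edge $a_iv$. No other edges are introduced, so each $b_i$ has degree exactly one in $G'$, while each $a_i$ is universal over $V$ and has no other new neighbours apart from $b_i$. Set $M := \{a_ib_i : i \in [k-1]\}$, which is clearly an induced matching of $G'$ of size $k-1$. Output $(G', M)$ as the \textsc{Induced Matching Refinement} instance; the question becomes whether $G'$ has an induced matching of size $|M|+1 = k$.

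The correctness claim I would establish is $\mu(G') = \max(k-1, \mu(G))$, where $\mu$ denotes the induced matching number. Suppose some induced matching $M'$ of $G'$ contains an edge $a_ib_i$. Then any other edge $e \in M'$ must have neither endpoint adjacent to $a_i$ or $b_i$; since $b_i$'s only neighbour is $a_i$ and the neighbourhood of $a_i$ is $\{b_i\} \cup V$, the endpoints of $e$ must lie in $\{a_j, b_j : j \neq i\}$. But the only edges among those vertices are the remaining $a_jb_j$'s, so $M' \subseteq M$ and hence $|M'| \leq k-1$. Consequently every induced matching of $G'$ of size exceeding $k-1$ avoids the gadget altogether and is just an induced matching of $G$; combined with the trivial inclusion $\mu(G') \geq \max(k-1, \mu(G))$, this yields $\mu(G') = \max(k-1, \mu(G))$, so $G'$ has an induced matching of size $k$ iff $G$ has one. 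The construction is clearly polynomial, which completes the reduction. No substantive obstacle arises; the only point requiring a sanity check is the boundary case $k = 1$, where $M = \emptyset$ and the task collapses to asking whether $G$ has any edge, which matches \textsc{Induced Matching} at $k = 1$.
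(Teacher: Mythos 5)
Your construction is simpler than the paper's and does work, but the argument as written has a gap in the case analysis. You prove that if an induced matching $M'$ contains a gadget-internal edge $a_ib_i$ then $M'\subseteq M$ and hence $|M'|\leq k-1$, and from this you infer that every induced matching of $G'$ of size exceeding $k-1$ ``avoids the gadget altogether.'' That inference does not follow: ruling out the $a_ib_i$ edges does not rule out a matching that touches the gadget via a cross edge $a_iv$ with $v\in V$. You need that case too. Fortunately it is easy: if $a_iv\in M'$ for some $v\in V$, then since $a_i$ is adjacent to every vertex of $V$, no other edge of $M'$ can meet $V$; since $b_i\in N(a_i)$, also $b_i\notin V(M')$; hence every other edge of $M'$ lies inside $\{a_j,b_j : j\neq i\}$ and must be some $a_jb_j$, giving $|M'|\leq 1+(k-2)=k-1$. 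With that added case the identity $\mu(G')=\max(k-1,\mu(G))$ is fully established and the reduction is correct.

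For comparison, the paper uses a heavier gadget: it adds vertex sets $U,W$ of size $n$ and $X$ of size $n-k+1$, makes each $u_i$ universal over $V\cup\{w_i\}$, attaches the $x_i$'s to the $w_i$'s in a specific pattern so that $\{u_iw_i : i\in[n]\}$ is a reference induced matching of size $n$, and the reverse direction requires first arguing $I'\cap U=\emptyset$ via a two-case analysis and then analyzing what remains. Your $2(k-1)$-vertex gadget and the clean equality $\mu(G')=\max(k-1,\mu(G))$ give a noticeably shorter proof once the missing case is filled in.
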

\begin{proof}
The original problem of the Karp reduction is \imp. Given an instance $(G,k)$ of \textsc{Induced Matching}, where $G=(V,E)$ and $V=\{v_1,\ldots,v_n\}$. Without loss of generality, we may assume $2\leq k \leq 0.5n$. Construct a graph $G'=(V',E')$ as follows. First, add $G$ and all vertices of the sets $U =\{u_1,\ldots,u_n\}$, $W =\{w_1,\ldots,w_n\}$, $X =\{x_1,\ldots,x_{n-k+1}\}$ into $G'$. Then, for every $u_i$ of $U$, connect $u_i$ with all vertices of $V\cup \{w_i\}$. Finally, connect $x_i$ with $w_i$ for every $i\in [n-k]$ and connect $x_{n-k+1}$ with all vertices of $\{w_{n-k+1},\ldots,w_n\}$. Clearly, edge set  $Y=\{u_1w_1,\ldots,u_nw_n\}$ is an induced matching of $G'$. Thus, $(G',Y)$ is an instance of \textsc{Induced Matching Refinement}of size $n$.

Suppose $(G,k)$ is a yes instance. There exists $I\subseteq V$ with $2k$ vertices such that the subgraph induced by $I$ is a matching of size $k$. Consider graph $G'$. Suppose $W' =\{w_1,\ldots,w_{n-k+1}\}$. 
The subgraph induced in $G'$ by $I\cup W' \cup X$ is an matching with size $k+(n-k+1)=n+1$. Thus, $(G',Y)$ is a yes instance of \textsc{Induced Matching Refinement}.
For the other direction, assume $(G',Y)$ is a yes instance. There exists $I'\subseteq V'$ with $2(n+1)$ vertices such that the subgraph induced by $I'$ is a matching of size $n+1$. We first use proof by contradiction to show that $I'\cap U$ is an empty set as follows. Suppose $I'\cap U$ contains a vertex $u_i$. Then $N(u_i) \cap I'$ are either $\{w_i\}$ or $\{v\}$, where $v\in V$.
Assume $N(u_i) \cap I'=\{v\}$. Then $I' \subseteq L=V'\setminus N(\{u_i,v\})$ and the maximum induced matching (MIM) of $G'[L]$ equals that of $G'$. Obviously, if $1\leq i \leq n-k$, then the size of the MIM of $G'[L]$ is $n-k+1 \leq n-1$. If $n-k+1\leq i \leq n$, then the size of the MIM of $G'[L]$ is $n-k+2 \leq n$. This is a contradiction.  
Assume $N(u_i) \cap I'= \{w_i\}$. Then $I'\subseteq L=V'\setminus N(\{u_i,w_i\})$ and the MIM of $G'[L]$ equals that of $G'$. Clearly, the size of the MIM of $G'[L]$ is $n$, a contradiction.
Now, we know  $I' \subseteq V\cup W\cup X$. Clearly, the subgraph induced in $G'$ by $V\cup W\cup X$ consists of $n-k$ independent edges, the subgraph $G$, and a star with $k$ degrees. Since $G'$ has an induced matching of size $n+1$ and the size of the MIM of a star is at most one, there is an induced matching of $G$ whose size is at least $(n+1)-(n-k)-1=k$. Hence, $(G,k)$ is a yes instance of \imp.
\end{proof}

\begin{theorem}
\textsc{Induced Matching($mw$)} does not admit a PC unless NP $\subseteq$ coNP/poly.
\end{theorem}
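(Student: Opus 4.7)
The plan is to or-cross-compose the NP-hard problem \textsc{Induced Matching Refinement} (Lemma~\ref{IM-nph}) into \textsc{Induced Matching$(mw)$}, reusing the disjoint-union template already used in Theorem~\ref{fvs-no-polykernel}. The key structural observation is that, since no edges cross components of a disjoint union, the size of a maximum induced matching is additive over connected components, so disjoint union realizes an OR at the level of ``does some component have one more matching edge than the witness provided?''

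I would declare two instances $(G_1, I_1), (G_2, I_2)$ equivalent under $R$ iff $|V(G_1)| = |V(G_2)|$, $|I_1| = |I_2|$, and $mw(G_1) = mw(G_2)$; this is clearly a polynomial equivalence relation. Given $t$ such equivalent instances $(G_1, I_1), \ldots, (G_t, I_t)$ with $|V(G_i)| = n$, $|I_i| = k$, and $mw(G_i) = mw$, I would output $(G, kt+1)$, where $G = \bigcup_{i=1}^{t} G_i$ is the vertex-disjoint union, constructible in $\Oh(tn^2)$ time. The modular-decomposition tree of a disjoint union has a parallel root whose children are the roots of the $MD(G_i)$'s, so $mw(G) = \max_i mw(G_i) = mw \leq n$, which is polynomially bounded in $\max_i |x_i|$ as required by the definition of cross-composition.

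For correctness, let $\mu(H)$ denote the size of a maximum induced matching of $H$; then $\mu(G) = \sum_{i=1}^{t} \mu(G_i)$ because any induced matching of $G$ partitions into induced matchings of the $G_i$'s, and any union of such matchings is again induced. The inputs guarantee $\mu(G_i) \geq k$ for every $i$. Hence if some $(G_i, I_i)$ is a yes instance, meaning $\mu(G_i) \geq k+1$, then $\mu(G) \geq (k+1) + (t-1)k = kt+1$, so $(G, kt+1)$ is yes. Conversely, if $\mu(G) \geq kt+1$, then some $G_i$ must contribute strictly more than $k$ edges, so $\mu(G_i) \geq k+1$ and the corresponding refinement instance is yes. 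Applying the and/or-cross-composition theorem then gives the desired conditional lower bound.

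I do not anticipate a substantive obstacle here: the only non-trivial ingredient, NP-hardness of the refinement variant, is already established in Lemma~\ref{IM-nph}, and the additivity of maximum induced matchings across connected components makes the disjoint-union construction and its analysis entirely routine, exactly in parallel with Theorem~\ref{fvs-no-polykernel}.
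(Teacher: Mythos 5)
Your proposal matches the paper's proof exactly: the paper invokes the same or-cross-composition from \textsc{Induced Matching Refinement} via disjoint union, stated tersely by reference to the FVS case (Theorem~\ref{fvs-no-polykernel}). You have simply filled in the details — the target value $kt+1$, the additivity of maximum induced matching over components, and the modular-width bound via a parallel root — all of which are correct and are what the paper leaves implicit.
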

\begin{proof}
We can provide an or-cross-composition from \textsc{Induced Matching Refinement} to \textsc{Induced Matching($mw$)}. The reduction goes the same way as that of Theorem \ref{fvs-no-polykernel}, in which the output instance is the disjoint union of all the input instances.
\end{proof}

We define \textsc{Dominating Set Refinement} problem, which is NP-complete \cite{Bodlaender_2009}, as follows: the input is a graph $G$ and a dominating set $D$ of $G$, decide whether $G$ has a dominating set of size $|D|-1$?

\begin{theorem}
\label{ds-no-pk}
\textsc{Dominating Set($mw$)} does not admit a PC unless NP $\subseteq$ coNP/poly.
\end{theorem}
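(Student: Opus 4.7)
The plan is to mirror precisely the or-cross-composition used in Theorem~\ref{fvs-no-polykernel} for \textsc{Feedback Vertex Set($mw$)}, replacing \textsc{Feedback Vertex Set Refinement} by \textsc{Dominating Set Refinement}, whose NP-hardness we may invoke directly from \cite{Bodlaender_2009}. So the first step is to define the polynomial equivalence relation $R$ on instances of \textsc{Dominating Set Refinement}: declare $(G_1,D_1) \sim_R (G_2,D_2)$ iff $|V(G_1)|=|V(G_2)|$, $|D_1|=|D_2|$, and $mw(G_1)=mw(G_2)$, collapsing malformed strings into a single bad class. Both conditions of a polynomial equivalence relation are then immediate.

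Next, given $t$ instances $(G_1,D_1),\ldots,(G_t,D_t)$ from the same class, with $|V(G_i)|=n$, $|D_i|=k$, and $mw(G_i)=mw$, I would take the disjoint union $G=\bigcup_{i=1}^{t} G_i$ and output the instance $(G,kt-1)$. The size of $G$ is polynomial in $\sum_i |V(G_i)|$, and the new parameter is $mw(G)$. Since $G$ is disconnected, the root of $MD(G)$ is a parallel vertex whose children are precisely the roots of $MD(G_i)$; no new prime node is introduced, so $mw(G)=\max_i mw(G_i)=mw\leq n$, which is polynomial in $\max_i |V(G_i)|+\log t$ as required.

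The correctness step is the easy one: since there are no edges between the $G_i$, a set $S\subseteq V(G)$ dominates $G$ iff $S\cap V(G_i)$ dominates $G_i$ for every $i$, so the minimum dominating set number satisfies $\gamma(G)=\sum_i \gamma(G_i)$. Because each $G_i$ already admits the dominating set $D_i$ of size $k$, we have $\gamma(G_i)\leq k$ for every $i$, hence $\gamma(G)\leq kt$ always. Therefore $\gamma(G)\leq kt-1$ iff $\gamma(G_j)\leq k-1$ for at least one $j$, which is exactly the or-condition for \textsc{Dominating Set Refinement}. Combined with the NP-hardness of \textsc{Dominating Set Refinement} and the cross-composition theorem of \cite{DBLP:journals/siamdm/BodlaenderJK14}, this rules out a PC unless NP $\subseteq$ coNP/poly.

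I do not foresee a real obstacle here: all the nontrivial content has been front-loaded into the use of the Refinement variant, which lets us turn the additive ``sum of $\gamma(G_i)$'' behaviour into an or-composition by shaving a single unit from the threshold. The only item that needs a brief verification is the modular-width bound under disjoint union, which follows immediately from the structure of the modular decomposition tree as noted above.
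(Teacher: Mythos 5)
Your proof is correct and is essentially the paper's own argument: the paper simply states that the or-cross-composition ``goes the same way as that of Theorem~\ref{fvs-no-polykernel}'' (disjoint union, threshold $kt-1$, same equivalence relation), and you have spelled out those details, including the crucial observation that $\gamma(G_i)\leq k$ is guaranteed by the Refinement variant so that $\gamma(G)=\sum_i\gamma(G_i)\leq kt-1$ holds iff some $\gamma(G_j)\leq k-1$. The only point worth being careful about in a write-up is to state, as you do, that the disjoint union introduces a parallel (not prime) root in the modular decomposition tree, so that $mw(G)=\max_i mw(G_i)$.
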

\begin{proof}
We can provide an or-cross-composition from \textsc{Dominating Set Refinement} to \textsc{Dominating Set($mw$)}. The reduction goes the same way as that of Theorem \ref{fvs-no-polykernel}, in which the output instance is the disjoint union of all the input instances.
\end{proof}

\section{Conclusions}
We conclude the paper by proposing an open question. Fomin et. al. state in the open problems chapter of their kernelization textbook \cite{fomin2019kernelization}: ``Finding an example demonstrating that polynomial
compression is a strictly more general concept than polynomial kernelization, is an extremely interesting open problem.'' Inspired by this, we propose the following question: do there exist quadratic kernels for the problems in Corollary \ref{quadratic compression coro} parameterized by neighborhood diversity? Even quadratic Turing kernels will be interesting.

\section*{Acknowledgement}

We thank the anonymous reviewers for their valuable comments.

\bibliographystyle{abbrv}
\bibliography{references1}
\end{document}